\theoremstyle{plain}
\newtheorem{theorem}{Theorem}
\newtheorem{lemma}{Lemma}[section]
\newtheorem{proposition}{Proposition}[section]
\theoremstyle{definition}
\newtheorem{example}{Example}[section]
\theoremstyle{remark}
\numberwithin{theorem}{section}
\newcommand{\ie}{\textit{i.e.}~}
\newcommand{\eg}{e.g.~}
\newcommand{\Fraisse}{Fra\"{i}ss\'{e}}
\newcommand{\struct}[1]{\mathcal{#1}}
\newcommand{\As}{\struct{A}}
\newcommand{\Bs}{\struct{B}}
\newcommand{\Cs}{\struct{C}}
\newcommand{\Rs}{\struct{R}}
\newcommand{\Ps}{\struct{P}}
\newcommand{\preford}{\sqsubseteq}
\newcommand{\Geg}{\hat{\mathbb{G}}^{\gd}}
\newcommand{\Gkdgx}[2]{\Geg_{#1,#2}}
\newcommand{\Gkdg}{\Gkdgx{k}{d}}
\newcommand{\Gdg}{\Gkdgx{\infty}{d}}
\newcommand{\clique}{\mathsf{clique}}
\newcommand{\cliquen}[1]{\clique_{#1}}
\newcommand{\Gk}{\mathbb{G}_{k}}
\newcommand{\rarr}{\rightarrow}
\newcommand{\id}{\mathsf{id}}
\newcommand{\last}{\lambda}
\newcommand{\sg}{\sigma}
\newcommand{\va}{\vec{a}}
\newcommand{\vb}{\vec{b}}
\newcommand{\vx}{\vec{x}}
\newcommand{\vy}{\vec{y}}
\newcommand{\RA}{R^{\As}}
\newcommand{\RGA}{R^{\GG \As}}
\newcommand{\RB}{R^{\Bs}}
\newcommand{\ve}{\varepsilon}
\newcommand{\vempty}{\varnothing}
\newcommand{\vphi}{\varphi}
\newcommand{\gsim}{\preceq^{\gd}}
\newcommand{\gsimk}{\preceq^{\gd}_{k}}
\newcommand{\gsimd}{\preceq^{\gd}_{\infty,d}}
\newcommand{\gbsim}{\sim^{\gd}}
\newcommand{\gbsimk}{\sim^{\gd}_{k}}
\newcommand{\gbsimd}{\sim^{\gd}_{\infty,d}}
\newcommand{\pref}{\sqsubseteq}
\newcommand{\downset}{{\downarrow}}
\newcommand{\IMP}{\; \Rightarrow \;}
\newcommand{\AND}{\; \wedge \;}
\newcommand{\CS}{\CStruct{\sg}}
\newcommand{\CStruct}[1]{\mathsf{Struct}(#1)}
\newcommand{\gd}{\mathfrak{g}}
\newcommand{\Gg}{\mathbb{G}^{\gd}}
\newcommand{\Gkg}{\mathbb{G}_{k}^{\gd}}
\newcommand{\GG}{\mathbb{G}}
\newcommand{\lb}{\llparenthesis}
\newcommand{\rb}{\rrparenthesis}
\newcommand{\epsA}{\ve_{\As}}
\newcommand{\dg}{\dagger}
\newcommand{\lt}{\langle}
\newcommand{\rt}{\rangle}
\newcommand{\MaxGuarded}{\mathsf{MaxGuarded}}
\newcommand{\EMG}{\CS^{\GG}}
\newcommand{\EMGk}{\CS^{\Gk}}
\newcommand{\lrarr}{\leftrightarrow}
\newcommand{\meet}{\sqcap}
\newcommand{\dc}{\smile}
\newcommand{\cvr}{\prec}
\newcommand{\hgraph}{\mathsf{HGraph}}
\newcommand{\hver}[1]{V_{#1}}
\newcommand{\hedg}[1]{E_{#1}}
\newcommand{\hyp}[1]{(\hver{#1},\hedg{#1})}
\newcommand{\HH}{\mathbb{H}}
\newcommand{\Hk}{\HH_{k}}
\newcommand{\until}{\mathbin{\mathbf{until}}}
\newcommand{\Guarded}{\mathsf{Guarded}^{\gd}}
\newcommand{\plays}{\mathsf{Plays}^{\gd}}
\newcommand{\Pt}{P_{\tau}}
\newcommand{\lbl}{\last}
\newcommand{\hp}{h^{+}}
\newcommand{\emb}{\rightarrowtail}
\title{Comonadic semantics for guarded fragments}
\author{
 \IEEEauthorblockN{Samson Abramsky}
 \IEEEauthorblockA{University of Oxford\\Email: samson.abramsky@cs.ox.ac.uk} 
 \and 
 \IEEEauthorblockN{Dan Marsden}
 \IEEEauthorblockA{University of Oxford\\Email: daniel.marsden@cs.ox.ac.uk}
 }
\date{April 2021}
\begin{document}

\maketitle

\begin{abstract}
    In previous work (\cite{abramsky2017pebbling,DBLP:conf/csl/AbramskyS18,AbramskyShah2020}), it has been shown how a range of model comparison games which play a central role in finite model theory, including Ehrenfeucht-\Fraisse, pebbling, and bisimulation games, can be captured in terms of resource-indexed comonads on the category of relational structures. Moreover, the coalgebras for these comonads capture important combinatorial parameters such as tree-width and tree-depth.

The present paper extends this analysis to quantifier-guarded fragments of first-order logic. We give a systematic account, covering atomic, loose and clique guards. In each case, we show that coKleisli morphisms capture winning strategies for Duplicator in the existential guarded bisimulation game, while back-and-forth bisimulation, and hence equivalence in the full guarded fragment, is captured by spans of open morphisms.
We study the coalgebras for these comonads, and show that they correspond to guarded tree decompositions.
We relate these constructions to a syntax-free setting, with a comonad on the category of hypergraphs.
\end{abstract}

\section{Introduction}

This paper  builds on the work initiated in \cite{abramsky2017pebbling}, and further developed in \cite{DBLP:conf/csl/AbramskyS18,AbramskyShah2020}, which applies methods from categorical semantics to finite model theory, combinatorics, and descriptive complexity.
The aim, as articulated in \cite{DBLP:conf/csl/AbramskyS18,AbramskyShah2020}, is to \emph{relate structure to power}, \ie to apply the structural and compositional methods of semantics to questions of expressiveness, resource bounds, and algorithmic complexity. 
The underlying motivation is to explore the new possibilities which arise
from entangling these very different, and hitherto almost disjoint fields of theoretical computer science.

In \cite{abramsky2017pebbling,DBLP:conf/csl/AbramskyS18} it was shown how a range of model comparison games which play a central role in finite model theory, including Ehrenfeucht-\Fraisse, pebbling, and bisimulation games, can be captured in terms of resource-indexed comonads on the category of relational structures. Moreover, the coalgebras for these comonads capture important combinatorial parameters such as tree-width and tree-depth.
In this paper, we extend this analysis to quantifier-guarded fragments of first-order logic \cite{andreka1998modal,gradel1999decision,gradel2014freedoms}. These guarded fragments have proved to be a fruitful generalization of the modal fragment, inheriting its good computational properties, such as decidability and the tree model property, while significantly enhancing its expressive power. The basic modal fragment was studied from the comonadic point of view in \cite{DBLP:conf/csl/AbramskyS18}. The extension to richer guarded fragments which we undertake here involves significant new ideas. From the combinatorial point of view, it involves the generalization from graphs to hypergraphs.
This shift to a higher-dimensional setting leads to new subtleties in the comonad constructions, manifested e.g.~in the use of quotients, and to coalgebraic descriptions of acyclicity of hypergraphs. 

In addition to these technical refinements, it is important to consider the guarded fragments and other significant logics to deepen our understanding of the scope and possibilities for the comonadic approach. The study of the guarded fragments brings several new features to light:
\begin{itemize}
    \item The clique guarded fragment presents obstacles to a resource-bounded version, which are addressed functorially in a novel fashion. This is discussed further below.
    \item The hypergraph comonad is a syntax-free version of the guarded constructions which is in several ways smoother than its relational counterparts, and may point the way to further constructions which are not tied to relational signatures.
    \item The general comonadic approach to back-and-forth equivalences initiated in \cite{AbramskyShah2020} is shown to apply to this new, more technically demanding setting, suggesting the possibility of a more general axiomatic approach.
\end{itemize}

We shall consider the three main notions of quantifier guarding which have appeared in the literature: atom, loose and clique guards. A fourth, packed guards, is equivalent to clique guards \cite{HodkinsonOtto2003}. On the one hand, to a large extent we are able to give a uniform treatment of these notions. It is also the case, though, that a significant difference arises when we consider resource-indexed versions of the guarded comonads. In particular, the clique guards involve  existential witnesses, which causes problems for a functorial approach to resource bounds. We shall find an answer to this problem by expanding the signature in such a way that clique guarding is reduced to atom guarding.

\section{Background}

We shall assume some knowledge of the most basic elements of category theory: categories, functors and natural transformations. Accessible references include \cite{pierce1991basic,abramsky2010introduction}.
Other notions will be introduced as needed.

A relational vocabulary $\sg$ is a set of relation symbols $R$, each with a specified positive integer arity.
A $\sg$-structure $\As$ is given by a set $A$, the universe of the structure, and for each $R$ in $\sg$ with arity $n$, a relation $\RA \subseteq A^n$. A homomorphism $h : \As \rarr \Bs$ is a function $h : A \rarr B$ such that, for each relation symbol $R$ of arity $n$ in $\sg$, for all $a_1, \ldots , a_n$ in $A$:
$\RA(a_1,\ldots , a_n) \IMP \RB(h(a_1), \ldots , h(a_n))$. We write $\CS$ for the category of $\sg$-structures and homomorphisms.

We recall that the Gaifman graph of a structure $\As$ is a reflexive graph with vertices $A$, such that two distinct elements are adjacent if they both occur in some tuple $\va \in \RA$ for some relation symbol $R$ in $\sg$.

We shall need a few notions on posets. A \emph{forest order} is a poset $(F, {\leq})$ such that for each $x \in F$, the elements below it form a finite linear order. A \emph{tree} is a forest order with a least element (the root). Note that a tree has binary meets, written $s \wedge t$. Given elements $s, t \in T$, the \emph{unique path} between $s$ and $t$ is $[s \wedge t, s] \cup [s \wedge t, t]$, where we use interval notation with respect to the tree order: $[s, t] := \{ v \mid s \leq v \leq t \}$. We write $s \cvr t$ for the covering relation in a poset, which holds if $s < t$, and $[s,t] = \{ s,t \}$. Given a subset $S$ of a poset $P$, $\downset S := \{ x \in P \mid \exists y \in S. \, x \leq y \}$.

Note that a forest is a disjoint union of trees. Given elements $x$, $y$ of a forest $F$, we write $x \dc y$ if they have a lower bound in $F$. In this case, they are in a common subtree of $F$, and their meet $x \wedge y$ exists. The relation $\dc$ is an equivalence relation, which partitions $F$ into its maximal subtrees.

\subsection{Guarded fragments}
The quantifier-guarded fragments of first-order logic we shall consider contain the usual atomic formulas and are closed under the boolean connectives. They are also closed under the following restricted forms of quantification: if $\vphi(\vx,\vy)$ is a guarded formula, in which all variables occurring free are in $\vx, \vy$,  then so are $\exists \vx.\, G(\vx, \vy) \AND \vphi(\vx,\vy)$ and $\forall \vx. \, G(\vx, \vy) \rarr \vphi(\vx,\vy)$, where $G(\vx, \vy)$ is a \emph{guard}. 
We shall consider three fragments corresponding to increasingly liberal notions of guard:
\begin{LaTeXdescription}
    \item[Atom guarded:] $G(\vx,\vy)$ is an atomic formula in which all the variables in $\vx,\vy$ occur.
    \item[Loosely guarded:] $G(\vx,\vy)$ is a conjunction of atomic formulas, such that each pair of variables, one occurring in $\vx$, and the other in $\vx \cup \vy$, must occur in one of the atomic formulas.
    \item[Clique guarded:] For each strictly positive natural number~$n$, there is a positive existential formula~$\cliquen{n}$ such that for every~$\As$, $\As \models \cliquen{n}(a_1,\ldots,a_n)$ if and only if~$a_1,\ldots,a_n$ form a clique in the Gaifman graph. A clique guard~$G(\vx,\vy)$ is a formula of the form~$\cliquen{n}(\vx,\vy)$.
\end{LaTeXdescription}
As is most conspicuous in the case of clique guards, the point of these syntactic conditions is to ensure that, for any structure $\As$, $\As \models G(\va,\vb)$ implies that  the set of elements occurring in the tuples $\va, \vb$ forms a clique in the Gaifman graph of $\As$. See \cite{HodkinsonOtto2003} for further details.

Another notion of guard which has been considered in the literature is that of \emph{packed} guards \cite{marx2001tolerance}. These are in fact equivalent to clique guards, as observed in \cite{HodkinsonOtto2003}, so we shall not consider them separately.
\begin{example}[Guarded Formulae]
The standard translation of modal formulae (see \eg \cite{blackburn2002modal}) is contained within the atom guarded fragment. For example, $\Diamond P$ would be translated to a formula of the form:
\begin{equation*}
    \exists y.\, R(x,y) \wedge P(y)
\end{equation*}
Even the atom guarded fragment is more general than this, allowing for example:
\begin{equation*}
    \exists y.\, R(y,x) \wedge P(y) \quad\mbox{ and }\quad \exists y.\, (y = y) \wedge P(y)
\end{equation*}
which modally would correspond to backwards and global modalities respectively. Neither of these can be expressed within basic modal logic. We can of course also incorporate relations of arity greater than two, corresponding to polyadic modal logics.

The motivation for the loosely guarded fragment is to allow the introduction of more complex modalities. A standard example is a strict until modality, ~$\psi \until \varphi$, which can be translated to:
\begin{equation*}
    \exists y. \, x \leq y \wedge \varphi(y) \wedge \forall z.\, (x \leq z \wedge z < y) \rightarrow \psi(z)
\end{equation*}
The observation that each of these guards induces a clique in the Gaifman graph leads to the introduction of the more general clique guarded fragment~\cite{gradel1999decision}.
\end{example}

\subsection{Guarded sets}
The semantic counterpart of guards are \emph{guarded sets} in a structure $\As$. These will always be cliques in the Gaifman graph of $\As$.
Given a tuple $\va \in \As^k$, the support of $\va$ is the set of elements occurring in the tuple.
For each of the three fragments we are considering, we have the corresponding notion of guarded set, indexed by a resource bound $k$:
a set~$X$ is \emph{(atomic/loosely/clique) $k$-guarded} if there exists a tuple~$\va$ of length at most~$k$, and (atomic/loose/clique) guard $G$ such that~$\As \models G(\va)$, and $X$ is contained in the support of~$\va$. We say that~$X$ is an \emph{exactly guarded set} if it is equal to the support of~$\va$.
\begin{lemma}
\label{lem:guarded-set-preservation}
(Exactly) guarded sets of each type are preserved by homomorphisms. Guarded sets are closed under subsets.
\end{lemma}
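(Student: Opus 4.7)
The plan is to dispatch the two parts separately, relying on the standard fact that homomorphisms preserve positive existential formulas together with the observation that taking the support of a tuple commutes with the map-wise action of any function.

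For closure under subsets, I would simply unpack the definition: if $X$ is $k$-guarded, witnessed by a tuple $\va$ of length at most $k$ and an (atomic/loose/clique) guard $G$ with $\As \models G(\va)$ and $X \subseteq \mathrm{supp}(\va)$, then any $Y \subseteq X$ satisfies $Y \subseteq \mathrm{supp}(\va)$, so the same $\va$ and $G$ witness that $Y$ is $k$-guarded. This is why the ``exactly'' qualifier is dropped here: shrinking $X$ breaks the equality $X = \mathrm{supp}(\va)$ but not the containment.

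For preservation under a homomorphism $h : \As \rarr \Bs$, I would observe the key fact that in all three cases the guard $G$ is a positive existential formula: atomic guards are atomic; loose guards are conjunctions of atomic formulas; and clique guards are by definition positive existential (via the formulas $\cliquen{n}$). Since homomorphisms preserve positive existential formulas, $\As \models G(\va)$ entails $\Bs \models G(h(\va))$. Combined with the set-theoretic identity $h(\mathrm{supp}(\va)) = \mathrm{supp}(h(\va))$, this gives the result: if $X$ is exactly $k$-guarded by $\va$ and $G$, then $h(X) = h(\mathrm{supp}(\va)) = \mathrm{supp}(h(\va))$, exactly guarded by $h(\va)$ and $G$; and the length bound $k$ is preserved since $|h(\va)| = |\va|$.

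The only point that requires any care is verifying that homomorphism preservation really applies to clique guards, since these are defined semantically rather than by a fixed syntactic shape. Here I would simply cite that $\cliquen{n}$ is stipulated to be a positive existential formula, so Tarski--style induction on positive existential formulas (preservation of atoms by definition of homomorphism, and of $\wedge, \vee, \exists$ by standard arguments) gives what we need. No step is a genuine obstacle; the lemma is essentially a packaging of the fact that each notion of guard was designed to be a positive existential property of tuples.
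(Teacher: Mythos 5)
Your proof is correct and follows essentially the same route as the paper's: both rest on the observation that all three guard types are positive existential formulas (hence preserved by homomorphisms, with the support of a tuple commuting with the action of $h$), and both treat closure under subsets as immediate from the containment $X \subseteq \mathrm{supp}(\va)$. Your version just spells out the details, including the clique-guard case, a bit more explicitly than the paper does.
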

\begin{proof}
Each of the formulae involved in the guarded conditions is existential positive, and therefore preserved by homomorphisms, meaning exactly guarded sets are preserved.
If~$X$ is contained in the support of~$\va$ and~$h$ is a homomorphism, then~$h(X)$ is contained in the support of~$h(\va)$.

The final part is immediate from the definition.
\end{proof}

\subsection{Guarded simulation and bisimulation}
For modal logics, it is well established that the correct notion of equivalence between models is given by bisimulation \cite{blackburn2002modal}. The canonical result here is the van Benthem Theorem \cite{van1977modal}, which characterizes modal logic as the bisimulation-invariant fragment of first-order logic. 
This picture extends smoothly to guarded logics:
they are the guarded bisimulation invariant fragments of first order logic~\cite{andreka1998modal}. They can also be extended with fixed point operators, for which there is an analog of the Janin-Walukiewicz theorem for the~$\mu$-calculus~\cite{JaninWalukiewicz1996}, which says that guarded fixed point logic is the bisimulation invariant part of guarded second order logic, the natural guarded variant of second order logic~\cite{GradelHirschOtto2002}.

It is convenient to phrase guarded bisimulation in terms of Spoiler-Duplicator games.
For each notion of guarding $\gd$ (atomic, loose or clique), and $\sg$-structures $\As$, $\Bs$, the $\gd$-guarded simulation game from $\As$ to $\Bs$
is described as follows:
\begin{itemize}
\item Round 0: We set $X_0 := \vempty$, $\vphi_0 := \vempty$.
\item Round $n+1$: Spoiler specifies a $\gd$-guarded set $X_{n+1}$ in $\As$. Duplicator must respond with a $\gd$-guarded set $Y_{n+1}$ in $\Bs$, and a partial homomorphism $\vphi_{n+1} : X_{n+1} \to Y_{n+1}$, such that $\vphi_{n+1} |_{X} = \vphi_{n} |_{X}$, where $X = X_{n+1} \cap X_{n}$.
\end{itemize}

As usual, Duplicator wins if he has a response at each round, otherwise Spoiler wins. 

The $\gd$-guarded bisimulation game is specified similarly, with additional requirements on Duplicator:
\begin{itemize}
\item Round 0: We set $X_0 := \vempty$, $Y_0 := \vempty$, $\vphi_0 := \vempty$.
\item Round $n+1$: Spoiler now has two options.
\begin{itemize}
\item Option 1: Spoiler specifies a $\gd$-guarded set $X_{n+1}$ in $\As$. Duplicator must respond with a $\gd$-guarded set $Y_{n+1}$ in $\Bs$, and a partial isomorphism $\vphi_{n+1} : X_{n+1} \to Y_{n+1}$, such that $\vphi_{n+1} |_{X} = \vphi_{n} |_{X}$, where $X = X_{n+1} \cap X_{n}$.
\item Option 2: Spoiler specifies a $\gd$-guarded set $Y_{n+1}$ in $\Bs$. Duplicator must respond with a $\gd$-guarded set $X_{n+1}$ in $\As$, and a partial isomorphism $\vphi_{n+1} : X_{n+1} \to Y_{n+1}$, such that $\vphi_{n+1}^{-1} |_{Y} = \vphi_{n}^{-1} |_{Y}$, where $Y = Y_{n+1} \cap Y_{n}$.
\end{itemize}
\end{itemize}

Again, Duplicator wins if he has a response at each round, otherwise Spoiler wins.
We write $\As \gsim \Bs$ if Duplicator has a winning strategy for the $\gd$-guarded simulation game from $\As$ to $\Bs$, and $\As \gsimk \Bs$ if Duplicator has a winning strategy for the  version of the game where moves are restricted to $k$-guarded sets. Similarly, we write $\As \gbsim \Bs$ and $\As \gbsimk \Bs$ for the corresponding notions for bisimulation.

The logical significance of these notions is given by the following standard result \cite{andreka1998modal,gradel1999decision}.
\begin{theorem}
\label{bisimlogth}
\begin{enumerate}
\item $\As \gbsim \Bs$ iff $\As$ and $\Bs$ satisfy the same $\gd$-guarded formulas.
\item $\As \gbsimk \Bs$ iff $\As$ and $\Bs$ satisfy the same $\gd$-guarded formulas of guarded quantifier width $\leq k$.
\item $\As \gsim \Bs$ iff every existential positive $\gd$-guarded formula satisfied by $\As$ is satisfied by $\Bs$.
\item $\As \gsimk \Bs$ iff every existential positive $\gd$-guarded formula of guarded quantifier width $\leq k$ satisfied by $\As$ is satisfied by $\Bs$.
\end{enumerate}
\end{theorem}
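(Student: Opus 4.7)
The plan is to prove all four parts in parallel, noting that (3) and (4) specialize from (1) and (2) by replacing partial isomorphisms with partial homomorphisms and restricting the logic to existential positive formulas.

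\textbf{Forward direction} (games imply logical equivalence): Fix a Duplicator winning strategy. I would show by induction on a guarded formula $\psi(\vy)$ that for any position $(X_n, Y_n, \vphi_n)$ reached by the strategy, and any tuple $\va$ from $X_n$ with image $\vb = \vphi_n(\va)$, one has $\As \models \psi(\va) \IFF \Bs \models \psi(\vb)$. The atomic case uses that $\vphi_n$ is a partial isomorphism (or a homomorphism, in the existential positive case); Boolean cases are routine. The decisive case is the guarded existential $\exists \vx.\, G(\vx,\vy) \wedge \psi'(\vx,\vy)$: given a witness $\va'$ in $\As$ with $\As \models G(\va',\va) \wedge \psi'(\va',\va)$, the support of the combined tuple is a $\gd$-guarded set in $\As$, of size bounded by the width of $G$ (hence at most $k$ in the bounded cases). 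Spoiler plays this set; Duplicator's response yields an extended partial isomorphism $\vphi_{n+1}$ agreeing with $\vphi_n$ on $\va$, producing a witness $\vb' = \vphi_{n+1}(\va')$ in $\Bs$, and the induction hypothesis applied to $\psi'$ completes the step.

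\textbf{Backward direction} (logical equivalence implies games): Build Duplicator's strategy by back-and-forth. Consider the family $\mathcal{F}$ of partial isomorphisms $\vphi : X \to Y$ between $\gd$-guarded sets in $\As$ and $\Bs$ such that for every guarded formula $\chi(\vx)$ (of guarded quantifier width $\leq k$ in the bounded cases) and every enumeration $\va$ of $X$, $\As \models \chi(\va) \IFF \Bs \models \chi(\vphi(\va))$. The hypothesis applied at the sentence level places the empty map into $\mathcal{F}$. The back-and-forth property reduces to the following: given $\vphi : X \to Y$ in $\mathcal{F}$ and Spoiler's move $X_{n+1}$ in $\As$ guarded by some $G(\vx_U, \vx_{\mathrm{new}})$ at values $(\va_U, \va_{\mathrm{new}})$, the guarded formula $\exists \vx_{\mathrm{new}}.\, G(\vx_U, \vx_{\mathrm{new}}) \wedge \tau(\vx_U, \vx_{\mathrm{new}})$, with $\tau$ the complete guarded type of $(\va_U,\va_{\mathrm{new}})$ in $\As$, is satisfied at $\vphi(\va_U)$ in $\Bs$; its witnesses become Duplicator's response, with the extended map lying back in $\mathcal{F}$. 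The symmetric clause handles Spoiler's moves in $\Bs$, and restricting to existential positive $\chi$ gives the simulation cases.

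\textbf{Main obstacle}: The delicate point lies in the backward direction of (1) and (3), where the guarded type $\tau$ is in general an infinite conjunction of formulas and cannot literally be placed under a guarded existential quantifier. The standard remedy, going back to Andr\'eka--van~Benthem--N\'emeti and Gr\"adel, is either to pass to $\omega$-saturated elementary extensions, where a single witness realizes the entire type and guarded formulas transfer by elementary equivalence, or to perform a Henkin-style iterative enumeration of guarded formulas, extracting a compatible witness in stages. For (2) and (4), the restriction to width $\leq k$ makes the relevant type finite up to logical equivalence over any finite signature, so a single characteristic formula of bounded rank suffices and the argument goes through directly.
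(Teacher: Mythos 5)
The paper offers no proof of this theorem: it is stated as a ``standard result'' with citations to Andr\'eka--van Benthem--N\'emeti and Gr\"adel, so there is no in-paper argument to compare against. Your forward direction is the standard induction and is sound, including the key bookkeeping point that the guard forces the free-variable tuple into $X_{n+1} \cap X_n$, so that $\vphi_{n+1}$ agrees with $\vphi_n$ there and the witness transfers.

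The backward direction, however, has a genuine gap. First, the claim that restricting to guarded quantifier width $\leq k$ makes the relevant type ``finite up to logical equivalence'' is false: width bounds the arity of guards, not the quantifier depth, and already at width $2$ there are infinitely many pairwise inequivalent guarded formulas (the fragment subsumes basic modal logic). Finiteness of the type comes from bounding the \emph{depth}, i.e.\ the number of rounds --- that is the content of the later Theorem~\ref{thm:depth-bounded}, not of parts (2) and (4) here. Second, and more fundamentally, $\gbsim$ and $\gbsimk$ as defined in the paper are unbounded games, whereas logical equivalence only hands Duplicator a winning strategy for each finite number of rounds; passing from ``wins every $d$-round game'' to ``wins the unbounded game'' is a Hennessy--Milner-type step that requires either finiteness of $\As$ and $\Bs$ (so the approximants of the greatest fixed point on the finite set of positions stabilise) or $\omega$-saturation. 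Your proposed remedy of moving to $\omega$-saturated elementary extensions establishes guarded bisimilarity of the \emph{extensions}, not of $\As$ and $\Bs$ themselves, so it does not yield the statement as literally written; the classical counterexample of a disjoint union of all finite paths versus the same structure plus one infinite path shows the right-to-left directions genuinely fail for arbitrary infinite structures. This caveat is suppressed in the paper's own statement too, but a complete proof must either assume finiteness (and run the fixed-point argument) or restate the theorem with a saturation hypothesis.
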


\section{Guarded comonads}
\label{sec:guarded-comonads}
We shall now define the comonads corresponding to our guarded fragments. For each notion of guarding $\gd$, and resource index $k>0$, we shall define comonads $\Gkg$ and $\Gg$ for the $k$-bounded and unrestricted cases respectively. The definitions will be given uniformly in $\gd$, and we shall simply write $\GG$ and $\Gk$ for the generic case.

We shall use the prefix order on finite sequences, written $s \preford t$, which is a forest order on non-empty sequences, \ie the set of prefixes of any sequence forms a finite linear order. If two sequences $s$ and $t$ have a non-empty meet $s \sqcap t$ in this order, there is a unique path of last elements of the sequences connecting $s$ to  $t$ via  $s \sqcap t$. More formally, writing $\last(s)$ for the last element of a non-empty sequence $s$, the unique path between $s$ and $t$ is
\[ \{ \last(u) \mid u \in [s \meet t, s] \cup [s \meet t, t] \} \]
where we use interval notation with respect to the prefix order.

Given a structure $\As$, we must define a new structure $\GG \As$. We build the universe of $\GG \As$ in two steps. Firstly, we form the set of \emph{focussed plays}. These are pairs $\langle p, a \rangle$, where $p$ is a \emph{play}, \ie a non-empty list $[U_1, \ldots, U_n]$ of guarded sets in $\As$, corresponding to a sequence of moves by Spoiler in the simulation game; and $a \in U_n$ is the \emph{focus}, \ie the current element under consideration. Then, in order to enforce the constraints $\vphi_{n+1} |_{X} = \vphi_{n} |_{X}$ on Duplicator moves in the simulation game, we quotient this set by an equivalence relation: $\langle p, a \rangle \sim \langle q, a'\rangle$ iff (i) $a = a'$, (ii) the greatest common prefix $p \sqcap q$ is non-empty, and (iii) $a$ is a member of every guarded set in the unique path between $p$ and $q$. 
We write $\lb p, a \rb$ for the equivalence class of $\langle p,a \rangle$. 

\textbf{Remark} We prefer considering general guarded sets, rather than restricting attention to exactly guarded sets, as this seems more natural. We shall return to this question in section~\ref{sec:depth-bounds}.

The universe of $\GG \As$ will be the set of equivalence classes $\lb p,a \rb$. For a $\sg$-relation $R$, we define 
$\RGA \, := \, \{ (\lb p, a_1 \rb, \ldots , \lb p, a_r \rb) \mid \RA(a_1, \ldots , a_r) \}$.
From the definition of the equivalence relation, it is clear that this is well-defined.
This gives the object part of the functor $\GG$. 

For the counit, we define the map $\ve_{\As} : \GG(A) \rarr A$ by $\ve_{\As}(\lb p,a\rb) = a$.
\begin{proposition}
$\ve_{\As}$ is well-defined, and a homomorphism.
\end{proposition}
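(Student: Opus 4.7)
The plan is to dispatch the proposition in two short steps matching its two claims.

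\emph{Well-definedness.} I would start by unpacking the equivalence relation defining the universe of $\GG\As$. The map sends $\lb p, a\rb$ to its focus component $a$, so well-definedness reduces to showing that if $\langle p, a\rangle \sim \langle q, a'\rangle$, then $a = a'$. But this is literally clause (i) of the definition of $\sim$ given just before the proposition, so nothing more is required.

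\emph{Homomorphism.} Here I would take an arbitrary relation symbol $R \in \sg$ of arity $r$ and suppose $(\lb p_1, a_1\rb, \ldots, \lb p_r, a_r\rb) \in \RGA$. By the definition of $\RGA$, any such tuple arises from a single play: there exists a play $p$ such that $\lb p_i, a_i \rb = \lb p, a_i \rb$ for each $i$, and $\RA(a_1, \ldots, a_r)$ holds. Applying $\ve_{\As}$ coordinatewise yields $(a_1, \ldots, a_r)$, which lies in $\RA$ by hypothesis. Hence $\ve_{\As}$ preserves every $\sg$-relation.

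Neither step should pose any obstacle: both follow immediately by chasing definitions. The only thing to be slightly careful about is making explicit that the tuples in $\RGA$ share a common play $p$, so that the elements $a_1, \ldots, a_r$ are genuine elements of $A$ satisfying $\RA$ and not merely focuses of unrelated equivalence classes. Once that observation is made, the preservation clause is automatic.
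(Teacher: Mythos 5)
Your proposal is correct and follows essentially the same argument as the paper: well-definedness is immediate from clause (i) of the equivalence relation, and relation preservation follows by extracting the common play $p$ witnessing membership in $\RGA$ and observing that $\ve_{\As}$ then returns a tuple already known to lie in $\RA$. Nothing further is needed.
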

\begin{proof}
It is clear that~$\ve_{\As}$ is well defined as every representative of a single equivalence class has the same second component.

Now for $\sg$-relation~$R$, assume~$\RGA ( \lb p_1, a_1 \rb, \ldots, \lb p_n, a_n \rb ) $. By the definition of relations in $\GG \As$, there must exists~$p$ such that, for~$1 \leq i \leq n$:
\begin{equation*}
    \lb p_i, a_i \rb = \lb p, a_i \rb
\end{equation*}
and~$\RA(a_1,\ldots,a_n)$. By the definition of~$\ve_{\As}$ we then have:
\begin{equation*}
    \RA(\ve_{\As}(\lb p_1, a_1\rb),\dots,\ve_{\As}(\lb p_n, a_n \rb))
\end{equation*}
and so~$\ve_{\As}$ is a homomorphism.
\end{proof}
We now characterise the  guarded sets in $\Gg \As$.
For each play $p$ in $\Gg \As$, we define the set
\begin{equation*}
  S_p \; := \;  \{ \lb p, a \rb \mid a \in \last(p)  \} .
\end{equation*}
\begin{restatable}{lemma}{GuardedSetProperties}
\label{lem:guarded-set-properties}
\begin{enumerate}
    \item For every play $p$, $S_p$ is a $\gd$-guarded set.
    \item Every clique in the Gaifman graph of $\Gg \As$, and in particular every $\gd$-guarded set, is a subset of $S_p$ for some play $p$.
    \item For every $p$, the counit $\ve_\As$ restricts to an embedding on $S_p$.
\end{enumerate}
\end{restatable}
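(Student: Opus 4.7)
My plan is to handle the three parts in turn.

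For part~(1), the idea is to lift an $\As$-guard of $\last(p)$ to a guard of $S_p$ in $\Gg\As$, after first extending the play. Let $p = [U_1, \ldots, U_n]$, so that $\last(p) = U_n$ is $\gd$-guarded by some tuple $\vec c$ and guard $G$, with $U_n$ contained in the set $V$ of elements of $\vec c$. Since $V$ is itself a $\gd$-guarded set in $\As$, appending $V$ to $p$ yields a valid play $p'$ with $\last(p') = V$. For every $a \in \last(p)$, both $\last(p)$ and $\last(p') = V$ contain $a$, so the definition of the equivalence gives $\lb p, a \rb = \lb p', a \rb$. Lifting $\vec c$ componentwise through $p'$ produces a tuple $\vec b$ satisfying $G$ in $\Gg\As$, with support $\{\lb p', a \rb \mid a \in V\} \supseteq S_p$. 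Since $\gd$-guarded sets are closed under subsets (Lemma~\ref{lem:guarded-set-preservation}), $S_p$ is itself $\gd$-guarded.

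For part~(2), the main technical ingredient is a subtree characterization of the representatives of a single element of $\Gg\As$. For $y \in \Gg\As$ with $a := \ve_\As(y)$, set $T_y := \{ p \mid y = \lb p, a \rb \}$. Fixing any representative $p_0 \in T_y$, let $r_y$ be the shortest prefix of $p_0$ such that $a \in \last(q)$ for every $q \in [r_y, p_0]$. A direct verification against the equivalence relation gives
\[
T_y \;=\; \{ p \mid r_y \preford p \text{ and } a \in \last(q) \text{ for every } q \in [r_y, p] \}.
\]
Now let $\{y_1, \ldots, y_k\}$ be a clique in the Gaifman graph of $\Gg\As$. By the definition of the lifted relations, pairwise adjacency forces each pair $(y_i, y_j)$ to share a common representative play, so $T_{y_i} \cap T_{y_j} \neq \emptyset$. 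Any play in this intersection has both $r_{y_i}$ and $r_{y_j}$ as prefixes, forcing the roots to be pairwise comparable, and hence, for the whole finite family, linearly ordered. Let $r^* := \max_i r_{y_i}$. For each $i$, pick $j$ with $r_{y_j} = r^*$ and any $q \in T_{y_i} \cap T_{y_j}$; then $r_{y_i} \preford r^* \preford q$, and restricting the condition ``$a_i \in \last(u)$ for $u \in [r_{y_i}, q]$'' to the subinterval $[r_{y_i}, r^*]$ shows $r^* \in T_{y_i}$. Thus $y_i = \lb r^*, a_i \rb$ with $a_i \in \last(r^*)$ for every $i$, so $\{y_1, \ldots, y_k\} \subseteq S_{r^*}$. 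Since every $\gd$-guarded set induces a clique in the Gaifman graph, the second assertion follows.

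For part~(3), injectivity on $S_p$ holds because two elements with the same play $p$ are equivalent exactly when their second components agree. Preservation of relations is just the statement that $\ve_\As$ is a homomorphism. For reflection, if $\RA(\ve_\As(\lb p, a_1 \rb), \ldots, \ve_\As(\lb p, a_r \rb))$ holds then $\RA(a_1, \ldots, a_r)$ holds, whence $\RGA(\lb p, a_1 \rb, \ldots, \lb p, a_r \rb)$ by the very definition of the lifted relation.

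The principal obstacle is part~(2): naively concatenating or appending pairwise witness plays fails, because the intermediate guarded sets on the resulting path need not contain every $a_i$ simultaneously. The subtree characterization of each $T_{y_i}$ isolates this combinatorial content in the root $r_{y_i}$, so that pairwise intersections become root comparability and the whole intersection can be resolved by taking the maximum root.
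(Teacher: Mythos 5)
Your proposal is correct, and parts (1) and (3) follow essentially the paper's (much terser) argument --- part (1) in fact usefully spells out the play-extension step that the paper leaves implicit, since the guard tuple $\vec{c}$ for $\last(p)$ need not have all its components in $\last(p)$. Part (2) takes a genuinely different route. The paper argues by induction on the size of the clique: given a clique of size $n+1$, it combines the play $s$ covering the first $n$ elements with the pairwise adjacency witnesses $q_i$ for the new element, and finds the covering play by a two-case comparison of $s \meet q_n$ with the least of the meets $q_j \meet q_n$. You instead first establish that the set $T_y$ of representatives of an element $y$ is exactly the subtree of plays above a unique root $r_y$ on which the focus persists --- a structural fact the paper also proves, but only later and for a different purpose (defining $\lb p,a \rb^{\dg}$ in the coalgebra section) --- and then cover the whole clique in one step by taking the maximum of the linearly ordered roots. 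This avoids both the induction and the case split, and isolates the combinatorial content in a reusable lemma; the mild cost is that the subtree characterization rests on transitivity of the generating relation, which deserves the same one-line check the paper implicitly assumes in calling $\sim$ an equivalence relation. One small point in part (1): for clique guards, ``lifting $\vec{c}$ componentwise'' to a tuple satisfying $\cliquen{n}$ in $\Gg \As$ is not quite literal, since the existential witnesses for Gaifman adjacency of $c_i, c_j$ in $\As$ must themselves be realized by further extending the play by the (atom-guarded) support of the witnessing tuple; this is routine, and no worse than the paper's own level of detail here, but worth a sentence.
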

\begin{proof}
$(1)$ and $(3)$. Given a set $\{ \lb p,a_1 \rb, \ldots , \lb p, a_n \rb \}$ we must have $\{ a_1, \ldots , a_n \} \subseteq \last(p)$, which is a $\gd$-guarded set in $\As$.
Thus such a set forms a $\gd$-guarded set in $\Gg \As$.
Moreover, the restriction of $\ve_{\As}$ to such sets is injective, and preserves and reflects relations.

For $(2)$, we prove that every clique in $\Gg \As$ must take this form by induction on the size of the clique. The base case for singletons is immediate.
For the inductive step, suppose we have a clique $\{ \lb p_1, a_1 \rb, \ldots , \lb p_n, a_n \rb, \lb p,a \rb \}$.
By the induction hypothesis, for some play $s$, for all $i = 1, \ldots , n$:
\begin{equation}
\lb s, a_i \rb = \lb p_i, a_i \rb
\end{equation}
Since the set is a clique, for each $i$, $\lb p_i, a_i \rb$ is adjacent to $\lb p,a \rb$ in the Gaifman graph of $\Gg \As$. This implies that there are plays $q_i$ such that $\lb p_i, a_i \rb = \lb q_i, a_i \rb$, and
$\lb p,a \rb = \lb q_i, a \rb$, or equivalently:
\begin{equation}
    \lb q_i, a \rb = \lb q_j, a \rb, \quad 1 \leq i,j \leq n .
\end{equation}
We must find a play $t$ such that, for all $i$:
\[ \lb s, a_i \rb = \lb t, a_i \rb, \qquad \lb q_i, a \rb = \lb t, a \rb . \]
Note that this implies in particular that
\[
\{ a_1, \ldots , a_n, a \} \subseteq \last(t)
\]
Since the plays $s \meet q_i$ are all below $s$, they are linearly ordered. Without loss of generality, we can assume
\[ s \meet q_1 \pref \cdots \pref s \meet q_n . \]
The plays $q_i \meet q_n$ are all below $q_n$, and hence linearly ordered. Let the least element of this order be $q_j \meet q_n$.
Since $s \meet q_n$ and $q_j \meet q_n$ are both below $q_n$, they are comparable. We consider two cases.

Case I: $s \meet q_n \pref q_j \meet q_n$. In this case, we define $t := q_j \meet q_n$. For each $i$, since $q_j \meet q_n \in [s \meet q_i, q_i]$, $\lb t, a_i \rb = \lb q_i, a_i \rb = \lb p_i, a_i \rb = \lb s, a_i \rb$. Also, $\lb t, a \rb = \lb q_n, a \rb = \lb q_i, a \rb$.

Case II: $q_j \meet q_n \pref s \meet q_n$. In this case, we define $t := s \meet q_n$. For each $i$, since $s \meet q_n \in [s \meet q_i, s]$, $\lb t, a_i \rb =  \lb s, a_i \rb$. Also, since $s \meet q_n \in [q_j \meet q_n, q_n]$, $\lb t, a \rb = \lb q_n, a \rb = \lb q_i, a \rb$.
\end{proof}

\begin{example}
Note that a consequence of the lemma is that the image of every clique in $\Gg \As$ under $\ve_\As$ must be a $\gd$-guarded set in $\As$.
In the case of graphs, with a single binary edge relation, then if $\gd$ is atom guarding, and $\As$ is any graph, $\Gg \As$ is triangle-free.
\end{example}
Now given a homomorphism $h : \GG(\As) \rarr \Bs$, we define its \emph{Kleisli coextension} $h^* : \GG(\As) \rarr \GG(\Bs)$.
If $p = [ U_1, \ldots , U_n]$, then $h^*(\lb p, a\rb) = \lb q, h(\lb p,a\rb)\rb$, where $q = [ V_1 , \ldots , V_n ]$, and 
\[ V_j = \{ h(\lb [ U_1, \ldots , U_j ], a_j \rb) \mid a_j \in U_j \}, \quad 1 \leq j \leq n . \]
\begin{restatable}{proposition}{CoKleisliWellDefined}
$h^*$ is well-defined, and a homomorphism.
\end{restatable}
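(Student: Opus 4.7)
The plan is to break the claim into two parts: first that $h^*$ is well-defined as a function $\GG\As \to \GG\Bs$, and second that it preserves every $\sg$-relation.

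For well-definedness, I would first verify that each $V_j$ really is a $\gd$-guarded set in $\Bs$, so that $q = [V_1,\ldots,V_n]$ is genuinely a play in $\GG\Bs$. The key observation is that $V_j$ is precisely the image under $h$ of the set $S_{[U_1,\ldots,U_j]} = \{\lb [U_1,\ldots,U_j], a_j\rb \mid a_j \in U_j\}$, which by Lemma~\ref{lem:guarded-set-properties}(1) is a $\gd$-guarded set in $\GG\As$; then Lemma~\ref{lem:guarded-set-preservation} gives that its image $V_j$ is a $\gd$-guarded set in $\Bs$. The focus condition $h(\lb p,a\rb) \in V_n$ follows directly from $a \in U_n = \last(p)$.

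Next I need to check that the assignment $\lb p, a\rb \mapsto \lb q, h(\lb p, a\rb)\rb$ respects equivalence: if $\lb p,a\rb = \lb p',a\rb$ in $\GG\As$, with corresponding images $\lb q, b\rb$ and $\lb q', b'\rb$, we must have $\lb q, b\rb = \lb q',b'\rb$ in $\GG\Bs$. Since $h$ is already well-defined on equivalence classes, $b = h(\lb p,a\rb) = h(\lb p',a\rb) = b'$. For the play components, $p \meet p'$ is non-empty, and the corresponding prefix of $q$ (equivalently $q'$) witnesses that $q \meet q'$ is non-empty. Finally, I need $b$ to lie in every guarded set on the path between $q$ and $q'$. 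For any prefix $[U_1,\ldots,U_j]$ of $p$ (or of $p'$) lying on the path between $p$ and $p'$, we have $a \in U_j$ by the equivalence, and moreover $\lb [U_1,\ldots,U_j], a\rb = \lb p, a\rb$ because $a$ witnesses the equivalence along the entire tail of that path too. Hence $h(\lb [U_1,\ldots,U_j], a\rb) = b$, and by construction $b \in V_j$. The analogous statement holds on the $q'$ side, which is exactly the third clause of the equivalence relation on $\GG\Bs$.

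For the homomorphism property, I would exploit the critical observation that the play $q$ produced by the coextension depends only on the play $p$, not on the focus. So if $\RGA(\lb p, a_1\rb,\ldots,\lb p, a_r\rb)$, unpacking the definition gives $\RA(a_1,\ldots,a_r)$, and each $h^*(\lb p,a_i\rb) = \lb q, h(\lb p, a_i\rb)\rb$ with a \emph{common} play $q$. Since $h \colon \GG\As \to \Bs$ is a homomorphism, $\RB(h(\lb p,a_1\rb),\ldots,h(\lb p,a_r\rb))$ follows, and applying the definition of relations on $\GG\Bs$ in the reverse direction yields $\RGB(h^*(\lb p,a_1\rb),\ldots,h^*(\lb p,a_r\rb))$.

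The main obstacle is the well-definedness step, specifically verifying the path condition in the equivalence relation on $\GG\Bs$. The delicate point is that to conclude $b \in V_j$ along the path, one must first argue that $\lb [U_1,\ldots,U_j], a\rb$ and $\lb p, a\rb$ are already identified in $\GG\As$; once this bookkeeping about how the equivalence relation interacts with prefixes on the path is in place, the remaining arguments are routine applications of the earlier lemmas.
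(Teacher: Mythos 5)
Your proposal is correct and follows essentially the same route as the paper's proof: the well-definedness argument via the common prefix $p \meet p'$ and the persistence of the focus (hence of its $h$-image) along the path is the paper's argument, and the guardedness of the $V_j$ is established by the same appeal to Lemma~\ref{lem:guarded-set-properties} together with Lemma~\ref{lem:guarded-set-preservation}. The only difference is that you explicitly verify preservation of the $\sg$-relations (correctly, via the observation that the output play depends only on $p$ and not on the focus), a step the paper leaves implicit.
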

\begin{proof}
To establish that the Kleisli coextension is uniquely defined, we assume that~$\lt p_1, a \rt$ and~$\lt p_2, a \rt$ are representatives of the same equivalence class, and aim to show the stated construction results in the same element of~$\GG \Bs$. As the assumed pairs are equivalent, their greatest common prefix~$p_1 \sqcap p_2$ is non-empty, and~$a$ appears in every elements of the plays from the common prefix to both~$p_1$ and~$p_2$. Therefore, for~$q$ satisfying either:
\begin{equation*}
    p_1 \sqcap p_2 \preford q \preford p_1 \quad\mbox{ or }\quad p_1 \sqcap p_2 \preford q \preford p_2
\end{equation*}
we have equality:
\begin{equation*}
    \lb p_1 \sqcap p_2, a \rb = \lb q, a \rb
\end{equation*}
and so under the same conditions
\begin{equation*}
    h (\lb p_1 \sqcap p_2, a \rb ) = h (\lb q, a \rb)
\end{equation*}
and so~$h (\lb p_1 \sqcap p_2, a \rb)$ appears in each of the sets:
\begin{equation*}
    \{ h(\lb q, a_q \rb) \mid a_q \in \last(q) \}
\end{equation*}
We also note that, trivially, ~$h(\lb p_1, a \rb) = h(\lb p_1 \sqcap p_2, a \rb) = h(\lb p_2, a \rb)$. If~$p = [U_1,\ldots,U_n]$, let~$p'$ denote the sequence~$[V_1,\ldots,V_n]$ with the~$V_j$ as defined above. We have shown that~$p'_1$ and~$p'_2$ have a common prefix, and that $h( \lb p_1 \sqcap p_2, a \rb )$ appears in each element of the plays from the common prefix to the tail of~$p'_1$ and~$p'_2$. Therefore~$h^*(\lb p_1, a \rb ) = h^*(\lb p_2, a \rb)$, and the Kleisli coextension is uniquely defined. 

We must also confirm the codomain of~$h^*$ is~$\Gg \Bs$. To do so, we must establish that the sets~$V_j$ are guarded sets in~$\Bs$. As guarded sets are preserved by homomorphisms by lemma~\ref{lem:guarded-set-preservation}, it is sufficient to show that sets of the form:
\begin{equation}
\label{eq:guard-check}
    \{ \lb q, a_q \rb \mid a_q \in \last(q) \}
\end{equation}
are guarded in~$\Gg \As$, which is part of lemma~\ref{lem:guarded-set-properties}. 
\end{proof}

\begin{restatable}{theorem}{GuardedComonad}
\label{thm:guarded-comonad}
The triple $(\GG, \ve, (\cdot)^*)$ is a comonad in Kleisli form. This means that the following equations are valid:
\[ \ve_{\As}^* = \id_{\GG \As}, \qquad \ve \circ f^* = f, \qquad (g \circ f^*)^* = g^* \circ f^* . \]
\end{restatable}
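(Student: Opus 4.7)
The plan is to verify all three equations by direct computation from the definitions of $\ve$ and $(\cdot)^*$, relying on the previous proposition to ensure that everything is well-defined on equivalence classes.

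For the identity law $\ve_{\As}^* = \id_{\GG\As}$, I would fix $\lb p, a\rb$ with $p = [U_1,\ldots,U_n]$ and unfold the definition of $\ve_{\As}^*$. Since $\ve_{\As}$ reads off the focus, each set $V_j = \{\ve_{\As}(\lb [U_1,\ldots,U_j], a_j\rb) \mid a_j \in U_j\}$ simplifies to $\{a_j \mid a_j \in U_j\} = U_j$, so the new play $q$ is just $p$, and the new focus is $\ve_{\As}(\lb p,a\rb) = a$.

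For the counit law $\ve \circ f^* = f$, the computation is even more immediate: by construction, $f^*(\lb p,a\rb)$ has focus component $f(\lb p, a\rb)$, so applying $\ve_{\Bs}$ (which projects out the focus) recovers $f(\lb p,a\rb)$.

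The associativity law $(g \circ f^*)^* = g^* \circ f^*$ is the main work. I would apply both sides to $\lb p, a \rb$ with $p = [U_1,\ldots,U_n]$ and show that the focus and each slot of the play agree. On the focus side, the left side evaluates to $g(f^*(\lb p,a\rb))$ directly, while the right side yields $g(f^*(\lb p,a\rb))$ after applying $g^*$ to $f^*(\lb p,a\rb)$ and reading off the focus, so these coincide by definition of $g^*$. For the play component at level $j$, the left side produces
\[
W_j = \{ g(f^*(\lb [U_1,\ldots,U_j], a_j \rb)) \mid a_j \in U_j \},
\]
whereas applying $g^*$ to $f^*(\lb p,a\rb) = \lb [V_1,\ldots,V_n], f(\lb p,a\rb)\rb$, with $V_j = \{f(\lb [U_1,\ldots,U_j], a_j \rb) \mid a_j \in U_j\}$, produces
\[
T_j = \{ g(\lb [V_1,\ldots,V_j], b_j \rb) \mid b_j \in V_j \}.
\]
Unfolding the definition of $f^*$ shows $f^*(\lb [U_1,\ldots,U_j], a_j \rb) = \lb [V_1,\ldots,V_j], f(\lb [U_1,\ldots,U_j], a_j\rb)\rb$, and since $b_j = f(\lb [U_1,\ldots,U_j], a_j\rb)$ ranges over exactly $V_j$ as $a_j$ ranges over $U_j$, we obtain $W_j = T_j$.

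The only obstacle worth flagging is representative-independence: each calculation must respect the equivalence relation $\sim$. That concern is already discharged by the preceding proposition establishing that $h^*$ and $\ve$ are well-defined, so no fresh work is needed here; after that, the three equations reduce to the symbolic bookkeeping sketched above.
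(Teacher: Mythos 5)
Your proposal is correct and follows essentially the same route as the paper's proof: direct computation of all three equations on a representative $\lb [U_1,\ldots,U_n],a\rb$, with the associativity law reduced to showing the play components agree slot-by-slot via the observation that $f^*$ restricted to the prefix $[U_1,\ldots,U_j]$ produces exactly $[V_1,\ldots,V_j]$, so the images under $g$ coincide. Your explicit remark that representative-independence is already discharged by the preceding well-definedness proposition matches the paper's (implicit) reliance on that fact.
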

\begin{proof}
For the first claim, consider~$\lb [U_1,\ldots,U_n], a \rb \in \Gg \As$. There exists
\begin{equation*}
    [V_1,\ldots,V_n]   
\end{equation*}
such that
\begin{align*}
    &\ve_{\As}^*(\lb [U_1,\ldots,U_n], a \rb) \\&= \lb [V_1,\ldots,V_n], \ve_{\As}(\lb [U_1, \ldots, U_n], a \rb) \rb \\ &= \lb [V_1, \ldots, V_n], a \rb
\end{align*}
It is therefore sufficient to show that $U_j = V_j$ for~$1 \leq j \leq n$. We have:
\begin{equation*}
    V_j = \{ \epsilon \lb [U_1,\ldots,U_j], a_j \rb \mid a_j \in U_j \} = \{ a_j \mid a_j \in U_j \} = U_j
\end{equation*}
completing this part of the proof.

For the second claim, for~$\lb p, a \rb \in \Gg \As$, there exists~$q$ such that
\begin{equation*}
    \ve_{\As} \circ f^*(\lb p, a \rb) = \ve_{\As}(\lb q, f(\lb p, a \rb) \rb) = f(\lb p, a \rb)
\end{equation*}

For the final claim, again consider~$\lb [U_1,\ldots,U_n], a \rb \in \Gg \As$. There exist
\begin{equation*}
    V_1,\ldots,V_n \qquad W_1,\ldots,W_n \qquad X_1,\ldots,X_n
\end{equation*}
such that:
\begin{align*}
    &g^* \circ f^* \lb [U_1,\ldots,U_n], a \rb \\&= g^* \lb [V_1,\ldots,V_n], f(\lb [U_1,\ldots,U_n], a \rb) \rb\\
    &= \lb [W_1,\ldots,W_n], g(\lb [V_1,\ldots,V_n], f(\lb [U_1,\ldots, U_n], a \rb) \rb) \rb
\end{align*}
and
\begin{align*}
    &(g \circ f^*)^* (\lb [U_1,\ldots,U_n], a \rb) \\&= \lb [X_1,\ldots,X_n], g \circ f^*(\lb [U_1,\ldots,U_n], a\rb) \rb\\
    &= \lb [X_1,\ldots, X_n], g(\lb [V_1,\ldots,V_n], f(\lb [U_1,\ldots,U_n], a\rb) \rb) \rb
\end{align*}
Therefore, it is sufficient to show
\begin{equation*}
    [W_1,\ldots,W_n] = [X_1,\ldots,X_n]
\end{equation*}
For index~$1 \leq j \leq n$
\begin{align*}
    X_j &= \{ g \circ f^* (\lb [U_1,\ldots,U_j], a_u \rb) \mid a_u \in U_j \}\\
    &= \{ g(\lb [V_1,\ldots,V_j], f(\lb [U_1,\ldots,U_j], a_u \rb)\rb) \mid a_u \in U_j \}
\end{align*}
and
\begin{align*}
    W_j &= \{ g(\lb [V_1,\ldots,V_j], a_v \rb)  \mid a_v \in V_j \}\\
    &= \{ g(\lb [V_1,\ldots,V_j], f(\lb [U_1,\ldots,U_j], a_u \rb) \rb) \mid a_u \in U_j \}
\end{align*}
Therefore the lists have the same length and are equal pointwise.
\end{proof}

It is then standard  \cite{manes2012algebraic} that $\GG$ extends to a functor by $\GG f = (f \circ \epsilon)^*$; that $\ve$ is a natural transformation; and that if we define the comultiplication $\delta : \GG \Rightarrow \GG^2$ by $\delta_{\As} = \id_{\As}^*$, then $(\GG, \ve, \delta)$ is a comonad.

Explicitly, for every $\sigma$-structure $\As$, let $\delta_{\As}:\GG(\As) \rarr \GG(\GG(\As))$ be $t = \lb [ U_1,\dots,U_n ], a \rb \mapsto \lb [ T_1, \dots, T_n ], t \rb$ where $T_j = \{\lb [ U_1,\dots,U_j ], a_j \rb \mid a_j \in U_j\}$ for all $j = 1,\dots,n$.    

For~$\gd$ either atom or loose guards, for each $k>0$, we obtain resource-bounded variants of these comonads $\Gk$ by restricting to $k$-guarded sets of the appropriate type in forming the universe of $\Gk \As$. Interestingly, this grading does not go through for clique guards. This is essentially because lemma~\ref{lem:guarded-set-properties} does not restrict appropriately in the presence of existential quantifiers in guard formulae. We return to this question in Section~\ref{sec:atoms-and-cliques}.
\begin{theorem}
For~$\gd$ either atom or loose guards,
the triple $(\Gk^{\gd}, \ve, (\cdot)^*)$ is a comonad in Kleisli form for all $k>0$. Here the counit and coextension are the restrictions of the corresponding operations for the unbounded case.
\end{theorem}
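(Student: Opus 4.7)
The plan is to reduce the entire claim to Theorem~\ref{thm:guarded-comonad} by exhibiting $\Gk^{\gd}\As$ as a sub-structure of $\Gg\As$ on which all of the unbounded-case data restrict cleanly. If we can show that (i) the quotient defining $\Gk^{\gd}\As$ is the restriction of the unbounded quotient to focussed plays whose guarded sets are $k$-bounded, (ii) the counit $\ve_{\As}$ restricts to a homomorphism $\Gk^{\gd}\As\rarr\As$, and (iii) for any homomorphism $h:\Gk^{\gd}\As\rarr\Bs$ the coextension $h^{*}$ lands inside $\Gk^{\gd}\Bs$, then the three Kleisli equations $\ve_{\As}^{*}=\id$, $\ve\circ f^{*}=f$, and $(g\circ f^{*})^{*}=g^{*}\circ f^{*}$ are inherited for free, because each is computed on $\Gk^{\gd}\As$ by the same formulas used on $\Gg\As$.

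Items (i) and (ii) are essentially bookkeeping. For the quotient, note that if $p$ and $q$ are focussed plays whose guarded sets are all $k$-bounded, then every play on the unique path between them, including $p\meet q$, consists of subsequences of $p$ or $q$ and is therefore also $k$-bounded; hence the equivalence relation generated within the $k$-bounded plays coincides with the restriction of the equivalence relation on $\Gg\As$. The counit is then well-defined on representatives and preserves relations by exactly the argument already given for $\Gg$.

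The substantive step is (iii): verifying that each set $V_{j}=\{h(\lb[U_{1},\ldots,U_{j}],a_{j}\rb)\mid a_{j}\in U_{j}\}$ appearing in the coextension is $k$-guarded in $\Bs$. Each such $V_{j}$ is the $h$-image of the set $S_{[U_{1},\ldots,U_{j}]}\subseteq\Gk^{\gd}\As$ from Lemma~\ref{lem:guarded-set-properties}, which is $k$-guarded in $\Gk^{\gd}\As$ because it is witnessed by the tuple $\lb[U_{1},\ldots,U_{j}],\va_{j}\rb$ of length at most $k$, for any tuple $\va_{j}$ of length at most $k$ that $\gd$-guards $U_{j}$ in $\As$. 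Lemma~\ref{lem:guarded-set-preservation} then gives that $V_{j}$ is $k$-guarded in $\Bs$: for atom and loose guards the witnessing guard $G$ is an atom or a conjunction of atoms and hence existential positive, so $\Bs\models G(h(\va_{j}))$ follows from $\As\models G(\va_{j})$ with no increase in tuple length. This is precisely the step that fails for clique guards, where $\cliquen{n}$ introduces existential quantifiers and an image clique of size $\leq k$ need not arise from any tuple of length $\leq k$ in $\As$ in a functorial way; the argument therefore makes essential use of the restriction to atom and loose guards. Once (iii) is established, extending $\Gk^{\gd}$ to a functor via $\Gk^{\gd}f=(f\circ\ve)^{*}$ and defining $\delta_{\As}=\id_{\As}^{*}$ proceed verbatim as in the unbounded case, and the comonad axioms are inherited.
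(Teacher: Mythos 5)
Your proposal is correct and takes essentially the same route as the paper: the paper's own proof is literally the one-line observation that one inspects the unbounded-case arguments and checks that bounding the size of guarded sets causes no trouble, and your items (i)--(iii) are exactly that inspection carried out explicitly, with the substantive point correctly located in the coextension step (the sets $V_j$ must remain $k$-guarded, which holds because atom and loose guard witnesses are preserved by homomorphisms without any growth in tuple length, and which is what breaks for clique guards). The only quibble is notational: the witnessing tuple for $V_j$ lives in $\Gk^{\gd}\As$ and is pushed forward along $h:\Gk^{\gd}\As\to\Bs$, so the preservation statement should read $\Bs\models G(h(\lb p,\va_j\rb))$ rather than $\Bs\models G(h(\va_j))$, but this does not affect the argument.
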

\begin{proof}
This simply involves inspecting the previous proofs, and noting that bounding the size of the guarded sets causes no trouble.
\end{proof}

We now turn to the connection between coKleisli morphisms for these comonads, and winning strategies for Duplicator in guarded simulation games.

We fix $\sg$-structures $\As$, $\Bs$.
\begin{restatable}{theorem}{TheoremSimulation}
For each notion of guarding $\gd$, there is a bijective correspondence between:
\begin{enumerate}
    \item CoKleisli morphisms $\GG^{\gd} \As \to \Bs$.
    \item Winning strategies for Duplicator in the $\gd$-guarded simulation game from $\As$ to $\Bs$.
\end{enumerate}
Thus $\As \gsim \Bs$ iff there is a coKleisli morphism $\GG^{\gd} \As \to \Bs$.
\end{restatable}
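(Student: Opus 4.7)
The plan is to construct explicit maps in both directions and then verify they are mutually inverse. In one direction, given a coKleisli morphism $h : \GG^{\gd}\As \to \Bs$, I would define a Duplicator strategy as follows: at round $n$, after Spoiler has played $X_1, \ldots, X_n$ forming a play $p = [X_1,\ldots,X_n]$, let $Y_n := \{ h(\lb p, a\rb) \mid a \in X_n\}$ and define the partial map $\varphi_n : X_n \to Y_n$ by $\varphi_n(a) := h(\lb p, a\rb)$. That $Y_n$ is a $\gd$-guarded set and $\varphi_n$ is a partial homomorphism follows immediately from Lemma \ref{lem:guarded-set-properties} applied to $S_p$, together with the fact that $h$ preserves relations. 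The consistency condition $\varphi_{n+1}|_X = \varphi_n|_X$ on $X = X_n \cap X_{n+1}$ is exactly the statement that $\lb p, a \rb = \lb p', a \rb$ when $a \in X_n \cap X_{n+1}$ and $p'$ extends $p$, which is built into the equivalence relation defining $\GG^{\gd}\As$.

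In the other direction, given a winning strategy $\sigma$ producing $\varphi_n : X_n \to Y_n$ at each round, define $h_\sigma : \GG^{\gd}\As \to \Bs$ by $h_\sigma(\lb p, a\rb) := \varphi_n(a)$ where $p = [X_1,\ldots,X_n]$ and $\varphi_n$ is the strategy response to $p$. The main obstacle is showing this is well-defined on equivalence classes: if $\lb p, a\rb = \lb q, a\rb$ then the greatest common prefix $p \sqcap q$ is non-empty and $a$ belongs to every guarded set along the path between $p$ and $q$; applying the consistency condition $\varphi_{m+1}|_{X} = \varphi_m|_X$ step by step along this path shows that the strategy assigns the same value to $a$ at $p$ and at $q$. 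Once well-definedness is established, the homomorphism property is routine: by the definition of $\RGA$, any tuple in $\RGA$ arises from a single play $p$ and a tuple $(a_1,\ldots,a_r) \in \RA$ with $\{a_1,\ldots,a_r\} \subseteq \last(p)$, so its image is $(\varphi_n(a_1),\ldots,\varphi_n(a_r))$, which lies in $\RB$ since $\varphi_n$ is a partial homomorphism.

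Finally, I would verify that the two assignments are mutually inverse by unfolding the definitions: starting from $h$, the strategy it induces sends a play $p$ to the map $a \mapsto h(\lb p, a\rb)$, and reconstructing a homomorphism from this strategy recovers $h$ on every equivalence class; starting from $\sigma$, the induced homomorphism $h_\sigma$ recovers exactly $\varphi_n$ at each round $n$ by construction. The final statement $\As \gsim \Bs \iff$ there is a coKleisli morphism follows at once, since $\As \gsim \Bs$ means precisely that some winning strategy exists. The genuinely delicate step, as noted, is the well-definedness on equivalence classes, which is where the design of the equivalence relation on focussed plays exactly matches the coherence constraints imposed on Duplicator between rounds of the guarded simulation game.
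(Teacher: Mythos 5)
Your proposal is correct and follows essentially the same route as the paper's own proof: the same explicit constructions in both directions (composing $h$ with the embedding of a guarded set $X_n$ as $S_p$, and conversely reading off $h_\sigma(\lb p,a\rb)$ from the strategy's response to $p$), the same appeal to Lemmas~\ref{lem:guarded-set-preservation} and~\ref{lem:guarded-set-properties}, and the same identification of well-definedness on equivalence classes as the crux, resolved by walking the consistency condition along the path between $p$ and $q$. No gaps.
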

\begin{proof}
The arguments in both directions work uniformly in the choice of guard type.

Firstly, given a coKleisli morphism of type~$h : \GG^{\gd} \As \to \Bs$ we construct a winning Duplicator strategy for the $\gd$-guarded simulation game inductively on the round number. In the first round assume that Spoiler plays guarded set~$U_1$. By lemma~\ref{lem:guarded-set-properties}, $\ve_{\As}$ yields an embedding of~$U_1$ onto~$\{ \lb [U_1], u \rb \mid u \in U_1 \}$. The image under~$h$ of this set is guarded by lemma~\ref{lem:guarded-set-preservation}, and so the composite yields the required homomorphism of type~$U_1 \to \Bs$. Now assume Spoiler has played~$[U_1,\ldots,U_n]$ in the first~$n$ rounds, and now plays~$U_{n + 1}$. By the induction hypothesis, we have a successful strategy for the first~$n$-rounds, with final move the image of~$\{ [U_1,\ldots,U_{n + 1}], u \rb \mid u \in U_{n + 1} \}$ under~$h$.
Again we observe that~$U_{n + 1}$ embeds as~$\{ \lb [U_1,\ldots,U_{n + 1}], u \rb \mid u \in U_{n + 1} \}$, and composing this with~$h$ gives us a homomorphism of type~$U_{n + 1} \to \Bs$. Finally, this homomorphism will be consistent with the choices in the previous round by the definition of the quotient.

For the other direction, assume we have a winning strategy for Duplicator in the guarded simulation game from~$\As$ to~$\Bs$. We define our mapping:
\begin{equation*}
    \lb p, u \rb \mapsto h(u)
\end{equation*}
where~$h$ is the homomorphism indicated by the Duplicator strategy's response to the sequence of Spoiler moves appearing in~$p$. We must confirm that this is a well defined homomorphism. Assume:
\begin{equation*}
    \lb p, u \rb = \lb q, u \rb
\end{equation*}
Then there exists~$p \sqcap q$ such that:
\begin{equation*}
    \lb p, u \rb = \lb p \sqcap q, u \rb = \lb q, u \rb
\end{equation*}
with~$u$ appearing in every element of the extension of~$p \sqcap q$ to~$p$ and~$q$. Therefore Duplicator's strategy must yield homomorphisms agreeing on this element along these plays, and so~$\lb p, u \rb$ and~$\lb q, u \rb$ will agree. So the mapping is independent of choice of representative. We must also verify that relations are preserved. All relations in~$\GG^{\gd}(\As)$ are of the form:
\begin{equation*}
    R^{\GG^{\gd}(\As)}(\lb p, u_1 \rb, \ldots, \lb p, u_n \rb )
\end{equation*}
As Duplicator's strategy yields a homomorphism on the last element of~$p$, our construction preserves relations of this form.

The two constructions are clearly inverse to each other.

\end{proof}
Once again, for atom and loose guards, we have corresponding resource-bounded results.
\begin{theorem}
If~$\gd$ is either atom or loose guards, for $k>0$, there is a bijective correspondence between:
\begin{enumerate}
    \item CoKleisli morphisms $\Gk^{\gd} \As \to \Bs$.
    \item Winning strategies for Duplicator in the $\gd$-guarded $k$-bounded simulation game from $\As$ to $\Bs$.
\end{enumerate}
Thus $\As \gsimk \Bs$ iff there is a coKleisli morphism $\Gk^{\gd} \As \to \Bs$.
\end{theorem}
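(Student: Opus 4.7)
The plan is to mimic the proof of the preceding unbounded theorem, carefully checking that the $k$-bound is respected at every step. By construction, $\Gk^{\gd}$ is obtained from $\Gg$ by restricting the plays forming its universe to those consisting solely of $k$-guarded sets of the appropriate type. Thus coKleisli morphisms $\Gk^{\gd} \As \to \Bs$ are exactly the homomorphisms out of the restricted universe, and Spoiler's legal moves in the $k$-bounded simulation game are precisely the $k$-guarded sets. So both sides of the correspondence arise by the same restriction from the unbounded case, and the bijection I will construct is just the restriction of the bijection already established.

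For the forward direction, given $h : \Gk^{\gd} \As \to \Bs$, I would define the Duplicator strategy exactly as in the unbounded proof. If Spoiler has played $k$-guarded sets $[U_1, \ldots, U_n]$ followed by a $k$-guarded move $U_{n+1}$, form the set $\{ \lb [U_1, \ldots, U_{n+1}], u \rb \mid u \in U_{n+1} \}$. Since every $U_i$ is $k$-guarded, this set lies inside $\Gk^{\gd} \As$, so $h$ may be applied. Its image is $\gd$-guarded in $\Bs$ by Lemma~\ref{lem:guarded-set-preservation}, and is in fact $k$-guarded because a homomorphism carries a witnessing tuple of length at most $k$ to another tuple of length at most $k$. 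Composing with the embedding $U_{n+1} \incarrow \Gk^{\gd} \As$ obtained from the counit yields the required partial homomorphism, and the quotient enforces consistency with previous rounds.

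For the reverse direction, a winning Duplicator strategy in the $k$-bounded game assigns, to each legal sequence of $k$-guarded Spoiler moves, a partial homomorphism on the last set. Every element $\lb p, a \rb$ of $\Gk^{\gd} \As$ has $p$ a sequence of $k$-guarded sets, so the strategy's response $h_p$ is defined; we set $\lb p, a \rb \mapsto h_p(a)$. Well-definedness on equivalence classes and preservation of $\sg$-relations follow verbatim from the unbounded proof, since both arguments depend only on the definition of the quotient and the way relations are inherited, neither of which is affected by the bound. Inverseness of the two constructions is then immediate.

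The main (though minor) obstacle is verifying that the image of a $k$-guarded set under a homomorphism remains $k$-guarded for $\gd$ atom or loose guards, which is implicit in the proof of Lemma~\ref{lem:guarded-set-preservation}: the guard formula witnessing $k$-guardedness has a fixed number of variables $\leq k$, and applying the homomorphism to its witnessing tuple preserves both positivity-existential satisfaction and tuple length. This is precisely the point at which the clique-guarded case would fail, since the number of variables in $\cliquen{n}$ depends on the clique size rather than on a bounded resource, which is consistent with the theorem's restriction to atom and loose guards.
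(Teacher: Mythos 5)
Your proof is correct and takes the route the paper intends: the paper gives no separate argument for this theorem, treating it as the restriction of the unbounded correspondence, and you supply exactly the checks that restriction requires — in particular that Duplicator's responses remain $k$-guarded because the witnessing guard tuple has length at most $k$ and is carried to a tuple of the same length by the homomorphism. One small correction to your closing aside: the clique-guarded case does not fail at preservation of $k$-guardedness under homomorphisms (clique guards are also existential positive over a tuple of length $\leq k$, so that step would go through); the paper instead locates the failure in Lemma~\ref{lem:guarded-set-properties}, whose restriction to $\Gkg \As$ breaks down because exhibiting $S_p$ as a guarded set may require existential witnesses that force guarded sets of size greater than $k$ into the play.
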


By virtue of theorem~\ref{bisimlogth}, this yields a comonadic characterization of equivalence of structures modulo existential positive guarded formulas. Before extending this analysis to bisimulation and equivalence modulo all guarded formulas, we firstly study the coalgebras for the guarded comonads.

\section{Coalgebras}

Recall that a coalgebra for a comonad $(G, \varepsilon, \delta)$ is a morphism $\alpha : A \to G A$ such that the following diagrams commute:
\begin{center}
\begin{tikzcd}
A \ar[r, "\alpha"] \ar[rd, "\id_A"']
& G A \ar[d, "\ve_{A}"] \\
& A
\end{tikzcd}
$\qquad \qquad$
\begin{tikzcd}
A  \ar[r, "\alpha"] \ar[d, "\alpha"']
& G A \ar[d,  "\delta_{A}"] \\
G A  \ar[r, "G \alpha"] 
& G^2 A
\end{tikzcd}  
\end{center}
Note in particular that a $G$-coalgebra structure on $A$ makes it a retract of $G A$ via the counit $\epsA$. In our setting, this implies that coalgebra maps $\alpha : \As \to \GG \As$ are embeddings of $\sg$-structures. 
In \cite{abramsky2017pebbling,DBLP:conf/csl/AbramskyS18} it was shown that coalgebras for the Ehrenfeucht-\Fraisse, pebbling and modal comonads correspond to various forms of combinatorial decompositions of $\sg$-structures. This led to coalgebraic characterizations of important combinatorial parameters such as tree-depth and tree-width. We pursue a similar analysis here for the guarded comonads.

The appropriate notion of ``treelike decomposition'' of a structure in the guarded case is defined as follows. Let $\As$ be a $\sg$-structure.
We write $\Guarded(\As)$ for the set of $\gd$-guarded subsets of $\As$, and $\plays(\As)$ for the $\gd$-guarded plays, \ie the finite, non-empty sequences of $\gd$-guarded sets.
Given a function $\tau : A \to \plays(\As)$, let $\Pt$ be the image of $\tau$. 
As a sub-poset of $\plays(\As)$ under the prefix order, $\Pt$ is a forest. 
We say that $\tau$ is
a \emph{$\gd$-guarded decomposition of $\As$} if it satisfies the following conditions: 
\begin{enumerate}
\item \label{ax:reflex} It is \emph{reflexive}: For all $a \in A$, $a \in \lbl \circ \tau(a)$.

    \item \label{ax:relation-covering} It is~\emph{edge covering}; if $a$ and $b$ are adjacent in the Gaifman graph of $\As$,  then there exists~$p \in \Pt$ such that~$\{ a, b \} \subseteq \lbl(p)$. 
    \item It is \emph{minimal}: if $\tau(a) = p$, then for all $q \in \plays(\As)$, if $\lb p, a \rb\ = \lb q,a \rb$, then $p \preford q$.
    \item It is \emph{vertex connected}: for all $q \in \downset \Pt$, if $a \in \lbl(q)$, then $\lb \tau(a), a \rb = \lb q, a \rb$.
   \end{enumerate}


We derive some useful properties of decompositions.

\begin{lemma}\label{injlemm}
For $p, q \in \Pt$, if $\lambda(p) \subseteq \lambda(q)$, then $p \preford q$. In particular, $\lambda$ is injective on $\Pt$.
\end{lemma}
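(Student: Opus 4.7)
The plan is to chain together three of the four decomposition axioms: reflexivity, vertex connectedness, and minimality. The hypothesis $\lambda(p) \subseteq \lambda(q)$ together with reflexivity should immediately place a witnessing element in $\lambda(q)$; vertex connectedness then tells us that the equivalence class $\lb q, a\rb$ coincides with $\lb \tau(a),a\rb$; and minimality of $\tau$ upgrades this equality of equivalence classes to a prefix inequality between the actual sequences.

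In more detail, I would start by choosing any $a \in A$ with $\tau(a) = p$ (possible because $p \in P_\tau$). Reflexivity gives $a \in \lambda(p)$, and then the assumed inclusion $\lambda(p) \subseteq \lambda(q)$ gives $a \in \lambda(q)$. Since $q \in P_\tau \subseteq \downset P_\tau$, vertex connectedness applies to the pair $(a,q)$, yielding
\[
\lb \tau(a), a \rb \;=\; \lb q, a \rb, \qquad \text{i.e.} \qquad \lb p, a \rb \;=\; \lb q, a \rb.
\]
Now minimality of the decomposition $\tau$, applied at the element $a$ with $\tau(a) = p$, asserts exactly that any $q \in \plays(\As)$ with $\lb p,a\rb = \lb q,a\rb$ satisfies $p \preford q$. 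This concludes the first assertion.

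For the parenthetical injectivity statement, suppose $p, q \in P_\tau$ with $\lambda(p) = \lambda(q)$. Applying the first part in both directions yields $p \preford q$ and $q \preford p$, hence $p = q$ since $\preford$ is a partial order.

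I do not expect a real obstacle here: all the work is done by having set up the right axioms for decompositions. The only mildly subtle point is ensuring that vertex connectedness is indeed available for $q$, which is fine because $P_\tau$ is trivially contained in its own down-set. No explicit manipulation of the equivalence relation on focussed plays is needed beyond invoking the decomposition conditions.
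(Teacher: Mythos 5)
Your argument is correct and matches the paper's proof essentially verbatim: both pick $a$ with $\tau(a)=p$, use reflexivity to get $a\in\lambda(p)\subseteq\lambda(q)$, apply vertex connectedness to conclude $\lb p,a\rb = \lb q,a\rb$, and then invoke minimality to obtain $p\preford q$. The injectivity claim follows by antisymmetry exactly as you say.
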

\begin{proof}
Suppose $\lambda(p) \subseteq \lambda(q)$, where $p = \tau(a)$ and $q = \tau(b)$. Then $a \in \lbl(p) \subseteq \lbl(q)$,  and by vertex connectedness, $\lb p, a \rb = \lb q, a \rb$.
By minimality,  $p \preford q$.
\end{proof}

\begin{lemma}\label{caplemm}
For $p, q \in \Pt$, if $a \in \lbl(p) \cap \lbl(q)$, then the meet $p \wedge q$ exists in $\Pt$, and $a \in \lbl(p \wedge q)$.
\end{lemma}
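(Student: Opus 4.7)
The plan is to exhibit $\tau(a)$ as a common lower bound of $p$ and $q$ in $\Pt$ to secure existence of the meet, then to read off $a \in \lbl(p \wedge q)$ from the explicit form of the equivalence on focussed plays. Recall that the order on $\Pt$ is the prefix order inherited from $\plays(\As)$, so the meet in the sub-forest $\Pt$ is simply the largest element of $\Pt$ lying below both $p$ and $q$ in that order.

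First I would invoke vertex connectedness at both $p$ and $q$: since $p, q \in \Pt \subseteq \downset \Pt$ and $a \in \lbl(p) \cap \lbl(q)$, we obtain $\lb \tau(a), a \rb = \lb p, a \rb$ and $\lb \tau(a), a \rb = \lb q, a \rb$. Applying minimality to each equation then yields $\tau(a) \preford p$ and $\tau(a) \preford q$. Hence $\tau(a) \in \Pt$ is a common lower bound of $p$ and $q$ in $\Pt$, which witnesses that the meet $m := p \wedge q$ exists there.

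For the second assertion, because $m$ is the greatest element of $\Pt$ below both $p$ and $q$ while $\tau(a)$ is also such, we have $\tau(a) \preford m \preford p$, placing $m$ inside the $\plays(\As)$-interval $[\tau(a), p]$. Now unfold the relation $\lb \tau(a), a \rb = \lb p, a \rb$ using the definition of the equivalence on focussed plays: since $\tau(a) \preford p$, their greatest common prefix is $\tau(a)$ itself, and the unique path between $\tau(a)$ and $p$ is $\{ \last(u) \mid u \in [\tau(a), p] \}$. Therefore $a \in \last(u)$ for every $u$ in this interval. Specialising to $u = m$ gives $a \in \lbl(m)$, as required. The one point to keep straight is that the forest meet $m$ in $\Pt$ can be a strict prefix of the $\plays(\As)$-meet $p \sqcap q$, but it still lies inside $[\tau(a), p]$, which is all that is needed to propagate $a$ into $\lbl(m)$; beyond this the argument is a direct chaining of vertex connectedness, minimality, and the definition of the equivalence.
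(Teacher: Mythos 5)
Your proof is correct and follows essentially the same route as the paper's: vertex connectedness at $p$ and at $q$ gives $\lb p,a\rb = \lb \tau(a),a\rb = \lb q,a\rb$, minimality makes $\tau(a)$ a common lower bound so the meet exists in the forest $\Pt$, and $a \in \lbl(p\wedge q)$ follows because $p \wedge q$ lies on the interval $[\tau(a),p]$ along which condition (iii) of the equivalence forces $a$ to appear. The paper leaves this last step implicit; your unfolding of it, including the remark that the meet in $\Pt$ may be a strict prefix of $p \sqcap q$, is a correct elaboration rather than a different argument.
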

\begin{proof}
Using vertex connectedness twice, $\lb p,a \rb = \lb \tau(a),a \rb = \lb q,a \rb$, and by minimality, $\tau(a)$ is a lower bound of $p$ and $q$. Hence $p \wedge q$ exists, and $a \in \lbl(p \wedge q)$.
\end{proof}

\begin{restatable}{lemma}{GuardedSetCovering}
\label{GuardedSetCovering}
If~$\tau$ is a~$\gd$-guarded decomposition of a structure~$\As$, then every
clique in the Gaifman graph of $\As$ is contained in~$\lbl(p)$ for some~$p \in \Pt$.
\end{restatable}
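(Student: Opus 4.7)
The plan is to proceed by induction on the size $n$ of the clique, closely mirroring the two-case meet argument used in the proof of Lemma~\ref{lem:guarded-set-properties}(2). The base case $n=1$ is immediate from reflexivity: $\tau(a) \in \Pt$ contains $a$ in its label. For the inductive step, given a clique $\{a_1, \ldots, a_n, a\}$, the inductive hypothesis supplies some $s \in \Pt$ with $\{a_1, \ldots, a_n\} \subseteq \lbl(s)$, while edge covering applied to each edge $\{a_i, a\}$ supplies $q_i \in \Pt$ with $\{a_i, a\} \subseteq \lbl(q_i)$. The goal is to combine these into a single $t \in \Pt$ with $\{a_1, \ldots, a_n, a\} \subseteq \lbl(t)$.

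The key tool I would establish first is a monotonicity property: if $p, q \in \Pt$ and $b \in \lbl(p) \cap \lbl(q)$, then $b \in \lbl(r)$ for every $r$ with $p \preford r \preford q$ in $\plays(\As)$. This follows because vertex connectedness yields $\lb p, b \rb = \lb \tau(b), b \rb = \lb q, b \rb$, and the definition of the equivalence relation on focussed plays forces $b$ into the label of every guarded set on the unique prefix-order path between $p$ and $q$, on which $r$ lies.

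With this in hand, Lemma~\ref{caplemm} places each $s \wedge q_i$ in $\Pt$ with $a_i \in \lbl(s \wedge q_i)$, and each $q_i \wedge q_n$ in $\Pt$ with $a \in \lbl(q_i \wedge q_n)$. The elements $s \wedge q_i$ are linearly ordered below $s$; WLOG arrange them so that $s \wedge q_1 \preford \cdots \preford s \wedge q_n$. Similarly the $q_i \wedge q_n$ are linearly ordered below $q_n$; let $q_j \wedge q_n$ be the minimum. Since $s \wedge q_n$ and $q_j \wedge q_n$ both sit below $q_n$, they are comparable, yielding two cases. In Case~I, where $s \wedge q_n \preford q_j \wedge q_n$, take $t := q_j \wedge q_n$: then $a \in \lbl(t)$ by Lemma~\ref{caplemm}, and for each $i$ the chain $s \wedge q_i \preford s \wedge q_n \preford t \preford q_i \wedge q_n \preford q_i$ yields $a_i \in \lbl(t)$ via monotonicity. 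In Case~II, where $q_j \wedge q_n \preford s \wedge q_n$, take $t := s \wedge q_n$: then $a \in \lbl(t)$ by monotonicity along $q_j \wedge q_n \preford t \preford q_n$, and $a_i \in \lbl(t)$ by monotonicity along $s \wedge q_i \preford t \preford s$.

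The main obstacle is extracting the monotonicity property cleanly from the equivalence relation on focussed plays; once that is in place, the remainder is careful bookkeeping of prefix comparisons that directly parallels the argument for Lemma~\ref{lem:guarded-set-properties}(2).
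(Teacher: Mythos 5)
Your proposal is correct and follows essentially the same route as the paper's proof: induction on clique size, the two meets $s \wedge q_n$ and $q_j \wedge q_n$ compared below $q_n$, and the same two-case analysis, with your explicitly stated ``monotonicity'' property being exactly what the paper invokes implicitly when it cites vertex connectedness together with the definition of the equivalence relation on focussed plays. No gaps.
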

\begin{proof}
We  argue by induction on the cardinality of the clique. The base case of a singleton is trivial.

Now assume~$C \cup \{ d \}$ is a clique. By the induction hypothesis on~$C$, there exists~$p \in \Pt$ such that $C \subseteq \lbl(p)$. We seek an element $q \in \Pt$ such that $C \cup \{d \} \subseteq \lbl(q)$.

By the clique condition and edge covering, for each~$a \in C$ we must have~$q^{a} \in \Pt$ such that
    $\{ a, d \} \subseteq \lbl(q^{a})$.
As each of the elements~$p \wedge q^{a}$ is below~$p$, they can be linearly ordered, say as
\[ p \wedge q^{a_1} \, \preford \cdots \preford \, p \wedge q^{a_n} . \]
Let $r := p \wedge q^{a_n}$. For each $i$, $a_i \in \lbl(q^{a_i}) \cap \lbl(p)$, so by vertex connectedness, $C \subseteq \lbl(r)$.

Now consider the elements $q^{a_i} \wedge q^{a_n}$, $i = 1, \ldots , n-1$. Since these are all below $q^{a_n}$, they are linearly ordered, say with least element $s := q^{a_j} \wedge q^{a_n}$.
Since $r$ and $s$ are both below $q^{a_n}$, they are comparable in the prefix order. We consider two cases:
\begin{enumerate}
    \item $r \preford s$. Since $d \in \lbl(q^{a_j}) \cap \lbl(q^{a_n})$, by vertex connectedness, $d \in \lbl(s)$. For each $i$, $a_i \in \lbl(p) \cap \lbl(q^{a_i})$, hence by vertex connectedness, $a_i \in \lbl(s)$. Thus we can take $q := s$.
    \item $s \preford r$. Since $d \in \lbl(q^{a_j}) \cap \lbl(q^{a_n})$, by vertex connectedness, $d \in \lbl(r)$. Thus we can take $q := r$.
\end{enumerate}
\end{proof}

As an immediate corollary of this result, we have
\begin{proposition}\label{cliquecorrprop}
If $\As$ has a $\gd$-guarded  decomposition $\tau$, then 
\begin{enumerate}
    \item  Every maximal $\gd$-guarded subset of $A$ is $\lbl(p)$ for some $p \in \Pt$.
    \item Every clique in the Gaifman graph of $\As$ is $\gd$-guarded.
\end{enumerate}
\end{proposition}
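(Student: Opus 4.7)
The plan is to derive both parts as essentially immediate consequences of Lemma \ref{GuardedSetCovering}, combined with two structural facts already established: that every $\gd$-guarded set is a clique in the Gaifman graph (as emphasized in the discussion after the definitions of atom/loose/clique guards), and that $\gd$-guarded sets are closed under subsets (Lemma \ref{lem:guarded-set-preservation}). The only nontrivial observation needed in addition is that for any $p \in \Pt$, the set $\lbl(p)$ is itself $\gd$-guarded, which holds by construction since $p \in \plays(\As)$ is a finite non-empty sequence of $\gd$-guarded sets, so its last element is a $\gd$-guarded set in $\As$.

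For part (2), I would start with an arbitrary clique $C$ in the Gaifman graph of $\As$. By Lemma \ref{GuardedSetCovering}, there is some $p \in \Pt$ with $C \subseteq \lbl(p)$. Since $\lbl(p)$ is $\gd$-guarded and the $\gd$-guarded sets are downward closed under $\subseteq$ by Lemma \ref{lem:guarded-set-preservation}, the set $C$ is itself $\gd$-guarded.

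For part (1), let $X$ be a maximal $\gd$-guarded subset of $A$. Since every $\gd$-guarded set is a clique in the Gaifman graph, $X$ is in particular a clique. Applying Lemma \ref{GuardedSetCovering} again, there exists $p \in \Pt$ with $X \subseteq \lbl(p)$. The set $\lbl(p)$ is $\gd$-guarded, so by maximality of $X$ we must have $X = \lbl(p)$, as required.

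There is no real obstacle here: the whole argument is bookkeeping, and the hardest step, the covering property, has been done in Lemma \ref{GuardedSetCovering}. The only place to be careful is to make the implicit fact that $\lbl(p)$ is a $\gd$-guarded set explicit, since without it neither part goes through; once that is observed, both statements follow in one line each.
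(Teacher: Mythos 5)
Your proposal is correct and matches the paper's intent exactly: the paper states the proposition as "an immediate corollary" of Lemma~\ref{GuardedSetCovering}, and the details you supply (that $\lbl(p)$ is itself $\gd$-guarded since plays are sequences of guarded sets, that guarded sets are closed under subsets by Lemma~\ref{lem:guarded-set-preservation}, and that guarded sets are Gaifman cliques) are precisely the bookkeeping the paper leaves implicit.
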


\begin{proposition}
If $\As$ has a $\gd$-guarded decomposition $\tau$, then $|A| \geq |\MaxGuarded(\As)|$, where $\MaxGuarded(\As)$ is the set of maximal $\gd$-guarded subsets of $A$.
\end{proposition}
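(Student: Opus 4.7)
The plan is to chain two inequalities: $|A| \geq |\Pt|$ and $|\Pt| \geq |\MaxGuarded(\As)|$. The first is immediate from the definition, since $\Pt$ is defined to be the image of the function $\tau : A \to \plays(\As)$, so any surjection from $A$ onto $\Pt$ gives $|A| \geq |\Pt|$.

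For the second inequality, I would exploit Proposition~\ref{cliquecorrprop}(1), which gives a surjection $\Pt \twoheadrightarrow \MaxGuarded(\As)$ defined by $p \mapsto \lbl(p)$: every maximal $\gd$-guarded subset arises as the label of some play in $\Pt$. To turn this surjection into the desired cardinality bound I would invoke Lemma~\ref{injlemm}, which states that $\lbl$ is injective on $\Pt$. Indeed, if $p, q \in \Pt$ and $\lbl(p) = \lbl(q)$, then both $\lbl(p) \subseteq \lbl(q)$ and $\lbl(q) \subseteq \lbl(p)$ hold, so the lemma gives $p \preford q$ and $q \preford p$, hence $p = q$. Consequently the surjection $p \mapsto \lbl(p)$ is in fact injective, and so a bijection between $\Pt$ and $\MaxGuarded(\As)$ only if every play's label happens to be maximal; in general, we obtain an injection $\MaxGuarded(\As) \hookrightarrow \Pt$ by choosing, for each maximal guarded set $M$, the unique $p \in \Pt$ with $\lbl(p) = M$.

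Composing the two yields $|A| \geq |\Pt| \geq |\MaxGuarded(\As)|$, as required.

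There is essentially no obstacle here; the proposition is a direct corollary of the preceding results. The substantive work has already been done in Lemma~\ref{injlemm} (injectivity of labels on $\Pt$, driven by minimality and vertex connectedness of the decomposition) and in Lemma~\ref{GuardedSetCovering} (every clique is contained in some $\lbl(p)$), which together yield Proposition~\ref{cliquecorrprop}(1). The present proposition is really just the cardinal arithmetic packaging of those structural facts.
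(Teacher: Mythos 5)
Your argument is correct and is essentially the paper's own proof, which simply notes that by Proposition~\ref{cliquecorrprop}(1) the composite $\lbl \circ \tau$ maps $A$ surjectively onto $\MaxGuarded(\As)$; you have merely unfolded this into the two steps $|A| \geq |\Pt| \geq |\MaxGuarded(\As)|$. The appeal to Lemma~\ref{injlemm} for injectivity of $\lbl$ on $\Pt$ is harmless but not needed, since a surjection already suffices for the cardinality bound.
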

\begin{proof}
By Proposition~\ref{cliquecorrprop}(1), $\lbl \circ \tau$ maps $A$ surjectively onto $\MaxGuarded(\As)$.
\end{proof}
\begin{example}
In the case of graphs, with a single binary edge relation, then if $\As$ is any graph with an atom-guarded decomposition, the number of edges of $A$ is less than or equal to the number of vertices. In fact, one can show that a simple graph has an atom-guarded decomposition iff it is acyclic, \ie a forest. Thus guarded decompositions generalise the unravelling construction from modal logic \cite{blackburn2002modal}.
\end{example}

It will be convenient to have an explicit formulation of the conditions on a homomorphism $\gamma : \As \to \Gg \As$ imposed by the two coalgebra diagrams.
\begin{lemma}\label{coalgdiagramlemm}
\begin{enumerate}
    \item The condition imposed by the first diagram is 
    that, for all $a \in A$, if $\gamma(a) = \lb p,b \rb$, then $b = a$.
    \item The condition imposed by the second diagram is that, if $\gamma(a) = \lb [U_1,\ldots, U_n],a \rb$, then for all $i$, $1 \leq i \leq n$, and for all $u \in U_i$, $\gamma(u) = \lb [U_1, \ldots , U_i], u \rb$.
\end{enumerate}
\end{lemma}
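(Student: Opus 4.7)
Both parts unfold the coalgebra diagrams pointwise. Part~1 is immediate: the counit triangle asks $\ve_{\As}(\gamma(a)) = a$, and by definition $\ve_{\As}(\lb p, b \rb) = b$, so the equation collapses to $b = a$.

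For part~2, I would fix a representative $\gamma(a) = \lb [U_1, \ldots, U_n], a \rb$ (using part~1 for the focus) and unfold both sides of the comultiplication square using the explicit formula for $\delta_{\As}$ given just before the lemma and the identity $\GG\gamma = (\gamma \circ \ve_{\As})^*$ from the Kleisli presentation. This yields
\begin{align*}
\delta_{\As}(\gamma(a)) &= \lb [T_1, \ldots, T_n], \gamma(a) \rb, \quad T_j = \{ \lb [U_1, \ldots, U_j], a_j \rb \mid a_j \in U_j \},\\
\GG\gamma(\gamma(a)) &= \lb [V_1, \ldots, V_n], \gamma(a) \rb, \quad V_j = \{ \gamma(a_j) \mid a_j \in U_j \}.
\end{align*}
Since the second coordinates already agree, commutativity in $\GG^2 \As$ reduces to the equivalence of the two outer plays $[T_1,\ldots,T_n]$ and $[V_1,\ldots,V_n]$ paired with the common focus $\gamma(a)$. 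The ($\Leftarrow$) implication is then immediate: the stated condition gives $\gamma(a_j) = \lb [U_1,\ldots,U_j], a_j \rb$ for every $a_j \in U_j$ and every $j$, so $V_j = T_j$ set-theoretically and the two representatives coincide literally.

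For ($\Rightarrow$) I would reduce to a minimal representative, chosen so that $a \notin U_j$ for $j < n$ (any representative can be truncated to this form using the equivalence). Minimality then upgrades the outer equivalence $\lb [T_1,\ldots,T_n], \gamma(a) \rb \sim \lb [V_1,\ldots,V_n], \gamma(a) \rb$ to literal equality of the two plays: a strictly shorter common prefix would force $\gamma(a)$ into some $T_k$ with $k < n$, which unpacks to $\lb [U_1,\ldots,U_n], a\rb \sim \lb [U_1,\ldots,U_k], a\rb$ in $\GG \As$ and hence requires $a \in U_j$ for $k \leq j \leq n$, contradicting minimality. Once $T_j = V_j$ has been secured, each element $\lb [U_1,\ldots,U_j], a_j \rb \in V_j$ must equal some $\gamma(u')$ with $u' \in U_j$, and part~1 forces $u' = a_j$, giving the stated condition.

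The main obstacle is the two-tier bookkeeping of the equivalence in $\GG^2 \As$, where there is an outer equivalence on sequences of guarded sets drawn from $\GG \As$ and an inner equivalence inside $\GG \As$ itself; the minimality reduction above is there precisely to collapse the outer layer cleanly so that the condition can be read off coordinate-wise.
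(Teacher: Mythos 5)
Your unpacking of both diagrams is essentially right, and it supplies exactly the content that the paper's one-line proof (``unpack the definitions'') omits. In particular, you correctly identify that the forward direction of part~2 is \emph{not} automatic: the equality $\lb [T_1,\ldots,T_n],\gamma(a)\rb = \lb [V_1,\ldots,V_n],\gamma(a)\rb$ in $\GG^2\As$ only forces the two plays to share a non-empty common prefix with $\gamma(a)$ lying in every set along the connecting path, and your minimality argument is the correct way to rule out a proper common prefix. Part~1 and the ($\Leftarrow$) direction of part~2 are fine as written.

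The one genuine gap is the claim that the case of an arbitrary representative ``reduces'' to the minimal one by truncation. Truncating $[U_1,\ldots,U_n]$ to the minimal representative $[U_1,\ldots,U_k]$ of $\gamma(a)$ only yields the condition for $i \leq k$; it says nothing about the discarded sets $U_{k+1},\ldots,U_n$, and for those the condition can genuinely fail while both coalgebra diagrams commute. Take $\sg$ with one binary relation, $A = \{a,u\}$, $R^{\As} = \{(a,u)\}$, and set $\gamma(a) = \lb [\{a\}], a\rb$, $\gamma(u) = \lb [\{a\},\{a\},\{a,u\}], u\rb$. One checks this is a coalgebra, and $[\{a\},\{a,u\}]$ is a (non-minimal) representative of $\gamma(a)$ since $a$ lies in both sets; yet $\gamma(u) \neq \lb [\{a\},\{a,u\}], u\rb$, because the plays $[\{a\},\{a\},\{a,u\}]$ and $[\{a\},\{a,u\}]$ meet only at $[\{a\}]$ and $u \notin \{a\}$. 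So part~2, read over \emph{all} representatives of $\gamma(a)$, is strictly stronger than commutativity of the second diagram. The statement your argument actually proves --- and the one the paper needs --- is the version where $[U_1,\ldots,U_n]$ is the minimal representative $\gamma(a)^{\dg}$: the forward direction is used only in that form in Proposition~\ref{coalgforestprop}, while the backward direction, which you prove for an arbitrary representative, is what Proposition~\ref{forestcoalgprop} uses. A further minor point: minimality of $[U_1,\ldots,U_n]$ in the class is equivalent to $n=1$ or $a \notin U_{n-1}$, not to $a \notin U_j$ for all $j < n$; the stronger normal form is not always reachable (e.g.\ $a \in U_1$, $a \notin U_{n-1}$), but your contradiction only needs that no proper prefix represents $\gamma(a)$, so the argument survives once the normal form is stated correctly.
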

\begin{proof}
These follow directly by unpacking the definitions of the comonad constructions.
\end{proof}
We can use a $\gd$-guarded decomposition $\tau$ of $\As$ to define a function $\gamma : A \to \Gg A$. For each $a \in A$, we define
$\gamma(a) := \lb \tau(a), a \rb$.

\begin{proposition}\label{forestcoalgprop}
The function $\gamma$ is a $\Gg$-coalgebra.
\end{proposition}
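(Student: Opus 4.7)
The plan is to verify in turn that $\gamma$ lands in $\Gg \As$, that it is a $\sg$-homomorphism, and that it satisfies the two coalgebra equations in the explicit form given by Lemma~\ref{coalgdiagramlemm}. Well-definedness is immediate from the reflexivity axiom~(\ref{ax:reflex}), since $a \in \lbl(\tau(a))$ ensures that $\langle \tau(a), a \rangle$ is a legitimate focussed play.

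For homomorphicity, suppose $R^{\As}(a_1,\ldots,a_r)$ holds. The support of this tuple is a clique in the Gaifman graph of $\As$, so by Lemma~\ref{GuardedSetCovering} there exists some $p \in \Pt$ with $\{a_1,\ldots,a_r\} \subseteq \lbl(p)$. Vertex connectedness applied at each $a_i$ then yields $\lb \tau(a_i), a_i \rb = \lb p, a_i \rb$, so $\gamma(a_i) = \lb p, a_i \rb$ uniformly in $i$. By the definition of $\RGA$, we conclude $\RGA(\gamma(a_1),\ldots,\gamma(a_r))$.

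The first coalgebra equation follows directly: $\ve_{\As}(\gamma(a)) = \ve_{\As}(\lb \tau(a), a \rb) = a$, matching the condition in Lemma~\ref{coalgdiagramlemm}(1). For the second equation, write $\tau(a) = [U_1,\ldots,U_n]$; we must show that for each $1 \leq i \leq n$ and each $u \in U_i$, $\gamma(u) = \lb [U_1,\ldots,U_i], u \rb$. Let $p_i := [U_1,\ldots,U_i]$, which lies in $\downset \Pt$ since it is a prefix of $\tau(a) \in \Pt$. Because $u \in U_i = \lbl(p_i)$, vertex connectedness applied at $u$ gives $\lb \tau(u), u \rb = \lb p_i, u \rb$, which is exactly $\gamma(u) = \lb [U_1,\ldots,U_i], u \rb$.

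The only step that requires real work is homomorphicity, and the main obstacle there is extracting a single play $p$ covering the whole tuple; this is precisely where Lemma~\ref{GuardedSetCovering} (and in particular the clique-covering induction that uses minimality and vertex connectedness) does the heavy lifting. Once a common covering play is in hand, vertex connectedness instantly collapses the individual $\tau(a_i)$'s to the same play as needed. The two coalgebra equations, by contrast, reduce immediately to the explicit unpacking in Lemma~\ref{coalgdiagramlemm} and become straightforward applications of reflexivity and vertex connectedness to prefixes of $\tau(a)$.
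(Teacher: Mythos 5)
Your proof is correct and follows essentially the same route as the paper's: the homomorphism property via Lemma~\ref{GuardedSetCovering} plus vertex connectedness, and the two coalgebra equations via the explicit unpacking in Lemma~\ref{coalgdiagramlemm}, using reflexivity and vertex connectedness on prefixes of $\tau(a)$. The only (immaterial) difference is that you collapse all the $\gamma(a_i)$ directly onto the covering play $p$, whereas the paper first passes to the maximum of the $\tau(a_i)$ below $p$; both yield the common play required by the definition of $\RGA$.
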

\begin{proof}
Firstly, we show that $\gamma$ is a $\sg$-homomorphism. Given $\RA(a_1,\ldots,a_n)$, by Lemma~\ref{GuardedSetCovering}, $\{a_1,\ldots,a_n\} \subseteq \lbl(p)$ for some $p \in \Pt$. Let $p_i = \tau(a_i)$, $1 \leq i \leq n$. By vertex connectedness and minimality, all the $p_i$ are below $p$. Hence they are linearly ordered, with maximum some $p_j$. Now for each $i$, by vertex connectedness, $\gamma(a_i) = \lb p_i, a_i \rb = \lb p_{j}, a_i \rb$, and hence $R^{\Gg \As}(\lb p_{1}, a_1 \rb, \ldots , \lb p_{n}, a_n \rb)$ as required.

We verify the coalgebra diagrams in the form given in Lemma~\ref{coalgdiagramlemm}.
The first coalgebra diagram holds directly from the  reflexivity of $\tau$.
For the second diagram, 
if $u \in U_i$, then by vertex connectedness,
 $\gamma(u) = \lb \tau(u), u \rb = \lb [U_1, \ldots , U_i], u \rb$.
\end{proof}

To go from $\GG^{\gd}$-coalgebras to $\gd$-guarded  decompositions, we firstly  look more closely into the structure of $\sim$-equivalence classes in $\GG \As$.
\begin{lemma}
The elements of an equivalence class $\lb p, a \rb$ are tree-ordered by the prefix order on the first components.
\end{lemma}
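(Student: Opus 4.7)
The plan is to show that the set of first components of representatives of the class $\lb p, a \rb$ is closed under binary meets in the ambient prefix forest and has a least element. Since the prefix order on plays is already a forest --- the downset of any sequence is the finite linear chain of its prefixes --- these two properties suffice to make this set, with the restricted prefix order, a tree.

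First I would establish closure under binary meets. Suppose $\langle q_1, a \rangle \sim \langle q_2, a \rangle$; the goal is $\langle q_1 \sqcap q_2, a \rangle \sim \langle q_1, a \rangle$. The meet $q_1 \sqcap q_2$ is non-empty by assumption and hence is itself a play. The path segment from $q_1$ down to $q_1 \sqcap q_2$, namely $\{\last(u) \mid u \in [q_1 \sqcap q_2, q_1]\}$, sits inside the full path between $q_1$ and $q_2$, along which $a$ already appears; in particular $a \in \last(q_1 \sqcap q_2)$, so $\langle q_1 \sqcap q_2, a \rangle$ is a well-formed focussed play, and conditions (i)--(iii) of the equivalence relation are immediate. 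By symmetry, $\langle q_1 \sqcap q_2, a \rangle \sim \langle q_2, a \rangle$ as well.

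To produce a least element I would fix any representative with first component $p_0$. By the previous step, $p_0 \sqcap q$ is again the first component of a representative for every such $q$, and each $p_0 \sqcap q$ lies in the finite linear chain of prefixes of $p_0$. This chain has a minimum, realised by some $p_0 \sqcap q^{\star}$, which by closure under binary meets is itself the first component of a representative and is a prefix of the first component of every other representative --- hence the required root. I do not expect a genuine obstacle: the argument essentially unpacks the definition of $\sim$. The only point requiring care is the path-membership condition, which restricts cleanly when replacing $q_2$ by $q_1 \sqcap q_2$ precisely because the new path literally sits inside the old one.
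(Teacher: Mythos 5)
Your argument is correct and takes essentially the same route as the paper, whose proof is a one-line assertion that, by the definition of $\sim$, the plays in an equivalence class form a subtree of the prefix forest. You simply supply the details that assertion leaves implicit (closure under binary meets via restriction of the path-membership condition, and extraction of a least element from the finite chain of prefixes of a fixed representative), and both steps check out.
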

\begin{proof}
The prefix order on plays is a forest order. By the definition of the $\sim$-equivalence relation, the plays in an equivalence class must form a subtree.
\end{proof}
Thus we can write $\lb p,a \rb^{\dg} = \langle q, a \rangle$, where $q \preford p$ is the least play in the equivalence class under the prefix ordering.

Given a coalgebra $\gamma : \As \to \GG^{\gd} \As$, we let $\tau(a) = p$, where $\gamma(a)^{\dg} = \langle p, a \rangle$.
This defines a map $\tau : A \to \plays(\As)$, from $A$ to $\gd$-guarded plays on $\As$.
\begin{proposition}\label{coalgforestprop}
$\tau$ is a $\gd$-guarded decomposition of $\As$.
\end{proposition}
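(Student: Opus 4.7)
The plan is to verify in turn the four axioms of a $\gd$-guarded decomposition, making use of the explicit unpacking of the two coalgebra laws provided by Lemma~\ref{coalgdiagramlemm}. Reflexivity and minimality are essentially built into the definition of $\tau$: the former because $\gamma(a)^{\dg} = \langle \tau(a), a\rangle$ is a focussed play, which forces $a \in \lbl(\tau(a))$ by definition; the latter because $\tau(a)$ was chosen precisely as the unique least element, under the prefix ordering, of the $\sim$-equivalence class of $\gamma(a)$.

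For edge covering, I would start from an adjacency of $a$ and $b$ in the Gaifman graph, witnessed by a tuple $\va = (a_1,\ldots,a_n)$ containing both with $R^{\As}(\va)$. Because $\gamma$ is a $\sg$-homomorphism, $R^{\GG \As}(\gamma(a_1),\ldots,\gamma(a_n))$ holds, and by the definition of the relations of $\GG\As$ there is a single play $p$ with $\gamma(a_i) = \lb p, a_i\rb$ for each $i$ and $\{a_1,\ldots,a_n\}\subseteq \lbl(p)$. Since $\tau(a_i)$ is the minimum of the class of $\gamma(a_i)$ and $p$ lies in that class, each $\tau(a_i)$ is a prefix of $p$; the $\tau(a_i)$ are therefore linearly ordered, and I take $j$ realising the maximum. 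For each $i$, the equality $\lb p,a_i\rb = \lb \tau(a_i),a_i\rb$ forces $a_i$ to occur in every set on the prefix-path from $\tau(a_i)$ to $p$; since $\tau(a_i)\preford\tau(a_j)\preford p$, in particular $a_i \in \lbl(\tau(a_j))$. Hence $\tau(a_j)\in\Pt$ witnesses $\{a_1,\ldots,a_n\}\subseteq\lbl(\tau(a_j))$, which covers the edge $\{a,b\}$.

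For vertex connectedness, suppose $q\in\downset\Pt$, say $q\preford\tau(b)$, with $a\in\lbl(q)$. Writing $\tau(b) = [U_1,\ldots,U_m]$ and $q = [U_1,\ldots,U_i]$ for some $i\leq m$, note that $\gamma(b) = \lb\tau(b),b\rb$ is a valid representative (as $\tau(b)$ lies in the equivalence class of $\gamma(b)$), so Lemma~\ref{coalgdiagramlemm}(2) applies to it; instantiating at $u = a\in U_i$ yields $\gamma(a) = \lb[U_1,\ldots,U_i],a\rb = \lb q,a\rb$, which is exactly $\lb\tau(a),a\rb = \lb q,a\rb$. The only genuinely substantive step is the edge-covering argument, where one must exploit the linear ordering of prefixes of a common play to see that the labels $\lbl(\tau(a_i))$ nest and get absorbed into the largest. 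This is the main obstacle, but it follows cleanly once the $\sim$-equivalence is read in terms of the prefix tree and the minimality clause of the decomposition axioms is respected.
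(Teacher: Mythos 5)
Your proof is correct and follows essentially the same route as the paper's: reflexivity and minimality from the coalgebra laws and the definition of $\tau$, edge covering by pushing a related tuple through the homomorphism $\gamma$ and taking the maximum of the linearly ordered prefixes $\tau(a_i)$ of the common play, and vertex connectedness by instantiating the second coalgebra law. The extra detail you supply (unwinding the $\sim$-equivalence to see $a_i \in \lbl(\tau(a_j))$) is exactly what the paper leaves implicit.
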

\begin{proof}
Reflexivity follows immediately from the first coalgebra diagram, and minimality from the definition of $\tau$.

If $\RA(a_1, \ldots , a_n)$, then since $\gamma$ is a homomorphism, $R^{\Gg \As}(\gamma(a_1), \ldots ,  \gamma(a_n))$. This implies that for some play $p$, for all $i = 1, \ldots , n$,
$\lb p, a_i \rb = \lb p_i, a_i \rb$, where $p_i = \tau(a_i)$. The $p_i$ are below $p$, and hence form a chain, with maximum element some $p_j$. Then we have $\{ a_1, \ldots , a_n \} \subseteq \lambda(p_j)$, and the edge covering property is satisfied.

For vertex connectedness, if $q \preford p = \tau(b)$ and $a \in \lbl(q)$, then by the second coalgebra diagram, $\gamma(a) = \lb q,a \rb$, while by definition $\gamma(a) = \lb \tau(a), a \rb$.
Hence $\lb \tau(a), a \rb = \lb q,a \rb$ as required.
\end{proof}

\begin{theorem}
\label{Gcoalgcoverth}
Let $\As$ be a $\sg$-structure. There is a bijective correspondence between:
\begin{enumerate}
    \item $\GG^{\gd}$-coalgebras $\As \to \GG^{\gd} \As$
    \item  $\gd$-guarded  decompositions $\tau$ of $\As$.
\end{enumerate}
\end{theorem}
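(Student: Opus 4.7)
The plan is to prove the theorem by showing that the two constructions already exhibited in Propositions~\ref{forestcoalgprop} and~\ref{coalgforestprop} are mutually inverse. Call the map from decompositions to coalgebras $\Phi$, defined by $\Phi(\tau)(a) = \lb \tau(a), a \rb$, and the map from coalgebras to decompositions $\Psi$, defined by the equation $\gamma(a)^{\dg} = \langle \Psi(\gamma)(a), a \rangle$. Propositions~\ref{forestcoalgprop} and~\ref{coalgforestprop} already guarantee that $\Phi(\tau)$ really is a coalgebra and that $\Psi(\gamma)$ really is a decomposition, so only the round-trip identities $\Psi \circ \Phi = \id$ and $\Phi \circ \Psi = \id$ remain.

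For $\Psi \circ \Phi = \id$, I would start from a decomposition $\tau$ and compute $\Psi(\Phi(\tau))(a)$. By construction this is the unique play $p$ such that $\lb \tau(a), a \rb^{\dg} = \langle p, a \rangle$, i.e.\ the prefix-minimal representative of the equivalence class $\lb \tau(a), a \rb$. The minimality axiom on $\tau$ says exactly that whenever $\lb \tau(a), a \rb = \lb q, a \rb$ we have $\tau(a) \preford q$, so $\tau(a)$ itself is the $\preford$-least element of its equivalence class. Hence $\Psi(\Phi(\tau))(a) = \tau(a)$.

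For $\Phi \circ \Psi = \id$, I start from a coalgebra $\gamma$, form $\tau := \Psi(\gamma)$, and need $\lb \tau(a), a \rb = \gamma(a)$ for every $a$. By the definition of $\tau$ and of $(\cdot)^{\dg}$, the representative $\langle \tau(a), a \rangle$ lies in the equivalence class $\gamma(a)$, so $\lb \tau(a), a \rb = \gamma(a)$ directly. The first coalgebra diagram (Lemma~\ref{coalgdiagramlemm}(1)) is what ensures the second component of $\gamma(a)$ is indeed $a$, so that passing to the $\dg$-representative preserves the second coordinate and the step above is legitimate.

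The only potentially delicate point is confirming that the $\dg$-operation and the minimality axiom really are matched: one needs that the least element (in prefix order) of an equivalence class $\lb p, a \rb$ always exists and is unique, which was established in the lemma preceding the theorem showing that equivalence classes are tree-ordered by the prefix order on first components. Given that, both directions are a short unpacking of definitions, and no further combinatorial content is needed beyond what has already been proved; I do not expect any genuine obstacle here, only careful bookkeeping to keep representatives and equivalence classes distinct.
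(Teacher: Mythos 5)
Your proposal is correct and follows exactly the paper's own argument: the paper simply asserts that the constructions of Propositions~\ref{forestcoalgprop} and~\ref{coalgforestprop} are mutually inverse, and you supply the (correct) verification, using minimality for one round trip and the definition of the $\dg$-representative together with the first coalgebra diagram for the other.
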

\begin{proof}
The transformations from decompositions to coalgebras and back given in Propositions~\ref{forestcoalgprop} and~\ref{coalgforestprop} are mutually inverse.
\end{proof}
We can adapt these notions to the resource-bounded case. A $\gd$-guarded  decomposition $\tau$ of $\As$ is $k$-bounded if for all $p \in \Pt$, $\lbl(p)$ has cardinality $\leq k$.
\begin{theorem}
Let $\As$ be a $\sg$-structure. There is a bijective correspondence between:
\begin{enumerate}
    \item $\Gk^{\gd}$-coalgebras $\As \to \Gk^{\gd} \As$
    \item $k$-bounded $\gd$-guarded  decompositions $\tau$ of $\As$.
\end{enumerate}
\end{theorem}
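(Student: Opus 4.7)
The plan is to adapt the bijection established in Theorem~\ref{Gcoalgcoverth} to the $k$-bounded setting, which should require only tracking the cardinality bound through the two constructions. Recall that $\Gk^{\gd} \As$ is obtained from $\GG^{\gd} \As$ by restricting the universe to (equivalence classes of) plays whose constituent guarded sets all have cardinality $\leq k$. Consequently, a coalgebra $\gamma : \As \to \Gk^{\gd} \As$ is the same data as a coalgebra $\gamma : \As \to \GG^{\gd} \As$ that happens to factor through the inclusion $\Gk^{\gd} \As \hookrightarrow \GG^{\gd} \As$, and I expect the $k$-boundedness of the corresponding decomposition to translate across this factorisation transparently.

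Concretely, I would first take a $\Gk^{\gd}$-coalgebra $\gamma$ and define $\tau$ exactly as in the proof of Proposition~\ref{coalgforestprop}, namely $\tau(a) = p$ where $\gamma(a)^{\dg} = \langle p, a \rangle$. The same argument shows $\tau$ is a $\gd$-guarded decomposition. For $k$-boundedness, observe that every play $p \in \Pt$ arises as the minimal representative of some $\gamma(a) \in \Gk^{\gd} \As$, hence every set appearing in $p$ has cardinality $\leq k$; in particular $|\lbl(p)| \leq k$.

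Conversely, given a $k$-bounded $\gd$-guarded decomposition $\tau$, the map $\gamma(a) := \lb \tau(a), a \rb$ of Proposition~\ref{forestcoalgprop} is a $\GG^{\gd}$-coalgebra. It factors through $\Gk^{\gd} \As$ because by hypothesis every $\lbl(q)$ along $\tau(a) \in \plays(\As)$ has cardinality $\leq k$; the coalgebra axioms and the homomorphism property of $\gamma$ are unaffected by this restriction. The two constructions are mutually inverse by the same computation as in Theorem~\ref{Gcoalgcoverth}.

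I do not anticipate any real obstacle: once one observes that the universe restriction defining $\Gk^{\gd} \As$ corresponds exactly to the size restriction on $\lbl(p)$ for $p \in \Pt$, the entire argument is a verbatim specialisation of the unbounded case. The only caveat is to confirm that the intermediate reasoning in Propositions~\ref{forestcoalgprop} and~\ref{coalgforestprop}, in particular the appeal to Lemma~\ref{GuardedSetCovering} for edge covering, stays within $k$-bounded plays; this holds because that lemma only manipulates plays that are already present in $\Pt$ and forms meets in the prefix order, neither of which enlarges the constituent guarded sets.
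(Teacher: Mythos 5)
Your proposal is correct and follows essentially the same route as the paper, which simply observes that the bijection of Theorem~\ref{Gcoalgcoverth} restricts to the $k$-bounded setting; you spell out the bookkeeping of the cardinality bound in both directions, which the paper leaves implicit.
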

\begin{proof}
This is a straightforward restriction of theorem~\ref{Gcoalgcoverth}.
\end{proof}

\noindent We briefly illustrate how these decompositions give a new perspective on coalgebras for~$\GG$.
\begin{example}[Guarded Forest Decompositions and Coalgebras]
Let~$\sg$ be the signature with one binary relation~$R$. Consider the $\sg$-structure~$\As$ with universe~$\{a,b,c\}$, and relations~$R^\As(a,b)$ and~$R^\As(b,c)$.
There is a coalgebra~$\gamma : \As \rightarrow \GG(\As)$ with
\begin{align*}
    \gamma(a) &= \lb [\{ a, b \}], a \rb\\
    \gamma(b) &= \lb [\{ a, b \}], b \rb\\
    \gamma(c) &= \lb [\{ a, b \}, \{ b, c \}], c \rb
\end{align*}
The corresponding atom guarded tree decomposition has Hasse diagram:
\[
\begin{tikzcd}[arrows=-]
\{ b, c \} \\
\{ a, b \} \uar
\end{tikzcd}
\]
In this case, we get another atom-guarded tree decomposition by inverting this Hasse diagram, and so a second coalgebra structure. The two coalgebras correspond to traversing the underlying graph from opposite ends.

For the same relational signature, the structure~$\Bs$ on~$\{ a, b, c\}$ with
$R^\Bs(a,b)$, $R^\Bs(b,c)$ and~$R^\Bs(c,a)$ cannot be formed into an atom guarded tree, and so does not carry a coalgebra structure with respect to atom guards.
If on the other hand, we consider loose guards, the set~$\{a, b, c\}$ is guarded by
\begin{equation*}
    R(a,b) \AND R(b,c) \AND R(c,a)
\end{equation*} 
and so $\Bs$ carries a trivial loosely-guarded decomposition, and corresponding coalgebra.
\end{example}
We define the \emph{$\gd$-guarded tree-width} of a structure $\As$ to be the least $k$ such that $\As$ has a  $k$-bounded $\gd$-guarded decomposition. The \emph{$\gd$-guarded coalgebra number} of $\As$ is the least $k$ such that there is a coalgebra $\As \to \Gk^{\gd} \As$. As an immediate consequence of the previous theorem, we obtain the following result.
\begin{theorem}
For any structure $\As$, the $\gd$-guarded tree-width of $\As$ and the $\gd$-guarded coalgebra number of $\As$ coincide.
\end{theorem}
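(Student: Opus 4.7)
The plan is to derive this theorem as a direct corollary of the preceding correspondence between $\Gk^{\gd}$-coalgebras and $k$-bounded $\gd$-guarded decompositions. The whole content is to verify that the bijection established in that theorem respects the numerical parameter on both sides, so the minima coincide.

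First I would unpack the two numbers: the $\gd$-guarded tree-width is $\min \{ k \mid \exists\, k\text{-bounded } \gd\text{-guarded decomposition } \tau \text{ of } \As \}$, while the $\gd$-guarded coalgebra number is $\min \{ k \mid \exists\, \text{coalgebra } \gamma : \As \to \Gk^{\gd} \As \}$. It suffices to show that these two sets of admissible $k$ coincide. By the preceding theorem, this is guaranteed as long as the bijection between coalgebras and decompositions matches the $k$-bounded ones on each side.

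For the forward direction, if $\tau$ is a $k$-bounded $\gd$-guarded decomposition, then the associated coalgebra sends $a$ to $\lb \tau(a), a \rb$; since every guarded set appearing in $\tau(a)$ has cardinality at most $k$, the play $\tau(a)$ lies in $\Gk^{\gd} \As$, so $\gamma$ factors through $\Gk^{\gd} \As$. For the converse, given a coalgebra $\gamma : \As \to \Gk^{\gd} \As$, the associated decomposition is $\tau(a) = p$ where $\gamma(a)^{\dg} = \langle p, a \rangle$; since $p$ is a prefix of a play landing in $\Gk^{\gd} \As$, all guarded sets occurring in $p$ are $k$-bounded, hence $\tau$ is a $k$-bounded decomposition.

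I do not expect any real obstacle here; the only thing to check carefully is that passing to the minimal representative $\lb p, a \rb^{\dg}$ under the prefix order does not enlarge any guarded set, which is immediate since it selects a prefix of the original play. Once these two directions are noted, the equality of the two minima follows.
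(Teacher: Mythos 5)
Your proposal is correct and matches the paper's argument: the paper also obtains this theorem as an immediate consequence of the bijective correspondence between $\Gk^{\gd}$-coalgebras and $k$-bounded $\gd$-guarded decompositions, so the two minima coincide. Your additional check that the bijection respects the bound $k$ in both directions is exactly the (routine) verification the paper leaves implicit.
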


We now consider morphisms. Recall that a morphism of $\GG$-coalgebras $h : (\As, \alpha) \to (\Bs, \beta)$ is a  $\sg$-homomorphism $h : \As \to \Bs$ such that the following diagram commutes:
\[ \begin{tikzcd}
\As \ar[d, "h"'] \ar[r, "\alpha"] & \GG \As \ar[d, "\GG h"] \\
\Bs \ar[r, "\beta"] & \GG \Bs
\end{tikzcd}
\]
We extend homomorphisms $h : \As \to \Bs$ to plays: 
\[ \hp([U_1, \ldots , U_n]) = [h(U_1), \ldots , h(U_n)]. \]
Since homomorphisms preserve guarded sets, this yields a well-defined map $\hp : \plays(\As) \to \plays(\Bs)$.
Given $\gd$-guarded decompositions $\tau : A \to \plays(\As)$ and $\upsilon : B \to \plays(B)$, a morphism of decompositions is a $\sg$-homomorphism $h : \As \to \Bs$ such that, for all $a \in A$, $\lb \hp(\tau(a)), h(a) \rb = \lb \upsilon(h(a)), h(a) \rb$. We say that a morphism is \emph{strict} if the stronger condition $\hp(\tau(a)) = \upsilon(h(a))$ holds.
\begin{proposition}
$\GG$-coalgebra morphisms are morphisms of the associated $\gd$-guarded decompositions, and vice versa. Moreover, injective coalgebra morphisms correspond to injective strict morphisms of decompositions.
\end{proposition}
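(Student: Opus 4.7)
The plan is to translate the coalgebra morphism square $\GG h \circ \alpha = \beta \circ h$ pointwise into the definition of a decomposition morphism, using an explicit formula for $\GG h$. Since $\GG h = (h \circ \ve_{\As})^*$, the explicit formula for the Kleisli coextension specialises to $\GG h(\lb [U_1, \ldots, U_n], a \rb) = \lb [h(U_1), \ldots, h(U_n)], h(a) \rb = \lb \hp(p), h(a) \rb$ for $p = [U_1, \ldots, U_n]$, where each $h(U_j)$ remains $\gd$-guarded by Lemma~\ref{lem:guarded-set-preservation}. Using the Theorem~\ref{Gcoalgcoverth} correspondence, the associated coalgebras satisfy $\alpha(a) = \lb \tau(a), a \rb$ and $\beta(b) = \lb \upsilon(b), b \rb$. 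Substituting, the coalgebra morphism equation reduces at each $a \in A$ to $\lb \hp(\tau(a)), h(a) \rb = \lb \upsilon(h(a)), h(a) \rb$, which is precisely the non-strict decomposition-morphism condition. Both directions of the first claim then follow by unwinding.

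For the injective/strict correspondence, one direction is immediate: if $\hp(\tau(a)) = \upsilon(h(a))$ then the equivalence classes certainly agree, so a strict morphism of decompositions is a coalgebra morphism, and injectivity is trivially preserved. For the converse, suppose $h$ is an injective coalgebra morphism and write $\tau(a) = [U_1, \ldots, U_n]$. Since $\upsilon(h(a))$ is the minimum of its $\sim$-class, $\upsilon(h(a)) \preford \hp(\tau(a))$, so $\upsilon(h(a)) = [h(U_1), \ldots, h(U_m)]$ for some $m \leq n$. Equivalence of the two representatives requires $h(a) \in h(U_j)$ for $m \leq j \leq n$; injectivity of $h$ reflects this to $a \in U_j$ on the same range. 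Hence $\lb [U_1, \ldots, U_m], a \rb = \lb \tau(a), a \rb$ in $\Gg \As$, and minimality of $\tau(a)$ forces $[U_1, \ldots, U_n] \preford [U_1, \ldots, U_m]$, i.e.\ $n \leq m$. Combined with $m \leq n$ this gives $m = n$, so $\hp(\tau(a)) = \upsilon(h(a))$.

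The first paragraph is essentially bookkeeping once the formula for $\GG h$ is pinned down. The main obstacle is in the second paragraph, where strictness must be extracted from equality of equivalence classes. The crux is the observation that injectivity of $h$ reflects set-membership ($h(a) \in h(U)$ implies $a \in U$), which lets the minimality axiom of a $\gd$-guarded decomposition upgrade the equivalence-class identity to an identity of the plays themselves.
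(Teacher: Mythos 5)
Your proof is correct and follows essentially the same route as the paper: unpack the coalgebra-morphism square via the explicit formula $\GG h(\lb p,a\rb) = \lb \hp(p), h(a)\rb$ to get the non-strict condition, and then use injectivity to reflect membership $h(a)\in h(U_j)$ back to $a\in U_j$ so that minimality of $\tau$ rules out $\upsilon(h(a))$ being a proper prefix of $\hp(\tau(a))$. Your write-up merely spells out in more detail the step the paper summarises as "this would contradict minimality for $\tau$".
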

\begin{proof}
Unpacking the diagram for coalgebra morphisms, it says that for all $a \in A$, if $\alpha(a) = \lb p,a \rb$, then $\beta(h(a)) = \lb \hp(p), h(a) \rb$. Since $\alpha(a) = \lb \tau(a),a \rb$ and $\beta(h(a)) = \lb \upsilon(h(a)), h(a) \rb$, the correspondence with morphisms of decompositions follows. In general, $\upsilon(h(a))$ may be a proper prefix of $\hp(\tau(a))$.
However, if $h$ is injective, this would imply that, for some proper prefix $p$ of $\tau(a)$, $a \in \lambda(p)$, which would contradict minimality for $\tau$.
\end{proof}
\section{Open maps and guarded bisimulation}

We now return to the issue of characterizing guarded bisimulation, and hence equivalence of structures modulo the full guarded fragments.
Following the approach taken in \cite{AbramskyShah2020}, the forthcoming extended journal version of \cite{DBLP:conf/csl/AbramskyS18}, we shall use a variant of the well-known open maps approach to bisimulation \cite{joyal1993bisimulation}.

We shall work in the Eilenberg-Moore category $\EMG$. Objects are $\GG$-coalgebras, and morphisms are $\GG$-coalgebra morphisms.
There is an evident forgetful functor $U : \EMG \to \CS$. This has a right adjoint $F$, which sends $\As$ to the cofree coalgebra $(\GG \As, \delta)$.

We say that a~$\gd$-guarded tree decomposition~$\tau$ is~\emph{simple} if $\Pt$ forms a covering chain $p_1 \cvr \cdots \cvr p_n$ under the prefix order.
We define \emph{paths} to be $\GG$-coalgebras whose associated decompositions $\tau$ are simple. If $P$ is a path with associated decomposition $\tau$, we identify $P$ with $\Pt$.
An \emph{embedding} is a coalgebra morphism which is an injective strong $\sg$-homomorphism, \ie it reflects as well as preserves the $\sg$-relations.
We denote embeddings by $\emb$.
\emph{Path embeddings} $e : P \emb \As$ are embeddings whose domain is a path. Note that, for all $p \in P$, $\lambda(p) \cong e(\lambda(p))$.

A $\sg$-homomorphism $h : \As \to \Bs$ is a \emph{pathwise embedding} if for each path embedding $e : P \emb \As$, $h \circ e$ is a path embedding.

We can now define what it means for a morphism $f : (\As, \alpha) \to (\Bs, \beta)$ in $\EMG$ to be \emph{open}.
This holds if, whenever we have a diagram
\[ \begin{tikzcd}
  P \arrow[r, rightarrowtail] \arrow[d,rightarrowtail]
    & Q \arrow[d, rightarrowtail] \\
  (\As, \alpha) \arrow[r,  "f"']
&(\Bs, \beta)
\end{tikzcd}
\]
where $P$ and $Q$ are paths, there is an embedding $Q \rightarrowtail (\As, \alpha)$ such that
\[ \begin{tikzcd}
  P \arrow[r, rightarrowtail] \arrow[d,rightarrowtail]
    & Q \arrow[dl, rightarrowtail] \arrow[d, rightarrowtail] \\
  (\As, \alpha) \arrow[r,  "f"']
&(\Bs, \beta)
\end{tikzcd}
\]
This is often referred to as the \emph{path-lifting property}. If we think of $f$ as witnessing a simulation of $\As$ by $\Bs$, path-lifting means that if we extend a given behaviour in $\Bs$ (expressed by extending the path $P$ to $Q$), then we can find a matching behaviour in $\As$ to ``cover'' this extension. Thus it expresses an abstract form of the notion of  ``p-morphism'' from modal logic \cite{blackburn2002modal}, or of functional bisimulation.

We can now define the back-and-forth equivalence $\As \lrarr^{\GG} \Bs$ between structures in $\CS$. This holds if there is a coalgebra $\Rs$ in $\EMG$, and a span of open pathwise embeddings
\[ \begin{tikzcd}
& \Rs \arrow[dl, rightarrowtail] \arrow[dr, rightarrowtail] \\
F\As & & F \Bs
\end{tikzcd}
\]

\begin{restatable}{theorem}{BisimulationSpanCorrespondence}
\label{thm:bisimulation-span-correspondence}
Let $\gd$ be a notion of guarding, and $\GG$ the corresponding comonad.
For all $\sg$-structures $\As$ and $\Bs$, $A \gbsim B$ iff $A \lrarr^{\GG} B$.
\end{restatable}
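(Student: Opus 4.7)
The plan is to convert between Duplicator winning strategies in the $\gd$-guarded bisimulation game and spans of open pathwise embeddings. The bridge is the observation from the previous section that a path embedding $e : P \emb F\As$ records a sequence of Spoiler moves in $\As$: the simple decomposition on $P$ is a chain of $\gd$-guarded sets (a play), each strongly embedded into $\As$. Meanwhile, the pathwise embedding property for a coalgebra morphism $h$ says that $h$ sends each $\gd$-guarded set appearing in a path to an isomorphic $\gd$-guarded set in the codomain, i.e.\ acts as a partial isomorphism on guarded sets, matching precisely the requirement Duplicator must meet in the back-and-forth game.

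For the direction $\As \lrarr^{\GG} \Bs \Rightarrow \As \gbsim \Bs$, given a span
\[
F\As \; \stackrel{l}{\longleftarrow} \; \Rs \; \stackrel{r}{\longrightarrow} \; F\Bs
\]
of open pathwise embeddings, Duplicator's strategy will maintain after $n$ rounds a path $P_n$ embedded into $\Rs$ whose images $l \circ P_n$ and $r \circ P_n$ correspond to the synchronised plays in $F\As$ and $F\Bs$ so far. When Spoiler plays a new $\gd$-guarded set on the $\As$-side, this extends the path $l \circ P_n$ to some $Q$; openness of $l$ lifts the extension back to $\Rs$, and $r$ then produces the matching guarded set in $\Bs$. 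The pathwise embedding property on $l$ and $r$ ensures that the resulting partial map between the guarded sets is a partial isomorphism, and the coherence with earlier rounds follows because all these maps are coalgebra morphisms. The symmetric case of Spoiler playing on the $\Bs$-side uses openness of $r$ in place of $l$.

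For the converse, $\As \gbsim \Bs \Rightarrow \As \lrarr^{\GG} \Bs$, I construct $\Rs$ from a winning Duplicator strategy $\sigma$. Its universe consists of equivalence classes $\lb (p,q),(a,b) \rb$ where $(p,q)$ is a pair of $\gd$-guarded plays in $\As$ and $\Bs$ reachable by following $\sigma$, so that $\sigma$ provides a partial isomorphism $\vphi : \last(p) \to \last(q)$ with $b = \vphi(a)$, quotiented by the product $\sim$-equivalence inherited from $\GG\As$ and $\GG\Bs$. The $\sg$-relations on $\Rs$ are well defined since $\vphi$ is a partial isomorphism on each guarded set, and $\Rs$ carries the evident $\GG$-coalgebra structure sending $\lb (p,q),(a,b)\rb$ to the synchronous history it determines. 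The two projections $l$ and $r$ are then coalgebra morphisms; they are pathwise embeddings because on each guarded set in a path they act as the two legs of a partial isomorphism; and they are open because Duplicator's winning strategy is total in both directions, so any extension of a play on either side has a matching response, which is exactly path-lifting for $l$ or $r$.

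The main obstacle is the careful management of the quotient structure on focussed plays in the construction of $\Rs$: the pairs $\lb (p,q),(a,b) \rb$ must be well-defined under the product $\sim$-equivalence, the $\sg$-relations must be coherently inherited from both sides, and the projections must respect these quotients. Once this bookkeeping is in place, the correspondence between the coherence condition $\vphi_{n+1}|_X = \vphi_n|_X$ in the back-and-forth game and the $\sim$-equivalence relation on $\GG$ makes the pathwise embedding and openness conditions fall out automatically, and the coalgebra axioms reduce to unpacking the definitions in the style of Propositions~\ref{forestcoalgprop} and~\ref{coalgforestprop}.
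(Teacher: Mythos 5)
Your proposal is correct and follows essentially the same route as the paper: the forward direction maintains a path in $\Rs$ by induction on rounds, lifting Spoiler's extension through the open leg and projecting through the other, and the converse builds $\Rs$ from synchronised pairs of plays with focus pairs, quotiented by the product of the two $\sim$-equivalences, with the projections as open pathwise embeddings. The only cosmetic difference is that the paper packages each round as a triple $\langle U,\iota,V\rangle$ rather than a pair of plays, and it explicitly notes the restriction to non-trivial Spoiler moves to keep the lifted path simple.
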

\begin{proof}
We begin by showing suitable spans yield Duplicator strategies. Assume we have a back and forth equivalence, so a coalgebra~$\Rs$ and span of open embeddings~$F \As \xleftarrow{p_1} \Rs \xrightarrow{p_2} \Bs$. 
We proceed by induction on the length of the play. At each stage we show:
\begin{enumerate}
\item that for every sequence of Spoiler moves and Duplicator responses of length~$n$ there exists a simple path~$\Ps$ in~$\Rs$ such that:
\begin{itemize}
    \item $p_1$ and~$p_2$ restrict to embeddings on~$\Ps$.
    \item The image under~$p_1$ and~$p_2$ are paths labelled with Spoiler moves and Duplicator responses.
\end{itemize}
\item Duplicator has a winning strategy for~$n$-rounds.
\end{enumerate}
For the base case, without loss of generality, assume Spoiler plays~$U_1$ in~$\As$. The set~$\{ \lb [U_1], u \rb \ \mid u \in U_1 \}$ is a simple path in~$F(\As)$. As~$p_1$ is open, there is a path~$\Ps_1$ in~$\Rs$ such that~$p_1$ restricts to an embedding onto the path in~$F(\As)$. As~$p_2$ is guarded set embedding, $p_2$ restricts to an embedding on~$\Ps_1$. By lemma~\ref{lem:guarded-set-properties}~$\ve$ restricts to an embedding on guarded sets. Composing these embeddings yields an isomorphism between the two sets in the image of the~$p_i$, witnessing a valid response for Duplicator for this round.

For the inductive step, assume the I.H. holds for~$n$. Without loss of generality, we only consider Spoiler moves in~$\As$. We also restrict to non-trivial Spoiler moves such that~$U_n \cap U_{n + 1} \neq \emptyset$, as other moves offer no advantage. By the I.H, we have a simple path of guarded sets~$U_1,\ldots,U_n$ in~$\As$ in the image of a path~$\Ps_n$ in~$\Rs$. We extend this path to a new simple path, by adding the set
\begin{equation*}
    \{ \lb [U_1,\ldots,U_n,U_{n + 1}], u \rb \mid u \in U_{n + 1}\}
\end{equation*}
The path is simple by the non-triviality of the Spoiler move. As~$p_1$ is open we can extend~$\Ps_n$ to a path~$\Ps_{n + 1}$, with~$p_1$ restricting to an embedding onto our extended path in~$F(\As)$. As~$p_2$ is guarded set embedding, and guarded sets cover all the relations in a path, it restricts to an embedding on~$\Ps_{n + 1}$. The final element of this path~$V_{n + 1}$ is our putative response for Duplicator. Restricting the embeddings and~$\ve$ yields the required isomorphism witnessing a valid response for Duplicator.

For the other direction, we aim to construct the intermediate ``diagonal'' coalgebra~$\langle Rs, \rho \rangle$, using an approach similar to that for the comonad construction~$\GG$. To do so, we consider triples:
\begin{equation*}
    \langle U, \iota, V \rangle
\end{equation*}
$U$ is a Spoiler move, and $\iota : U \rightarrow V$ is the response dictated by Duplicator's strategy, or~$V$ is a spoiler move and~$\iota^{-1} : V \rightarrow U$ is the response dictated by Duplicators strategy. A sequence~$p$ of such positions is the two-sided analog of the  plays seen earlier, and we pair these with a focus element:
\begin{equation*}
    \langle p, \langle a, b \rangle \rangle
\end{equation*}
If~$\langle U, \iota, V \rangle$ is the last element of~$p$, then we require~$a \in U$,$b \in V$ and~$\iota(a) = b$.Clearly there is some redundancy in this description, but it is convenient to manipulate. 
Finally we quotient, identifying~$\langle q, \langle a, b \rangle \rangle$ with~$\langle q, \langle a, b \rangle \rangle$ if~$p$ and~$q$ have a common prefix, and~$\langle a, b \rangle$ appears consistently on the paths from the common prefix to the endpoints. 
Relations are defined in a similar manner to those for~$\GG$. We define~$\rho : Rs \rightarrow \GG(\Rs)$ as:
\begin{align*}
    \rho &\lb [\langle U_1, \iota_1, V_1 \rangle,...,\langle U_n, \iota_n, V_n \rangle], \langle a, b \rangle \rb\\ 
     &\,=\,\lb [W_1,...,W_n], \lb [\langle U_1, \iota_1, V_1 \rangle,..., \langle U_n, \iota_n, V_n \rangle], \langle a, b \rangle \rb \rb
\end{align*}
where
\begin{equation*}
    W_j = \{ \lb [\langle U_1,\iota_1,V_1 \rangle,..., \langle U_j, \iota_j, V_j \rangle], \langle a, \iota_j a \rangle \rb \mid a \in U_j \}
\end{equation*}
The function $\rho$ is a homomorphism, and in fact~$(R, \rho)$ an Eilenberg-Moore coalgebra. 
Further, there is a map:
\begin{equation*}
    p_1(\lb p, \langle a,  b \rangle \rb) = a
\end{equation*}
and similarly for~$p_2$ onto the second component. The~$p_i$ are coalgebra homomorphisms, open, and guarded set embedding, completing the proof.
\end{proof}

Again, for atom and loose guards, there is also a resource-bounded version of this result, where we work in $\EMGk$ rather than $\EMG$. We can then define the equivalence $\As \lrarr^{\Gk} \Bs$ in exactly analogous fashion to the above.
\begin{theorem}
If~$\gd$ is either atom or loose guards,
for $k>0$, and resource bounded comonad~$\Gk$,
for all $\sg$-structures $\As$ and $\Bs$, $A \gbsimk B$ iff $A \lrarr^{\Gk} B$.
\end{theorem}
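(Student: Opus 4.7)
The plan is to replay the proof of Theorem~\ref{thm:bisimulation-span-correspondence}, checking at each step that bounding the size of guarded sets by $k$ poses no new difficulties. The restriction to atom or loose guards is essential here, since $\Gk$ is only known to be a comonad in those cases, so that $\EMGk$ is defined only then.

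For the ``if'' direction, suppose we have a span of open pathwise embeddings $F \As \leftarrow \Rs \rightarrow F \Bs$ in $\EMGk$. The inductive construction of a Duplicator strategy proceeds exactly as in the unbounded case: when Spoiler plays a $k$-guarded set $U$ in $\As$, the canonical simple path extension $\{\lb [U_1,\ldots,U_n,U], u \rb \mid u \in U\}$ already lives in $\Gk \As$ by definition. Path-lifting for the left leg of the span extends the current simple path in $\Rs$, and pushing the extension along the right leg (which is a pathwise embedding) produces a simple path in $\Gk \Bs$ whose final guarded set has cardinality at most $k$. This is Duplicator's response, and the symmetric argument handles Spoiler moves in $\Bs$ in the back-and-forth version.

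For the ``only if'' direction, given a winning Duplicator strategy in the $k$-bounded game, I would form exactly the same intermediate coalgebra $(\Rs, \rho)$ of quotiented plays of triples $\langle U, \iota, V\rangle$ with focus $\langle a, b\rangle$ as in the proof of Theorem~\ref{thm:bisimulation-span-correspondence}. Since every Spoiler and Duplicator move in the $k$-bounded game is $k$-guarded, both $U$ and $V$ have cardinality at most $k$ in every such triple, so each set $W_j$ produced by $\rho$ is $k$-guarded in $\Rs$. Hence $\rho$ factors through $\Gk \Rs$, giving an object of $\EMGk$, and the projections $p_1, p_2$ are open pathwise embeddings by the same arguments as before.

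The only obstacle is checking that the uses of Lemma~\ref{lem:guarded-set-properties} and Lemma~\ref{lem:guarded-set-preservation} continue to work in the bounded setting. This is routine since $k$-guardedness is preserved under subsets and under homomorphisms, so every appeal to a guarded set in the unbounded proof automatically witnesses the tighter bound. This closure behaviour is precisely what fails for clique guards, explaining why the theorem is restricted to atom and loose guards.
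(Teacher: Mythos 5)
Your proposal is correct and takes essentially the same route as the paper, which states this theorem without a separate proof precisely because it is the routine restriction of Theorem~\ref{thm:bisimulation-span-correspondence} that you carry out: all guarded sets arising in the span construction and in the intermediate coalgebra $(\Rs,\rho)$ inherit the cardinality bound $k$ from the players' moves. One minor inaccuracy in your closing aside: clique $k$-guarded sets \emph{are} closed under subsets and homomorphic images (Lemma~\ref{lem:guarded-set-preservation} applies to all three guard types); what actually breaks the clique-guarded grading is that Lemma~\ref{lem:guarded-set-properties} fails to restrict to the $k$-bounded setting because of the existential witnesses in the guard formulae, as the paper notes at the end of Section~\ref{sec:guarded-comonads}.
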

The forthcoming~\cite{AbramskyShah2020} presents an axiomatic framework for model comparison games, exemplified by results for Ehrenfeucht-\Fraisse, pebbling and modal bisimulation games. Although developed concretely, our results such as theorem~\ref{thm:bisimulation-span-correspondence} can be viewed from this perspective, as significantly more elaborate instances of the axiomatic framework. Our notion of guarded set preservation corresponds to the pathwise embedding condition used in the abstract setting.

\section{The connection to hypergraphs}

The guarded comonad constructions  have an underlying combinatorial content which is largely independent of the specifics of relational $\sg$-structures, or the syntactic form of guarding being considered. Semantically, what matters is that we have some designated family of finite subsets of the universe (the ``guarded sets''), and that morphisms preserve these sets.
This shift in perspective has been advocated in~\cite{HodkinsonOtto2003}. We can take advantage of our structural approach to develop this idea more fully.

We will show that the guarded comonad construction makes sense at the level of hypergraphs. Moreover, the guarded comonads for $\sg$-structures are nicely related to this hypergraph comonad, by an Eilenberg-Moore law \cite{manes2007monad,jacobs2017introduction}.

A \emph{hypergraph}~$\hyp{}$ is given by a set of \emph{vertices}~$\hver{}$ and a family $\hedg{}$  of finite subsets of $\hver{}$. The elements of $\hedg{}$ are referred to as~\emph{hyperedges}.
A \emph{morphism of hypergraphs}~$f : \hyp{1} \rightarrow \hyp{2}$ is a function on the vertices~$f : \hver{1} \rightarrow \hver{2}$ mapping hyperedges in~$\hedg{1}$ to hyperedges in~$\hedg{2}$. We write $\hgraph$ for the category of hypergraphs.

We extend the notions of simulation and bisimulation to hypergraphs.
For hypergraphs $\hyp{1}$, $\hyp{2}$, the hypergraph simulation game from $\hyp{1}$ to $\hyp{2}$
is described as follows:
\begin{itemize}
\item Round 0: We set $X_0 := \vempty$, $\vphi_0 := \vempty$.
\item Round $n+1$: Spoiler specifies a hyperedge $X_{n+1}$ in $\hedg{1}$. Duplicator must respond with a hyperedge $Y_{n+1}$ in $\hedg{2}$, and a function $\vphi_{n+1} : X_{n+1} \to Y_{n+1}$, such that $\vphi_{n+1} |_{X} = \vphi_{n} |_{X}$, where $X = X_{n+1} \cap X_{n}$.
\end{itemize}
Duplicator wins if he has a response at each round, otherwise Spoiler wins. 

As with the previous notions of bisimulation, the hypergraph bisimulation game is the two sided generalization of the simulation game:
\begin{itemize}
\item Round 0: We set $X_0 := \vempty$, $Y_0 := \vempty$, $\vphi_0 := \vempty$.
\item Round $n+1$: Spoiler now has two options.
\begin{itemize}
\item Option 1: Spoiler specifies a hyperedge $X_{n+1}$ in $\hedg{1}$. Duplicator must respond with a hyperedge $Y_{n+1}$ in $\hedg{2}$, and a bijection $\vphi_{n+1} : X_{n+1} \to Y_{n+1}$, such that $\vphi_{n+1} |_{X} = \vphi_{n} |_{X}$, where $X = X_{n+1} \cap X_{n}$.
\item Option 2: Spoiler specifies a hyperedge $Y_{n+1}$ in $\hyp{2}$. Duplicator must respond with a hyperedge $X_{n+1}$ in $\hyp{1}$, and a bijection $\vphi_{n+1} : X_{n+1} \to Y_{n+1}$, such that $\vphi_{n+1}^{-1} |_{Y} = \vphi_{n}^{-1} |_{Y}$, where $Y = Y_{n+1} \cap Y_{n}$.
\end{itemize}
\end{itemize}

Again, Duplicator wins if he has a response at each round, otherwise Spoiler wins. There are resource bounded variants of these games, in which moves are restricted to hyperedges of at most~$k$ elements.

To provide a structural account of these games, we proceed as before by encoding Spoiler's moves within a hypergraph. We consider \emph{focussed plays} $\langle p, a \rangle$ where~$p$ is a path consisting of a non-empty list~$[U_1,...U_n]$ of hyperedges, and~$a \in U_n$ is a focus. 
We define an equivalence relation on focussed plays to enforce the overlap condition in the simulation game, in entirely analogous fashion to that of section~\ref{sec:guarded-comonads}. We denote the equivalence class of $\lt p,a \rt$ by~$\lb p, a \rb$. We then construct a new hypergraph, $\HH \hyp{}$. The vertices are the equivalence classes $\lb p,a \rb$.
The hyperedges are the sets of the form:
$\{ \lb p, a \rb \mid a \in U \}$,
where~$U \in \hedg{}$.

\begin{restatable}{theorem}{HypergraphComonad}
There is a comonad in Kleisli form~$(\HH, \ve, (\cdot)^*)$ on $\hgraph$. The counit $\ve$ is defined on representatives by~$\ve \lb p, a \rb = a$. Coextension is defined on representatives by
\begin{equation*}
    h^* \lb [U_1,...,U_n], a \rb = \lb [V_1,...,V_n], h \lb p, a \rb \rb
\end{equation*}
where
\begin{equation*}
    V_j = \{ h \lb [U_1,...,U_j], a \rb \mid a \in U_j \}, \quad 1 \leq j \leq n.
\end{equation*}
\end{restatable}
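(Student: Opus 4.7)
The plan is to mirror the proof structure used for the guarded comonad $\GG$ in Theorem~\ref{thm:guarded-comonad}, since hyperedges in the present setting play exactly the role that guarded sets played there. The three tasks are (a) well-definedness of $\ve$ and $(\cdot)^*$ on equivalence classes, (b) showing that both operations yield hypergraph morphisms, and (c) verifying the three Kleisli equations $\ve^* = \id$, $\ve \circ f^* = f$, and $(g \circ f^*)^* = g^* \circ f^*$.

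First I would check well-definedness. For $\ve$ this is immediate, since all representatives of $\lb p,a \rb$ share the second coordinate. For $h^*$ the argument is essentially the same as in the relational case: given $\lt p_1, a \rt \sim \lt p_2, a \rt$, the focus $a$ belongs to $\last(q)$ for every $q$ with $p_1 \sqcap p_2 \preford q \preford p_i$, so $h\lb q,a \rb = h\lb p_1 \sqcap p_2, a \rb$ throughout this interval. The sequences of hyperedges constructed from the two representatives then share a common prefix along which $h\lb p_1 \sqcap p_2, a\rb$ appears as focus, so they represent the same element of $\HH \hyp{2}$.

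Next I would verify the morphism conditions. The hyperedges of $\HH \hyp{}$ are by definition the sets $\{\lb p,a \rb \mid a \in U\}$ for $U \in \hedg{}$, so $\ve$ sends each such hyperedge bijectively onto $U$, which lies in $\hedg{}$. For $h^*$, the required sets $V_j = \{h \lb [U_1,\ldots,U_j], a \rb \mid a \in U_j\}$ are images under $h$ of hyperedges of $\HH \hyp{1}$, so they are hyperedges of $\hyp{2}$ since $h$ is a hypergraph morphism; this ensures both that $h^* \lb p, a \rb$ genuinely lives in $\HH \hyp{2}$ and that $h^*$ sends hyperedges to hyperedges.

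Finally, the three Kleisli equations follow by unpacking definitions, in exact parallel with Theorem~\ref{thm:guarded-comonad}. For $\ve^* = \id$, each $V_j$ collapses to $\{\ve\lb [U_1,\ldots,U_j], a_j \rb \mid a_j \in U_j\} = U_j$. For $\ve \circ f^* = f$, the outer $\ve$ simply discards the freshly constructed play and returns $f \lb p, a \rb$. For associativity, at level $j$ both $(g \circ f^*)^*$ and $g^* \circ f^*$ produce the set $\{g\lb [V_1,\ldots,V_j], f\lb [U_1,\ldots,U_j], a_u \rb\rb \mid a_u \in U_j\}$, where the $V_i$ are the intermediate hyperedges computed by $f^*$. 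The only real obstacle is the notational bookkeeping in the well-definedness argument for $h^*$, where one must track the tree of representatives carefully; beyond this, the entire proof is mechanical and makes no use of relational structure beyond what hyperedges already supply.
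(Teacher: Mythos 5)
Your proposal is correct and follows essentially the same route as the paper's own proof: well-definedness of $\ve$ and $h^*$ on equivalence classes via the common-prefix argument, hyperedge preservation using the fact that the $V_j$ are images of hyperedges of $\HH\hyp{1}$ under the morphism $h$, and direct computation of the three Kleisli equations exactly as in Theorem~\ref{thm:guarded-comonad}. The paper itself remarks that its argument is formally almost identical to the guarded case with relation preservation replaced by hyperedge preservation, which is precisely the observation your proof is built on.
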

\begin{proof}
Let~$(V,E)$ be a hypergraph, and $\HH(V,E)$ the resulting hypergraph yielded from the proposed construction. Firstly, we note that~$\ve$ is a well defined function as equivalence classes agree on their second component. Now assume that $\{\lb p_u, u \rb \mid u \in U \}$ is a hyperedge in~$\HH(V,E)$. There can only be such a hyperedge if there exists a~$p$ such that for all~$u \in U$:
\begin{equation*}
    \lb p_u, u \rb = \lb p, u \rb
\end{equation*}
and
\begin{equation*}
    U \in E
\end{equation*}
and so~$\ve$ preserves hyperedges, and is therefore a homomorphism of hypergraphs.

Assume that~$h : \HH(V_1,E_1) \rightarrow (V_2,E_2)$ is a hypergraph morphism. We first aim to show that~$h^* : \HH(V_1,E_1) \rightarrow \HH(V_2,E_2)$ is a well defined homomorphism. 

We first note that the sets~$V_j$
\begin{equation*}
    V_j = \{ h \lb [U_1,\ldots,U_j], u \rb \mid u \in U_j \}
\end{equation*}
used in the definition of~$h^*$ are hyperedges in~$(V_2,E_2)$. This is simply because sets of the form~$\{ \lb [U_1,\ldots,U_j], u \rb \mid u \in U_j \}$ are hyperedges in~$\HH(V_1,E_1)$ by definition, and~$h$ is a homomorphism of type~$\HH(V_1,E_1) \rightarrow (V_2,E_2)$.

To establish that~$h^*$ is well defined with respect to equivalence classes, assume:
\begin{equation*}
    \lb p, u \rb = \lb q, v \rb
\end{equation*}
By the definition of the equivalence relation, $u = v$, $p \sqcap q$ exists, and
\begin{equation*}
    \lb p, u \rb = \lb p \sqcap q, \rb = \lb q, u \rb
\end{equation*}
Furthermore, $u$ appears in all the hyperedges in the extension of~$p \sqcap q$ to both~$p$ and~$q$. Let:
\begin{equation*}
    p = [U_1,\ldots,U_n]
\end{equation*}
with prefix
\begin{equation*}
    p \sqcap q = [U_1,\ldots,U_m]
\end{equation*}
Then, with the~$V_j$ as defined in the theorem statement:
\begin{equation*}
    [V_1,\ldots,V_m] \preford [V_1,\ldots,V_n]
\end{equation*}
with~$h(p,u)$ appearing in all the hyperedges~$V_m,\ldots,V_n$. Therefore:
\begin{equation*}
    h^* \lb p, u \rb = h^* \lb p \sqcap q, u \rb
\end{equation*}
and by a dual argument
\begin{equation*}
    h^* \lb q, u \rb = h^* \lb p \sqcap q, u \rb
\end{equation*}
and transitivity completes this part of the proof.

To show~$h^*$ preserves hyperedges, assume~$\{ \lb p_u, u \rb \mid u \in U \}$ is a hyperedge in~$\HH(V_1,E_1)$. There must exist~$p$ such that for all~$u \in U$
\begin{equation*}
    \lb p_u, u \rb = \lb p, u \rb
\end{equation*}
and
\begin{equation*}
    \{ h \lb p, u \rb \mid u \in U \}
\end{equation*}
is a hyperedge in~$(V_2,E_2)$. By definition, there exist~$q$ such that
\begin{equation*}
    h^* \{ \lb p, u \rb \mid u \in U \}  = \{ \lb q, h \lb p, u \rb \rb \mid u \in U \} 
\end{equation*}
The right hand side is equal to
\begin{equation*}
    \{ \lb q, v \rb \mid v \in \{ h \lb p, u \rb \mid u \in U \} \}
\end{equation*}
and so, by the definition of hyperedge in~$\HH(V_2,E_2)$, $h^*$ preserves hyperedges.

Finally, we must verify the comonad axioms.
Firstly we aim to show:
\begin{equation*}
  \ve^*_{\As} = id_{\HH\As}  
\end{equation*}
Consider~$\lb [U_1,\ldots,U_n], a \rb \in \HH \As$. There exists~$[V_1,\ldots,V_n]$ such that
\begin{align*}
    \ve_{\As}^*(\lb [U_1,\ldots,U_n], a \rb) &= \lb [V_1,\ldots,V_n], \ve_{\As}(\lb [U_1, \ldots, U_n], a \rb) \rb\\ &= \lb [V_1, \ldots, V_n], a \rb
\end{align*}
It is therefore sufficient to show that $U_j = V_j$ for~$1 \leq j \leq n$. We have:
\begin{equation*}
    V_j = \{ \epsilon \lb [U_1,\ldots,U_j], a_j \rb \mid a_j \in U_j \} = \{ a_j \mid a_j \in U_j \} = U_j
\end{equation*}
completing this part of the proof.

Next, we aim to show, for all~$f : \HH(\As) \rightarrow \Bs$:
\begin{equation*}
   \ve \circ f^* = f 
\end{equation*}
For~$\lb p, a \rb \in \HH \As$, there exists~$q$ such that
\begin{equation*}
    \ve_{\As} \circ f^*(\lb p, a \rb) = \ve_{\As}(\lb q, f(\lb p, a \rb) \rb) = f(\lb p, a \rb)
\end{equation*}

It remains to confirm the third axiom, for all
\begin{equation*}
f : \HH(\As) \rightarrow \Bs \quad\mbox{ and }\quad g : \HH(\Bs) \rightarrow \Cs
\end{equation*}
the following distributivity condition holds
\begin{equation*}
(g \circ f^*)^* = g^* \circ f^*    
\end{equation*}
Again consider~$\lb [U_1,\ldots,U_n], a \rb \in \HH \As$. There exist
\begin{equation*}
    V_1,\ldots,V_n \qquad W_1,\ldots,W_n \qquad X_1,\ldots,X_n
\end{equation*}
such that:
\begin{align*}
    &g^* \circ f^* \lb [U_1,\ldots,U_n], a \rb \\ &= g^* \lb [V_1,\ldots,V_n], f(\lb [U_1,\ldots,U_n], a \rb) \rb\\
    &= \lb [W_1,\ldots,W_n], g(\lb [V_1,\ldots,V_n], f(\lb [U_1,\ldots, U_n], a \rb) \rb) \rb
\end{align*}
and
\begin{align*}
    &(g \circ f^*)^* (\lb [U_1,\ldots,U_n], a \rb) \\ &= \lb [X_1,\ldots,X_n], g \circ f^*(\lb [U_1,\ldots,U_n], a\rb) \rb\\
    &= \lb [X_1,\ldots, X_n], g(\lb [V_1,\ldots,V_n], f(\lb [U_1,\ldots,U_n], a\rb) \rb) \rb
\end{align*}
Therefore, it is sufficient to show
\begin{equation*}
    [W_1,\ldots,W_n] = [X_1,\ldots,X_n]
\end{equation*}
For index~$1 \leq j \leq n$
\begin{align*}
    X_j &= \{ g \circ f^* (\lb [U_1,\ldots,U_j], a_u \rb) \mid a_u \in U_j \}\\
    &= \{ g(\lb [V_1,\ldots,V_j], f(\lb [U_1,\ldots,U_j], a_u \rb)\rb) \mid a_u \in U_j \}
\end{align*}
and
\begin{align*}
    W_j &= \{ g(\lb [V_1,\ldots,V_j], a_v \rb)  \mid a_v \in V_j \}\\
    &= \{ g(\lb [V_1,\ldots,V_j], f(\lb [U_1,\ldots,U_j], a_u \rb) \rb) \mid a_u \in U_j \}
\end{align*}
Therefore the lists have the same length and are equal pointwise.

\end{proof}

\textbf{Remark} The above arguments are formally almost identical to those for the guarded bisimulation comonads~$\Gg$. Slight modifications are required to replace relation preservation with hyperedge preservation. The details are conceptually clearer and more uniform than for the various notions of guarding. The notions of related tuples and guarded sets merge into the single concept of hyperedges.

This comonad restricts to resource bounded versions.
\begin{theorem}
For each~$k > 0$ there is a comonad in Kleisli form~$(\Hk, \ve, (\cdot)^*)$ on $\hgraph$, given by restricting focussed plays to hyperedges of size bounded by~$k$. 
\end{theorem}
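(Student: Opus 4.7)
The plan is to follow the proof of the unbounded hypergraph comonad theorem essentially verbatim, noting at each stage that the size bound $k$ is preserved by all the comonad operations. Concretely, I define $\Hk \hyp{}$ exactly as $\HH \hyp{}$ but restrict attention to focussed plays $\langle [U_1, \ldots, U_n], a \rangle$ in which each $U_i \in \hedg{}$ has $|U_i| \leq k$; the equivalence relation and the hyperedges $\{\lb p, a \rb \mid a \in U\}$ are defined identically. Since equivalence only identifies plays with a common nonempty prefix and the focus element does not affect sizes, the hyperedges of $\Hk \hyp{}$ all have size $\leq k$, and $\Hk \hyp{}$ is a well-defined hypergraph.

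Next I would check that the counit and coextension restrict appropriately. For the counit, $\ve$ sends a hyperedge $\{\lb p, a \rb \mid a \in U\}$ of $\Hk \hyp{}$ to $U$, which is a hyperedge of $\hyp{}$ of size $\leq k$, so $\ve : \Hk \hyp{} \to \hyp{}$ is a morphism of hypergraphs. For coextension, given $h : \Hk \hyp{1} \to \hyp{2}$ and a focussed play $\langle [U_1, \ldots, U_n], a \rangle$ with each $|U_i| \leq k$, the defining sets
\[
V_j \,=\, \{ h \lb [U_1, \ldots, U_j], a_j \rb \mid a_j \in U_j \}
\]
satisfy $|V_j| \leq |U_j| \leq k$ because $h$ acts as a function on vertices. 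Thus $h^* \lb [U_1, \ldots, U_n], a \rb$ is a focussed play in $\Hk \hyp{2}$, so the coextension corestricts to $\Hk \hyp{2}$.

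Finally, the comonad equations $\ve^* = \id$, $\ve \circ f^* = f$ and $(g \circ f^*)^* = g^* \circ f^*$ hold by the identical equational manipulations as in the unbounded case, since the calculations in the previous theorem compare plays componentwise and never invoke an upper bound on hyperedge sizes. The only verification that could go wrong is closure of the coextension under the bound, which I expect to be the main (but very mild) obstacle; it is handled by the observation above that $|V_j| \leq |U_j|$. Hence $(\Hk, \ve, (\cdot)^*)$ is a comonad in Kleisli form on $\hgraph$.
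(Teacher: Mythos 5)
Your proposal is correct and matches the paper's (implicit) argument: the paper states this theorem without proof as a direct restriction of the unbounded hypergraph comonad, exactly as it handles the analogous resource-bounded guarded comonads by ``inspecting the previous proofs and noting that bounding the size of the guarded sets causes no trouble.'' Your identification of the one point needing care --- that the coextension corestricts because $|V_j| \leq |U_j| \leq k$, since $h$ acts as a function on vertices --- is precisely the right observation.
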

We now relate this construction to the guarded comonads for relational structures. For each notion of guarding $\gd$, there is a functor $H^{\gd} : \CS \to \hgraph$, which sends $\As$ to the hypergraph $(A, E)$, where $E$ is the set of $\gd$-guarded subsets of $\As$. 
\begin{restatable}{theorem}{EMLaw}
\label{thm:em-law}
For each notion of guarding $\gd$, there is a natural isomorphism
\[ \theta : H^{\gd} \circ \GG^{\gd} \; \cong \; \HH \circ H^{\gd} \]
such that, for all $\As$ in $\CS$, and omitting $\gd$ to declutter notation, the following diagrams commute:
\[ \begin{tikzcd}[ampersand replacement=\&]
H \GG A \ar[d, "\theta_{\As}"'] \ar[r, "H \ve_{\As}"] \& H \As \ar[dl, "\ve_{H \As}"] \\
\HH H \As
\end{tikzcd}
\quad
\begin{tikzcd}[ampersand replacement=\&]
H\GG\As \ar[d, "\theta_{\As}"'] \ar[r, "H\delta_{\As}"] \& H\GG\GG\As \ar[r, "\theta_{\GG \As}"] \& \HH H \GG \As \ar[dl, "\HH \theta_{\As}"] \\
\HH H \As \ar[r, "\delta_{H \As}"'] \&  \HH \HH H \As
\end{tikzcd}
\]
Thus $\theta$ forms an Eilenberg-Moore law (see e.g.~\cite{jacobs2017introduction}).
\end{restatable}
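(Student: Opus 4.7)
The plan is to observe that $H^{\gd}\GG^{\gd}\As$ and $\HH H^{\gd}\As$ share the same underlying vertex set, and to take $\theta_{\As}$ to be the identity on representatives. Both have vertices given by equivalence classes $\lb p, a\rb$ of focussed plays in which $p$ is a nonempty sequence of $\gd$-guarded subsets of $\As$ (equivalently, hyperedges of $H^{\gd}\As$), and the quotient relation is determined by the same guarded-set data in both cases. Defining $\theta_{\As}\lb p, a\rb := \lb p, a\rb$, the work is to check that this bijection is a hypergraph morphism in both directions, that it is natural in $\As$, and that it satisfies the two compatibility diagrams.

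The substantive step is the hyperedge matching. The hyperedges of $\HH H^{\gd}\As$ are precisely the sets $S_p = \{\lb p, a\rb \mid a \in \lbl(p)\}$ arising from plays $p$ of $\gd$-guarded sets in $\As$. By Lemma~\ref{lem:guarded-set-properties}(1) each $S_p$ is $\gd$-guarded in $\GG^{\gd}\As$, hence a hyperedge of $H^{\gd}\GG^{\gd}\As$. Conversely, by Lemma~\ref{lem:guarded-set-properties}(2) every $\gd$-guarded subset of $\GG^{\gd}\As$ is contained in some $S_p$; such a subset has the form $T = \{\lb p, a\rb \mid a \in V\}$ for some $V \subseteq \lbl(p)$, and by closure of $\gd$-guarded sets under subsets (Lemma~\ref{lem:guarded-set-preservation}), $V$ is itself a hyperedge of $H^{\gd}\As$. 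The extended play $p' := [U_1, \ldots, U_{n-1}, V]$ is therefore valid, and the defining equivalence yields $\lb p, a\rb = \lb p', a\rb$ for every $a \in V$, identifying $T$ with $S_{p'}$; so $T$ is a hyperedge of $\HH H^{\gd}\As$ as well.

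Naturality of $\theta$ follows because both $H^{\gd}(\GG^{\gd}f)$ and $\HH(H^{\gd}f)$ act by the same formula on representatives, while $\theta$ is the identity on representatives. For the triangle, both counits reduce to $\lb p, a\rb \mapsto a$. For the square, one unpacks and checks that both composites send $\lb [U_1, \ldots, U_n], a\rb$ to $\lb [T_1, \ldots, T_n], \lb [U_1, \ldots, U_n], a\rb\rb$ with $T_j = \{\lb [U_1, \ldots, U_j], b\rb \mid b \in U_j\}$; this works because the comultiplications for $\GG^{\gd}$ and $\HH$ are given by formally identical expressions on representatives. The main obstacle is the hyperedge matching in the second paragraph: one must use both parts of Lemma~\ref{lem:guarded-set-properties} in combination with closure under subsets to realize those $\gd$-guarded subsets of $\GG^{\gd}\As$ that are proper subsets of some $S_p$ on the $\HH H^{\gd}\As$ side.
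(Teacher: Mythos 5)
Your overall strategy is the same as the paper's: $\theta_{\As}$ is the identity, the two hypergraphs share their vertex set by construction, and the naturality square, triangle and pentagon all reduce to the observation that the functorial actions, counits and comultiplications of $\GG^{\gd}$ and $\HH$ are given by formally identical expressions on representatives. You are in fact more explicit than the paper, whose proof passes over the hyperedge matching entirely; you are right that this is the one substantive point, since the hyperedges of $\HH H^{\gd}\As$ are exactly the full sets $S_p$, whereas the hyperedges of $H^{\gd}\GG^{\gd}\As$ are all $\gd$-guarded subsets of $\GG^{\gd}\As$, which by Lemma~\ref{lem:guarded-set-preservation} are closed under passing to subsets of some $S_p$.

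However, the step you use to close this gap fails as written. Given $T = \{\lb p, a\rb \mid a \in V\}$ with $p = [U_1,\ldots,U_n]$ and $V \subseteq \last(p)$, you set $p' := [U_1,\ldots,U_{n-1},V]$ and claim $\lb p,a\rb = \lb p',a\rb$ for all $a \in V$. For $n=1$ the greatest common prefix $p \meet p'$ is empty, so the defining equivalence does not apply at all; for $n \geq 2$ the unique path between $p$ and $p'$ passes through $\last(p \meet p') = U_{n-1}$, and condition (iii) of the equivalence then requires $a \in U_{n-1}$, which need not hold. Concretely, with a single binary relation, atom guards, $p = [\{a,b\},\{b,c\}]$ and $V = \{c\}$, one gets $\lb p, c\rb \neq \lb p', c\rb$, so $T \neq S_{p'}$. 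The repair is to \emph{append} rather than replace: take $p' := [U_1,\ldots,U_n,V]$. Then $p \meet p' = p$ is non-empty, the guarded sets on the path between $p$ and $p'$ are just $U_n$ and $V$, and every $a \in V \subseteq U_n$ lies in both, so $\lb p,a\rb = \lb p',a\rb$ and $T = S_{p'}$ as required. (You should also note that $T \subseteq S_p$ really does have the form $\{\lb p,a\rb \mid a \in V\}$ with $V = \ve_{\As}(T)$, which uses the injectivity of $\ve_{\As}$ on $S_p$ from Lemma~\ref{lem:guarded-set-properties}(3).) With this correction the rest of your argument goes through and matches the paper's intent.
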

\begin{proof}
Assume~$h : \As \rightarrow \Bs$, and~$U$ is a hyperedge of~$H^{\gd}(\As)$. There must be a tuple of elements~$\overline{a}$, with support containing~$U$, and guard formula~$\varphi$ such that~$\varphi^\As(\overline{a})$ holds. As all the various guard formulae are existential positive, this implies~$\varphi^\Bs(h\overline{a})$, and so~$h(U)$ is a hyperedge of~$H^{\gd}(\Bs)$. So~$H^\gd(h)$ is a valid hypergraph morphism. Composition and identities are trivially preserved, so the construction is functorial.

That the triangle and pentagon commute is straightforward, as both counits and comultiplications~``do the same thing'' at the level of sets.
\end{proof}
In fact,  $\theta$ is simply the identity, showing the close relationship of these constructions.

In the atom and loose guarded cases, there is a corresponding version for the resource-bounded comonads, for functors~$H^\gd_k$ restricting to~$k$-guarded sets.
\begin{restatable}{theorem}{GradedEMLaw}
If~$\gd$ is either atom or loose guards, there is a natural isomorphism
\[ \theta : H^{\gd}_k \circ \Gk^{\gd} \; \cong \; \Hk \circ H^{\gd}_k \]
which commutes with the counits and comultiplications of $\Gk$ and $\Hk$.
Thus $\theta$ forms an Eilenberg-Moore law.
\end{restatable}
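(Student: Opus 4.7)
The plan is to adapt the proof of Theorem~\ref{thm:em-law} to the graded setting, observing that for atom and loose guards, restricting guarded sets to cardinality at most $k$ is compatible with all the constructions involved. First I would check that $H^{\gd}_k$ is well-defined as a functor $\CS \to \hgraph$: by Lemma~\ref{lem:guarded-set-preservation}, guarded sets are preserved by homomorphisms, and for atom/loose guards the image of a $k$-guarded set $U$ remains $k$-guarded, since the witnessing tuple of length at most $k$ maps to a tuple of length at most $k$ whose support contains~$h(U)$.

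Second, I would argue that $\theta_{\As}$ is again the identity map on underlying vertices, just as in the unbounded case. The vertices of $H^{\gd}_k \Gk^{\gd} \As$ and of $\Hk H^{\gd}_k \As$ are both sets of equivalence classes of focussed plays built from $k$-guarded sets of $\As$, so they coincide on the nose. For the hyperedges, I would use the $k$-bounded analogue of Lemma~\ref{lem:guarded-set-properties}: every $\gd$-guarded set in $\Gk^{\gd} \As$ is of the form $\{\lb p, a \rb \mid a \in U\}$ with $U = \lbl(p)$ a $k$-guarded set of~$\As$, and since $\ve_\As$ restricts to a bijection onto $U$, such sets have cardinality at most~$k$. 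These match exactly the hyperedges of $\Hk H^{\gd}_k \As$, which by construction are sets $\{\lb p, a \rb \mid a \in U\}$ with $U$ a hyperedge of $H^{\gd}_k \As$ of size at most~$k$.

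Third, the commuting diagrams for counits and comultiplications come essentially for free. Since $\theta$ is the identity on underlying sets, and the counits and comultiplications of $\Gk^{\gd}$ and $\Hk$ are literally the restrictions of those for the unbounded comonads $\GG^{\gd}$ and~$\HH$, the triangle and pentagon diagrams restrict directly from those established in Theorem~\ref{thm:em-law}.

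The main obstacle is ensuring that the $k$-bound is preserved across the isomorphism, specifically that every $\gd$-guarded set in $\Gk^{\gd} \As$ has cardinality at most~$k$. This is precisely where restricting to atom and loose guards is essential: for these notions, Lemma~\ref{lem:guarded-set-properties} forces every guarded set in $\Gg \As$ to lie inside some $S_p$, on which $\ve$ is an embedding, giving a cardinality bound inherited from $\lbl(p)$. For clique guards, by contrast, the witnessing existential formula $\cliquen{n}$ can involve arbitrarily many variables to guard a small clique, which is why the graded version of the result fails in that case and must be excluded here.
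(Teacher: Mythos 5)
Your proposal is correct and follows essentially the same route as the paper: the graded statement is obtained by restricting the unbounded Eilenberg--Moore law, with $\theta$ the identity on underlying sets, and the commuting diagrams inherited directly. One point of imprecision is worth flagging, and it is exactly the bookkeeping issue the paper's own proof singles out: you assert, citing Lemma~\ref{lem:guarded-set-properties}, that every $\gd$-guarded set of $\Gk^{\gd}\As$ equals some $S_p$ with $U = \lbl(p)$, but the lemma only gives containment in some $S_p$; since hyperedges of $\Hk H^{\gd}_k\As$ are the full sets $S_p$ while guarded sets of $\Gk^{\gd}\As$ are closed under subsets, you still need to observe that a guarded subset $\{\lb p,a\rb \mid a \in U\}$ with $U \subseteq \lbl(p)$ coincides with $S_{q}$ for the extended play $q = p \cdot [U]$, which is legitimate precisely because $k$-guarded sets are closed under subsets and $H^{\gd}_k$ (rather than $H^{\gd}$) records all of them as hyperedges. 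Your diagnosis of why clique guards must be excluded matches the paper's.
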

\begin{proof}
This is a straightforward restriction of the proof for theorem~\ref{thm:em-law}. Note that we need to use the restricted functors~$H^\gd_k$ to be consistent with the bookkeeping of guarded sets and hyperedges. This is because the guarded set comonad closes under subsets, but the hypergraph comonad does not so we must allow for this difference.
\end{proof}
We can also develop characterizations of the coalgebras for $\HH$ and $\Hk$, and of hypergraph bisimulation in terms of spans of open maps, along entirely similar lines to what we have done for $\GG$ and $\Gk$.
The characterization of hypergraph coalgebras in terms of guarded tree decompositions shows that hypergraph coalgebras can be seen as \emph{acyclicity witnesses}. There are interesting connections to the work of Otto \cite{Otto2012a,Otto2020a,Otto2020b}, which we hope to pursue in future work.

\section{A Second Grading}
\label{sec:depth-bounds}
We have considered a grading based on restricting the size of guarded sets used in the bisimulation game, which corresponds to restricting the number of variables used in guards.
Another natural restriction is to bound the number of rounds of the game, and this parameter also has a clear logical interpretation. Extending our previous notation, we write~$\As \gsimd \Bs$ if Duplicator has a winning strategy for $d$ rounds of the~$\gd$-guarded simulation game, and~$\As \gbsimd \Bs$ for the corresponding notion for the full bisimulation game.

We can then further strengthen the relationships of theorem~\ref{bisimlogth} with the following standard results~\cite{gradel2014freedoms} relating logic and game duration.
\begin{theorem}
\label{thm:depth-bounded}
\begin{enumerate}
\item $\As \gbsimd \Bs$ iff $\As$ and $\Bs$ satisfy the same $\gd$-guarded formulas of guarded quantifier depth $\leq d$.
\item $\As \gsimd \Bs$ iff every existential positive $\gd$-guarded formula of guarded quantifier depth $\leq d$ satisfied by $\As$ is satisfied by $\Bs$.
\end{enumerate}
\end{theorem}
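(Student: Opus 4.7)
The plan is to prove both statements by the standard Ehrenfeucht--Fra\"{i}ss\'{e} pattern, adapted to the guarded setting, via induction on the depth parameter $d$. It is more convenient to prove a relativised version: for each partial isomorphism (resp.\ homomorphism) $\varphi : X \to Y$ between $\gd$-guarded subsets of $\As$ and $\Bs$, Duplicator has a winning strategy for $d$ further rounds starting from position $\varphi$ if and only if $\varphi$ preserves all $\gd$-guarded formulas (resp.\ existential positive $\gd$-guarded formulas) of guarded quantifier depth $\leq d$. The original statements are then the special case $X = Y = \vempty$, $\varphi = \vempty$.

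For the $(\Rightarrow)$ direction, I would induct on formula structure. Atomic formulas in the variables assigned to $X$ are handled by the partial isomorphism (resp.\ homomorphism) property of $\varphi$; boolean combinations are routine; and a guarded quantifier $\exists \vx.\, G(\vx,\vy) \wedge \psi(\vx,\vy)$ of depth $d+1$ is analysed by letting any tuple $\va$ witnessing the guard in $\As$ give rise to a guarded set $X'$ extending $X$, which Spoiler plays in round $n+1$; Duplicator's winning strategy provides a compatible $Y'$ and partial (iso)morphism $\varphi' : X' \to Y'$, and the inductive hypothesis applied to $\psi$ at depth $d$ transfers satisfaction across $\varphi'$. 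The universal guarded quantifier and the reverse direction in the bisimulation case are handled dually by the two-sided game.

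For the $(\Leftarrow)$ direction, the key construction is a family of \emph{guarded Hintikka formulas} $\chi^{d}_{\As,X}(\vx)$, where $X$ ranges over $\gd$-guarded subsets of $\As$ and $\vx$ enumerates $X$. These are defined by recursion on $d$: at depth $0$, take the conjunction of all atomic and negated atomic formulas true of $\vx$ in $\As$; at depth $d+1$, take the conjunction of $\chi^{d}_{\As,X}$ with, for every guarded set $X' \supseteq X \cap X'$ in $\As$ (and every partial iso with $X$), a guarded existential asserting that some witness realises $\chi^{d}_{\As,X'}$, together with a guarded universal asserting that every such guarded extension in the target realises one of the finitely many depth-$d$ Hintikka types. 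For the existential positive (simulation) case, drop the negations and the universal conjuncts. The central finiteness lemma is that, up to logical equivalence, there are only finitely many Hintikka types at each depth $d$; this is immediate by induction since (i) the signature restricted to the variables of a guarded set is finite, and (ii) the guards in each fragment can be enumerated up to equivalence. Duplicator's strategy is then: maintain the invariant that $\Bs, \varphi(\vx) \models \chi^{d-n}_{\As,X_n}(\vx)$ after $n$ rounds, and in response to a Spoiler move $X_{n+1}$ read off the required $Y_{n+1}$ and $\varphi_{n+1}$ from the guarded existential conjunct of $\chi^{d-n}_{\As,X_n}$ satisfied by $\Bs$.

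The main obstacle is the bookkeeping imposed by the compatibility condition $\varphi_{n+1}|_X = \varphi_n|_X$ on the overlap $X = X_{n+1} \cap X_n$: the Hintikka formulas must be indexed not just by a guarded set but by its intersection with the current position, so that the existential quantifiers in $\chi^{d}_{\As, X_{n+1}}$ range over variables fresh relative to $X_n \cap X_{n+1}$, while the variables in the overlap are held fixed. A secondary subtlety is the clique-guarded case, where the guard itself is an existential positive formula $\cliquen{n}$ rather than an atom; here one must check that using $\cliquen{n}$ as the guard formula in the Hintikka construction does not increase guarded quantifier depth, which follows because $\cliquen{n}$ is quantifier-free over the Gaifman graph structure after passing to guarded sets, and guarded quantifier depth counts only nested guarded quantifiers.
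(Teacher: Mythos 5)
The paper gives no proof of this theorem at all: it is stated as a standard result and attributed to \cite{gradel2014freedoms}, so there is no internal argument to compare against. Your overall strategy --- the relativised statement about positions, the forward induction on formula structure, and guarded Hintikka formulas for the converse --- is the standard \Fraisse{}-style route, and the $(\Rightarrow)$ direction is fine as sketched.

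There is, however, a genuine gap in the $(\Leftarrow)$ direction, located exactly at your ``central finiteness lemma''. You justify it by saying that (i) the signature restricted to the variables of a guarded set is finite and (ii) the guards can be enumerated up to equivalence; neither point bounds the \emph{cardinality} of guarded sets. The theorem is stated for $\gsimd$ and $\gbsimd$, i.e.\ bounded depth but unbounded guard width, and for loose and clique guards a guarded set can be arbitrarily large even over a finite signature: in a graph with a single binary relation, every clique in the Gaifman graph is both loosely and clique guarded. Consequently there are already infinitely many pairwise inequivalent depth-$0$ types (describing an $n$-element guarded set needs $n$ free variables), the conjunction over all guarded extensions in your depth-$(d+1)$ Hintikka formula is an infinite conjunction of inequivalent formulas, and no finitary characteristic formula exists. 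This is precisely the point where the classical argument needs finitely many types per round, so the proof does not go through as written. Your construction is sound for atom guards over a finite signature (guarded sets then have size at most the maximal arity) and for the width-bounded relations $\gsimk$ or the paper's combined bound $\Gkdg$; for the unbounded-width loose and clique cases you would need either to restrict the statement or to replace the Hintikka construction with an argument that tolerates infinitely many types, neither of which is indicated in the proposal.
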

Note that as guarded quantifiers occur over tuples of variables, the guarded-quantifier depth will typically be lower than the conventional quantifier depth in the sense of first-order logic.

This resource parameter can also be incorporated into our framework.
As with the width grading considered in earlier sections, our constructions can only be bounded in the atom and loosely guarded cases. Again, this is because of the existential quantifiers present in clique guards. For similar reasons, we must restrict our attention to a comonad where  plays only consist of exact guarded sets.
\begin{theorem}
\label{thm:exactly-guarded-comonad}
The construction of theorem~\ref{thm:guarded-comonad} can be restricted to only consider exactly guarded sets in  plays, yielding a comonad~$(\Geg,\ve,(\cdot)^*)$.
\end{theorem}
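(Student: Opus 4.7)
The plan is to re-run the proof of theorem~\ref{thm:guarded-comonad} with plays restricted to lists of exactly guarded sets, and identify the single new closure obligation that needs verification. The universe of $\Geg \As$ is defined as in $\GG \As$ but restricting the plays to non-empty lists $[U_1, \ldots, U_n]$ where each $U_j$ is exactly $\gd$-guarded in $\As$. The equivalence relation $\sim$, the relational interpretation, and the counit $\ve_{\As}$ carry over verbatim from the unrestricted case, so by inspection of the earlier proofs $\ve_{\As}$ is a well-defined homomorphism.

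The essential new step is to establish an exact version of lemma~\ref{lem:guarded-set-properties}(1): for any play $p$ in $\Geg \As$, the set $S_p = \{\lb p, a \rb \mid a \in \last(p)\}$ is exactly $\gd$-guarded in $\Geg \As$. Since $\last(p)$ is exactly guarded in $\As$, it is the support of some tuple $\va = (a_1, \ldots, a_n)$ with $\As \models G(\va)$ for an appropriate guard $G$. The tuple $\va' = (\lb p, a_1 \rb, \ldots, \lb p, a_n \rb)$ in $\Geg \As$ has support exactly $S_p$, and using the definition of relations in $\Geg \As$ together with closure of the positive existential operations that build $G$ from atomic formulas, one obtains $\Geg \As \models G(\va')$. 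Hence $S_p$ is exactly guarded in $\Geg \As$. Parts (2) and (3) of lemma~\ref{lem:guarded-set-properties} transfer without change, as their proofs are purely about equivalence classes and the counit.

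With this in hand, I would verify that the coextension $h^*$ restricts to a well-defined homomorphism $\Geg \As \to \Geg \Bs$. Well-definedness on equivalence classes and preservation of relations follow by the identical argument as in proposition~\ref{CoKleisliWellDefined}. The new obligation is that each $V_j$ in the resulting play is exactly guarded in $\Bs$; but $V_j$ is precisely the image under $h$ of $S_{[U_1, \ldots, U_j]}$, which is exactly guarded in $\Geg \As$ by the preceding step, and exactly guarded sets are preserved by homomorphisms by lemma~\ref{lem:guarded-set-preservation}. Hence $V_j$ is exactly guarded in $\Bs$ and the output play lies in $\Geg \Bs$.

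Finally, the three Kleisli equations follow from the identical calculations appearing in the proof of theorem~\ref{thm:guarded-comonad}, since restricting the plays to exactly guarded sets does not affect any of those manipulations: the proofs work pointwise on lists and never introduce sets that fail to be exactly guarded, provided the coextension does not leave $\Geg$. The main obstacle in the whole argument is precisely this latter closure property, and once the exact-guardedness of $S_p$ is established, it becomes a direct application of homomorphism preservation of exactly guarded sets from lemma~\ref{lem:guarded-set-preservation}.
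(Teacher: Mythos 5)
The paper prints no proof of this theorem at all, so there is nothing to compare against line by line; your reconstruction follows the only natural route, and you correctly isolate the one genuine obligation: that $S_p$ is \emph{exactly} $\gd$-guarded in $\Geg\As$, so that each $V_j$ produced by the coextension is the homomorphic image of an exactly guarded set and hence exactly guarded by Lemma~\ref{lem:guarded-set-preservation}. For atom and loose guards your verification of this obligation is sound: the guard is a conjunction of atomic formulas all of whose variables are interpreted in $\last(p)$, so each conjunct lifts directly to a tuple of $R^{\Geg\As}$ with common play $p$, and the support of the lifted tuple is exactly $S_p$. The remaining bookkeeping (well-definedness of $\sim$, the counit, the three Kleisli equations) does indeed carry over verbatim, as you say.

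The weak point is the clique-guarded case. The justification ``closure of the positive existential operations that build $G$'' does not establish $\Geg\As \models \cliquen{n}(\va')$: the formula $\cliquen{n}$ contains genuine existential quantifiers (the witnesses making each pair of its arguments Gaifman-adjacent), and positive existential facts are transported \emph{along} homomorphisms out of a structure, whereas here there is no homomorphism $\As \to \Geg\As$ to transport them along (the counit points the other way). The claim is still true, but the witnesses must be built by hand: for distinct $a,b \in \last(p)$ choose a tuple $\vec{c} \in \RA$ containing both, observe that the support of $\vec{c}$ is an exactly guarded set, and append it to $p$ to form a longer play $q$; then $\lb q,a\rb = \lb p,a\rb$ and $\lb q,b\rb = \lb p,b\rb$ co-occur in a tuple of $R^{\Geg\As}$, so $S_p$ is a clique in the Gaifman graph of $\Geg\As$ and hence exactly clique-guarded. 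Note that this repair necessarily lengthens plays --- precisely the manoeuvre the paper identifies as the obstruction to depth-bounding the clique-guarded comonad, and the reason the subsequent theorem on $\Gdg$ is restricted to atom and loose guards. You should either supply this play-extension argument for the clique case or explicitly restrict your treatment of $S_p$ to atom and loose guards.
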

These variants will capture the same logical fragment, as it is sufficient to consider exactly guarded sets in the model comparison games. In the atom and loose guarded cases, we can then restrict to yield comonads bounded by play length.
\begin{theorem}
If~$\gd$ is either atom or loose guards and~$d > 0$, there is a comonad $\Gdg$, given by restricting the construction of theorem~\ref{thm:exactly-guarded-comonad} to  plays of length at most~$d$. Furthermore, for all $\sg$-structures~$\As$, $\Bs$:
\begin{enumerate}
    \item There is a bijective correspondence between:
    \begin{enumerate}
        \item CoKleisli morphisms~$\Gdg(\As) \rightarrow \Bs$.
        \item Winning strategies for Duplicator in the $d$-round $\gd$-guarded simulation game from~$\As$ to~$\Bs$.
    \end{enumerate}
    \item $\As \gbsimd \Bs$ iff $\As \lrarr^{\Gdg} \Bs$.
\end{enumerate}
\end{theorem}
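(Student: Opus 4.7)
The plan is to obtain each claim by restricting the corresponding unbounded arguments to plays of length at most $d$. First I would verify that the length-$d$ restriction is preserved by the counit and coextension: $\ve$ simply returns the focus and ignores the play, while $h^{*}$ sends a representative based on a play of length $n$ to one based on a play of the same length $n$. Hence the restriction is closed under the comonad operations, and each equation in Theorem~\ref{thm:guarded-comonad} holds pointwise on plays of a fixed length, so carries over unchanged. The use of exactly guarded sets inherited from Theorem~\ref{thm:exactly-guarded-comonad} is essential in the atom and loose cases, as it is what allows the length grading to be functorial: with merely guarded sets, a homomorphism could collapse a play's sets into strictly smaller subsumed sets and the bookkeeping would fail.

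For part~(1), I would mirror the induction on round number used earlier. Given a coKleisli morphism $h : \Gdg \As \to \Bs$, Duplicator's response to Spoiler's moves $U_1, \ldots, U_n$ (for $n \leq d$) is obtained by embedding $U_n$ into $\Gdg \As$ as $\{ \lb [U_1, \ldots, U_n], u \rb \mid u \in U_n \}$ via Lemma~\ref{lem:guarded-set-properties}, then composing with $h$; consistency between rounds follows from the quotient identifications. Conversely, from a winning $d$-round strategy I would define $h(\lb p, u \rb)$ to be the value assigned to $u$ by the strategy after the moves recorded in $p$, which is well-defined on $\sim$-classes by the overlap constraint on Duplicator's responses. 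The length cap on plays in $\Gdg \As$ matches precisely the rounds available in the game.

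For part~(2), I would work in the Eilenberg-Moore category $\CS^{\Gdg}$ and adapt Theorem~\ref{thm:bisimulation-span-correspondence}. Paths through $\Gdg$-coalgebras have length at most $d$, so from a span of open pathwise embeddings between cofree coalgebras I would reconstruct a $d$-round back-and-forth strategy by repeatedly lifting path extensions via the open map property. Conversely, from such a strategy I would build the diagonal coalgebra on positions $\lt U, \iota, V \rt$ using sequences of length at most $d$, with coordinate projections giving the required open span of embeddings.

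The main obstacle I anticipate is verifying that truncation at length $d$ interacts cleanly with the two-sided bisimulation argument. The crucial observation is that any extension of a length-$n$ path to a length-$(n+1)$ path within a $\Gdg$-coalgebra automatically satisfies $n+1 \leq d$, so the path-lifting property of open maps restricts without loss, and the diagonal construction truncates naturally at depth $d$ without discarding any witnesses. With these observations in place, the earlier proofs transfer essentially verbatim.
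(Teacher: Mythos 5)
Your proposal is correct and takes essentially the same route as the paper, whose entire proof is the observation that the earlier arguments transfer verbatim because no step ever needed to lengthen a play; your check that the counit and coextension preserve play length, plus the round-by-round and open-map adaptations, is exactly the expansion of that remark. (Your aside about \emph{why} exact guardedness is needed is speculative and not load-bearing, since the theorem already builds on the exactly-guarded comonad of Theorem~\ref{thm:exactly-guarded-comonad}.)
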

\begin{proof}
The arguments are formally identical to the previous theorems, once we note that in the atom and loose guarded cases we made no essential use of lengthening  plays in our arguments.
\end{proof}
These variants introduce new combinatorial parameters, for coverings by exactly guarded sets, and bounding the depth of the forest. As suggested by our notation, in the general case we can bound on both parameters together, yielding comonads~$\Gkdg$ restricting both guarded set width and game duration, simultaneously exploiting the logical correspondences of theorems~\ref{bisimlogth} and~\ref{thm:depth-bounded}.
This extension is routine, so we will not belabour the details.

\section{Atoms and Cliques}
\label{sec:atoms-and-cliques}
We have observed that for $\gd$ either atom or loose guards, the comonad~$\Gg$ restricts to a comonad~$\Gkg$, corresponding to  restricting plays of the bisimulation game of width bound~$k$. Unfortunately, due to the witnesses for the existential quantifiers appearing in clique guards, such a restriction cannot be directly applied in that case. 

To address this, for a signature~$\sg$ we introduce an extended signature~$\sg^*$ with additional relation symbols~$C_i$, with~$i$ ranging over the  positive natural numbers. With this in place, we can expose Gaifman cliques as relational atoms.
\begin{lemma}
For any~$\sg$-structure $\As$ we define a~$\sg^*$-structure~$C(\As)$, extending~$\As$, with:
\begin{equation*}
    C^{C(\As)}_n(a_1,\ldots,a_n) \quad\Leftrightarrow\quad \As \models \cliquen{n}(a_1,\ldots,a_n)
\end{equation*}
This mapping is the object part of a functor $\CStruct{\sg} \rightarrow \CStruct{\sg^*}$. The atom guarded sets of~$C(\As)$ are exactly the clique guarded sets of~$\As$.
\end{lemma}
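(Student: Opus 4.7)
My plan is to verify the three claims in sequence: well-definedness and functoriality of $C$, then the two inclusions for the guarded set equivalence.

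First, to see that $C(\As)$ is a well-defined $\sg^*$-structure, I note that the interpretations of the original symbols $R \in \sg$ are simply inherited from $\As$, while the new predicates $C_n$ are defined by the prescribed biconditional. Given a $\sg$-homomorphism $h : \As \to \Bs$, I would show that $h$, viewed as a function on universes, is also a $\sg^*$-homomorphism $C(h) : C(\As) \to C(\Bs)$. Preservation of the original relations is immediate. For the new predicates, if $C_n^{C(\As)}(a_1,\ldots,a_n)$ holds, then by definition $\As \models \cliquen{n}(a_1,\ldots,a_n)$; since $\cliquen{n}$ is a positive existential formula it is preserved by homomorphisms, so $\Bs \models \cliquen{n}(h(a_1),\ldots,h(a_n))$, i.e., $C_n^{C(\Bs)}(h(a_1),\ldots,h(a_n))$. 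Preservation of identities and composition is trivial since $C$ acts as the identity on the underlying function, completing functoriality.

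Next, for the guarded set equivalence, I would argue both inclusions. Suppose $X$ is atom guarded in $C(\As)$, so $X \subseteq \mathrm{supp}(\va)$ for some tuple $\va$ and some atomic $\sg^*$-formula $G$ containing all variables of $\va$ with $C(\As) \models G(\va)$. If $G$ is one of the new predicates $C_n$, then by the defining biconditional $\As \models \cliquen{n}(\va)$, so $X$ is clique guarded in $\As$. If instead $G$ is an atomic formula over the original signature $\sg$, then $\As \models G(\va)$, and every pair of elements in $\va$ occurs jointly in the relation tuple $\va$, hence forms a clique in the Gaifman graph of $\As$; therefore $\As \models \cliquen{|\va|}(\va)$ and $X$ is clique guarded. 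Conversely, if $X$ is clique guarded in $\As$ via a tuple $\va$ with $\As \models \cliquen{n}(\va)$, then by definition $C_n^{C(\As)}(\va)$ holds, and the atomic formula $C_n(x_1,\ldots,x_n)$ trivially contains all of its variables, so $X$ is atom guarded in $C(\As)$.

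None of the steps presents a real obstacle; the substantive point is the observation that positive existential formulas (the clique formulas $\cliquen{n}$) are preserved by homomorphisms, which is what guarantees both functoriality and the forward direction of the guarded set correspondence. The only subtlety worth flagging is the case split in the first inclusion, where we must also handle atomic guards coming from $\sg$ itself, and there we appeal to the fact that original atomic formulas automatically witness Gaifman cliques.
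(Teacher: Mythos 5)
Your proof is correct and follows essentially the same route as the paper, which likewise derives functoriality from the fact that the $\cliquen{n}$ formulae are existential positive and hence preserved by homomorphisms, and treats the guarded-set equivalence as immediate from the construction. The only difference is that you spell out the case split for atom guards drawn from the original signature $\sg$ (where cliqueness in the Gaifman graph follows from joint occurrence in a relation tuple), a detail the paper leaves implicit.
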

\begin{proof}
Functoriality follows from the~$C_i$ being defined by existential positive formulae, and hence preserved by homomorphisms. The second claim is immediate from the construction.
\end{proof}
This assignment allows us to functorially reduce clique bisimilarity to the atom guarded case.
\begin{proposition}
For~$\sg$-structures~$\As$ and~$\Bs$ the following are equivalent:
\begin{enumerate}
    \item \label{en:clique-bisim} Duplicator has a winning strategy for the clique guarded bisimulation game between~$\As$ and~$\Bs$.
    \item \label{en:atom-bisim} Duplicator has a winning strategy for the atom guarded bisimulation game between~$C(\As)$ and~$C(\Bs)$.
\end{enumerate}
Further, this remains true if we restrict the number of rounds played, the width of guards, or if we restrict to the existential variant of the game.
\end{proposition}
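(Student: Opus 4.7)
The proof hinges entirely on the preceding lemma, which identifies the atom-guarded sets of $C(\As)$ with the clique-guarded sets of $\As$. My plan is to set up a bijection between Duplicator winning strategies in the two games, proceeding round by round, and then observe that all the variants (round bounds, width bounds, existential version) are preserved by this bijection.

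First, I would note that Spoiler's permitted moves coincide under $C$: a clique-guarded set in $\As$ (resp.\ $\Bs$) is by definition the support of a tuple satisfying some $\cliquen{n}$, which by the lemma is exactly an atom-guarded set in $C(\As)$ (resp.\ $C(\Bs)$). So sequences of Spoiler moves in one game correspond bijectively to sequences of Spoiler moves in the other.

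Second, I would verify that Duplicator's responses correspond. Given a partial isomorphism $\vphi : X \to Y$ between clique-guarded sets in $\As$ and $\Bs$, the same function is a partial isomorphism between the induced substructures in $C(\As)$ and $C(\Bs)$: since $X$ is a clique in the Gaifman graph of $\As$, every tuple from $X$ forms a clique, so $C_n^{C(\As)}$ holds on every $n$-tuple from $X$, and likewise for $Y$; the new relations are therefore trivially preserved and reflected. Conversely, since $\sg \subseteq \sg^*$, any partial isomorphism in $C(\As), C(\Bs)$ forgets to a partial isomorphism in $\As, \Bs$. The overlap conditions $\vphi_{n+1}|_X = \vphi_n|_X$ (and its dual) depend only on the set-theoretic behavior of the functions, so they transfer verbatim. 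An induction on round number then lifts winning strategies either way.

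For the stated extensions, bounding the number of rounds is immediate from the round-by-round correspondence. For the width bound, one observes that a clique guard $\cliquen{n}$ has $n$ free variables and is encoded by the single atom $C_n$ of arity $n$, so that $k$-bounded plays in the clique guarded game map precisely to $k$-bounded plays in the atom guarded game on $C(\As)$, $C(\Bs)$. For the existential variant, the argument is identical after replacing ``partial isomorphism'' by ``partial homomorphism''; the $C_n$-relations are preserved for the same reason (tuples in a clique-guarded set always form cliques), and forgetting them yields a partial homomorphism of the original $\sg$-structures.

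The only real bookkeeping obstacle is ensuring that the width parameter on clique guards matches the arity of the $C_n$ symbols, so that width-$k$ clique guarding and width-$k$ atom guarding on the $C_n$-symbols agree; once the arity conventions are fixed, everything else follows mechanically.
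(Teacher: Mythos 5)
Your proposal is correct and follows essentially the same route as the paper: both directions are handled by a round-by-round induction using the lemma identifying atom-guarded sets of $C(\As)$ with clique-guarded sets of $\As$, with the $C_n$-relations handled by the observation that every tuple drawn from a clique again forms a clique, and the overlap and variant conditions transferring verbatim. No gaps.
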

\begin{proof}
To show~\ref{en:clique-bisim} implies~\ref{en:atom-bisim}, we proceed by induction. For the base case, assume without loss of generality that Spoiler plays~$\{ a_1, \ldots, a_n \} \subseteq C(\As)$ in the atom guarded game. We must have~$C_n^{C(\As)}(a_1, \ldots, a_n)$, and so~$\As \models \cliquen{n}(a_1,\ldots, a_n)$. As we have a winning strategy for the clique guarded game, there must exist a set~$\{ b_1, \ldots, b_n \} \subseteq \Bs$ such that the mapping~$a_i \mapsto b_i$ is a partial isomorphism, and~$\Bs \models \cliquen{n}(b_1,\ldots,b_n)$. We take this mapping as Duplicator's response. Clearly this mapping preserves relations in~$\sg$. For the~$C_i$ we simply note that induced sub-graphs of cliques are cliques. Finally, the inductive step proceeds almost identically, with the required agreement on overlaps between the partial isomorphisms following from the same condition in the clique guarded game.

To show~\ref{en:atom-bisim} implies~\ref{en:clique-bisim}, again we proceed by induction. For the base case, assume without loss of generality that Spoiler plays~$\{ a_1, \ldots, a_n \}$, and so~$\As \models \cliquen{n}(a_1,\ldots, a_n)$, implying~$C_n^{C(\As)}(a_1,\ldots,a_n)$. By assumption, there is an atom guarded set~$\{ b_1, \ldots b_n \}$ such that the mapping~$a_i \mapsto b_i$ is a partial isomorphism. We take this as Duplicator's response. As partial isomorphisms preserve atoms, we must have~$C_n^{C(\Bs)}(b_1,\ldots,b_n)$. and so~$\cliquen{n}(b_1,\ldots,b_n)$, and this restricts to a partial isomorphism between guarded sets over~$\sg$. Again, the inductive case proceeds almost identically, with the overlap condition for the partial isomorphisms in each round being inherited from the rules of the atom guarded game.

The restriction to bounded play length or guard width is routine. The simulation game claim can be seen by restricting Spoiler and Duplicator plays to the required components, and noting that moving to homomorphisms rather than partial isomorphisms causes no trouble in the argument.
\end{proof}
Using this result, resource-bounded clique guarded bisimilarity can be captured by our comonadic framework. The signature extension and corresponding model transformation functor can be restricted to only encode cliques below a size bound. In this way we can remain within the realm of finite signatures, whilst capturing clique guarded bisimulation for all structures with universe cardinality below a specified finite bound.

\section{Final remarks}

There are many further directions arising from this work:
\begin{itemize}
    \item There has been an extensive development of negation-guarded fragments \cite{barany2012queries,benedikt2015complexity,danielski2018unary}. We will give a comonadic account of these in a sequel to this paper.
    \item We will also study the connections with acyclic covers and the work of Otto in \cite{Otto2012a,Otto2020a,Otto2020b} from our coalgebraic perspective.
    \item Another promising line of investigation is to look at the tree model properties of guarded logics, and the associated decision procedures and complexity results, from the comonadic point of view. 
\end{itemize}

\section*{Acknowledgements} 
We thank the participants in the Comonads seminar associated with the EPSRC project EP/T00696X/1 ``Resources and Co-Resources: a junction between categorical semantics and descriptive complexity'' for their encouragement and feedback. EPSRC support is gratefully acknowledged.

\bibliography{bibfile}

\begin{thebibliography}{10}
\providecommand{\url}[1]{#1}
\csname url@samestyle\endcsname
\providecommand{\newblock}{\relax}
\providecommand{\bibinfo}[2]{#2}
\providecommand{\BIBentrySTDinterwordspacing}{\spaceskip=0pt\relax}
\providecommand{\BIBentryALTinterwordstretchfactor}{4}
\providecommand{\BIBentryALTinterwordspacing}{\spaceskip=\fontdimen2\font plus
\BIBentryALTinterwordstretchfactor\fontdimen3\font minus
  \fontdimen4\font\relax}
\providecommand{\BIBforeignlanguage}[2]{{%
\expandafter\ifx\csname l@#1\endcsname\relax
\typeout{** WARNING: IEEEtran.bst: No hyphenation pattern has been}%
\typeout{** loaded for the language `#1'. Using the pattern for}%
\typeout{** the default language instead.}%
\else
\language=\csname l@#1\endcsname
\fi
#2}}
\providecommand{\BIBdecl}{\relax}
\BIBdecl

\bibitem{abramsky2017pebbling}
S.~Abramsky, A.~Dawar, and P.~Wang, ``The pebbling comonad in finite model
  theory,'' in \emph{Logic in Computer Science (LICS), 2017 32nd Annual
  ACM/IEEE Symposium on}.\hskip 1em plus 0.5em minus 0.4em\relax IEEE, 2017,
  pp. 1--12.

\bibitem{DBLP:conf/csl/AbramskyS18}
S.~Abramsky and N.~Shah, ``Relating structure and power: Comonadic semantics
  for computational resources,'' in \emph{27th {EACSL} Annual Conference on
  Computer Science Logic, {CSL} 2018, September 4-7, 2018, Birmingham, {UK}},
  2018, pp. 2:1--2:17.

\bibitem{AbramskyShah2020}
------, ``Relating structure and power: Comonadic semantics for computational
  resources,'' Extended version to appear in {Journal of Logic and
  Computation}. Preprint available at \url{https://arxiv.org/abs/2010.06496},
  2021.

\bibitem{andreka1998modal}
H.~Andr{\'e}ka, I.~N{\'e}meti, and J.~van Benthem, ``Modal languages and
  bounded fragments of predicate logic,'' \emph{Journal of Philosophical
  Logic}, vol.~27, no.~3, pp. 217--274, 1998.

\bibitem{gradel1999decision}
E.~Gr{\"a}del, ``Decision procedures for guarded logics,'' in
  \emph{International Conference on Automated Deduction}.\hskip 1em plus 0.5em
  minus 0.4em\relax Springer, 1999, pp. 31--51.

\bibitem{gradel2014freedoms}
E.~Gr{\"a}del and M.~Otto, ``The freedoms of (guarded) bisimulation,'' in
  \emph{Johan van Benthem on Logic and Information Dynamics}.\hskip 1em plus
  0.5em minus 0.4em\relax Springer, 2014, pp. 3--31.

\bibitem{HodkinsonOtto2003}
I.~Hodkinson and M.~Otto, ``Finite conformal hypergraph covers and {G}aifman
  cliques in finite structures,'' \emph{Bulletin of Symbolic Logic}, vol.~9,
  no.~3, pp. 387--405, 2003.

\bibitem{pierce1991basic}
B.~C. Pierce, \emph{Basic category theory for computer scientists}.\hskip 1em
  plus 0.5em minus 0.4em\relax MIT press, 1991.

\bibitem{abramsky2010introduction}
S.~Abramsky and N.~Tzevelekos, ``Introduction to categories and categorical
  logic,'' in \emph{New structures for physics}.\hskip 1em plus 0.5em minus
  0.4em\relax Springer, 2010, pp. 3--94.

\bibitem{marx2001tolerance}
M.~Marx, ``Tolerance logic,'' \emph{Journal of Logic, Language and
  Information}, vol.~10, no.~3, pp. 353--374, 2001.

\bibitem{blackburn2002modal}
P.~Blackburn, M.~De~Rijke, and Y.~Venema, \emph{Modal Logic}.\hskip 1em plus
  0.5em minus 0.4em\relax Cambridge University Press, 2002, vol.~53.

\bibitem{van1977modal}
J.~F. A.~K. van Benthem, ``Modal correspondence theory,'' Ph.D. dissertation,
  University of Amsterdam, 1976.

\bibitem{JaninWalukiewicz1996}
D.~Janin and I.~Walukiewicz, ``On the expressive completeness of the
  propositional mu-calculus with respect to monadic second order logic,'' in
  \emph{International Conference on Concurrency Theory}.\hskip 1em plus 0.5em
  minus 0.4em\relax Springer, 1996, pp. 263--277.

\bibitem{GradelHirschOtto2002}
E.~Gr{\"a}del, C.~Hirsch, and M.~Otto, ``Back and forth between guarded and
  modal logics,'' \emph{ACM Transactions on Computational Logic (TOCL)},
  vol.~3, no.~3, pp. 418--463, 2002.

\bibitem{manes2012algebraic}
E.~G. Manes, \emph{Algebraic Theories}.\hskip 1em plus 0.5em minus 0.4em\relax
  Springer Science \& Business Media, 2012, vol.~26.

\bibitem{joyal1993bisimulation}
A.~Joyal, M.~Nielson, and G.~Winskel, ``Bisimulation and open maps,'' in
  \emph{Proceedings Eighth Annual IEEE Symposium on Logic in Computer
  Science}.\hskip 1em plus 0.5em minus 0.4em\relax IEEE, 1993, pp. 418--427.

\bibitem{manes2007monad}
E.~G. Manes and P.~Mulry, ``{Monad compositions. I: General constructions and
  recursive distributive laws.}'' \emph{Theory and Applications of Categories
  [electronic only]}, vol.~18, pp. 172--208, 2007.

\bibitem{jacobs2017introduction}
B.~Jacobs, \emph{Introduction to Coalgebra}.\hskip 1em plus 0.5em minus
  0.4em\relax Cambridge University Press, 2017, vol.~59.

\bibitem{Otto2012a}
M.~Otto, ``Highly acyclic groups, hypergraph covers, and the guarded
  fragment,'' \emph{Journal of the ACM (JACM)}, vol.~59, no.~1, pp. 1--40,
  2012.

\bibitem{Otto2020a}
------, ``Amalgamation and symmetry: From local to global consistency in the
  finite,'' \emph{arXiv preprint arXiv:1709.00031}, 2020.

\bibitem{Otto2020b}
------, ``Acyclicity in finite groups and groupoids,'' \emph{arXiv preprint
  arXiv:1806.08664}, 2020.

\bibitem{barany2012queries}
V.~Barany, B.~t. Cate, and M.~Otto, ``Queries with guarded negation (full
  version),'' \emph{arXiv preprint arXiv:1203.0077}, 2012.

\bibitem{benedikt2015complexity}
M.~Benedikt, B.~Ten~Cate, T.~Colcombet, and M.~V. Boom, ``The complexity of
  boundedness for guarded logics,'' in \emph{2015 30th Annual ACM/IEEE
  Symposium on Logic in Computer Science}.\hskip 1em plus 0.5em minus
  0.4em\relax IEEE, 2015, pp. 293--304.

\bibitem{danielski2018unary}
D.~Danielski and E.~Kiero{\'n}ski, ``Unary negation fragment with equivalence
  relations has the finite model property,'' in \emph{Proceedings of the 33rd
  Annual ACM/IEEE Symposium on Logic in Computer Science}, 2018, pp. 285--294.

\end{thebibliography}

\end{document}